\numberwithin{equation}{section}
\newcommand{\Rmnum}[1]{\expandafter\@slowromancap\romannumeral #1@}
\newcommand\rank{\operatorname{rank}}
\newtheorem{theorem}{Theorem}[section]
\newtheorem{proposition}[theorem]{Proposition}
\newtheorem{lemma}[theorem]{Lemma}
\newtheorem{corollary}[theorem]{Corollary}
\theoremstyle{plain}
\newtheorem{definition}[theorem]{Definition}
\newtheorem{example}[theorem]{Example}
\newtheorem{remark}[theorem]{Remark}
\newenvironment{proof}{\noindent{\textbf{\emph{Proof.}}}}
\definecolor{mydarkgreen}{RGB}{0,100,0}
\begin{document}
\title{\Large Optimal Quantum $(r,\delta)$-Locally Repairable Codes via Classical Ones}
\author[1]{{\small{Kun Zhou}} \thanks{E-mail address: \texttt{kzhou@bimsa.cn}}}
\author[1]{{\small{Meng Cao}} \thanks{E-mail address: \texttt{mengcaomath@126.com} (corresponding author)}}
\affil[1]{\footnotesize{Beijing Institute of Mathematical Sciences and Applications, Beijing, 101408, China} }
\renewcommand*{\Affilfont}{\small\it}

\date{}
\maketitle

{\linespread{1.4}{

\vskip -7mm

\noindent {\small{{\bfseries{Abstract:}}
Locally repairable codes (LRCs) play a crucial role in mitigating data loss in large-scale distributed and cloud storage systems. This paper establishes a unified decomposition theorem for general optimal $(r,\delta)$-LRCs.
Based on this, we obtain that the local protection codes of general optimal $(r,\delta)$-LRCs are MDS codes with the same minimum Hamming distance $\delta$.
We prove that for general optimal $(r,\delta)$-LRCs, their minimum Hamming distance $d$ always satisfies $d\geq \delta$. 
We fully characterize the optimal quantum $(r,\delta)$-LRCs induced by classical optimal $(r,\delta)$-LRCs that admit a minimal decomposition. We construct three infinite families of optimal quantum $(r,\delta)$-LRCs with flexible parameters.
}}

\vspace{6pt}

\noindent {\small{{\bfseries{Keywords:}} Decomposition theorem; Optimal $(r,\delta)$-LRC; Optimal quantum $(r,\delta)$-LRC; CSS construction and Hermitian construction; Minimal decomposition}}

\vspace{6pt}
\noindent {\small{{\bfseries{Mathematics Subject Classification (2020):}} 94B05,  \ \ 11T71, \ \ 81P70}}}

\section{Introduction}\label{section1}
The study of locally repairable codes (LRCs) is of great importance due to their critical role in ensuring efficient data recovery for distributed storage systems. A code is classified as an optimal LRC if its parameters satisfy the Singleton-like bound, which establishes the theoretical limit for balancing redundancy and fault tolerance. Optimal LRCs are particularly valuable as they achieve the best possible tradeoff between storage overhead and repair efficiency, making them highly desirable for deployment in large-scale distributed storage applications.

The theoretical framework of classical LRCs originated in \cite{Gopalan2012} and was subsequently generalized in \cite{Prakash2012}, where the corresponding Singleton-like bound for these codes was formally established. Codes attaining this bound on minimum distance are designated as optimal $(r,\delta)$-LRCs, representing the fundamental class of codes that optimally balance redundancy and repair efficiency. For example, pyramid codes constitute a class of optimal LRCs with flexible parameters \cite{Huang2013, Prakash2012}. In recent years, optimal $(r,\delta)$-LRCs have attracted extensive research interest (see  \cite{Song2014, Kong2021, Luo2022, Hao2022, Prakash2012, Chen2023, Zhu2025, Cai2021} for example). Notably, Song et al. \cite[Theorem 9]{Song2014} established a landmark structural decomposition theorem, proving that under specific conditions, optimal LRCs can be expressed as disjoint unions of maximum distance separable (MDS) codes. This work motivates us to establish a unified decomposition theorem for general optimal $(r,\delta)$-LRCs, where the key challenge lies in developing a unified approach that addresses new cases (e.g., when $r|(k-1)$).

The quantum counterpart of LRCs emerged in 2023 through the seminal work of Golowich and Guruswami \cite{Golowich2023}, who pioneered the first quantum adaptation of LRCs, thereby opening a new research direction in quantum error correction. Their work formulated quantum LRCs and investigated potential applications in quantum data storage architectures. Building upon this foundation, Luo et al. \cite{Luo2025} conducted fundamental studies on quantum LRCs, deriving theoretical bounds and developing construction methodologies for optimal quantum LRCs using CSS frameworks and classical coding-theoretic constraints. In their paper \cite{Sharma2025}, Sharma et al. develop a quantum LRCs construction method capable of employing arbitrary good polynomials, and propose a novel approach for designing such polynomials using subgroups of affine general linear groups.

In 2024, Galindo et al. \cite{Galindo2024} extended quantum LRCs to the more general framework of quantum $(r,\delta)$-LRCs. Their seminal work investigated both fundamental properties and concrete constructions of quantum $(r,\delta)$-LRCs.
They established a remarkable equivalence principle: when a quantum stabilizer code originates from Hermitian dual-containing or Euclidean dual-containing codes under specific additional conditions, there exists an equivalence between classical and quantum notions of $(r,\delta)$-local recoverability.
Building upon this important foundation, we will demonstrate that these specific additional conditions are automatically satisfied in the optimal case, which involves the use of our unified decomposition theorem. Consequently, the CSS and Hermitian constructions remain direct bridges for converting classical optimal $(r,\delta)$-LRCs into their quantum optimal counterparts.

By leveraging our unified decomposition theorem as a new tool, we reveal that any local protection code (see Definition \ref{prote}) of an optimal
$(r,\delta)$-LRC must be an MDS code with minimum Hamming distance $\delta$. This property imposes strong constraints on the selection of optimal $(r,\delta)$-LRCs, leading to intrinsic structural phenomena that eliminate the special conditions previously required in \cite[Theorem 30]{Galindo2024}.

Beyond the seminal work of Galindo et al. \cite{Galindo2024}, systematic study of quantum $(r,\delta)$-LRCs via classical $(r,\delta)$-LRCs, particularly optimal ones with $\delta \geq 3$, remain scarce. A key reason could stem from the CSS and Hermitian constructions imposing overly restrictive conditions on classical codes. However, our new tool, i.e., the unified decomposition theorem, not only enables sufficient construction of optimal quantum $(r,\delta)$-LRCs, but also allows us to fully characterize those satisfying specific conditions. To the best of our knowledge, a complete characterization of certain optimal quantum $(r,\delta)$-LRCs remains unexplored in prior literature. Finally, to demonstrate the diversity of our fully characterized optimal quantum $(r,\delta)$-LRCs, we explicitly construct three infinite families of these codes with distinct and flexible parameters.

The main results of this paper include the decomposition theorem for optimal $(r,\delta)$-LRCs (see Theorem \ref{Stru}), and the following two key theorems:

\begin{theorem}\emph{(see also Theorem \ref{stru-coro1})}
Every local protection code of an optimal $(r,\delta)$-LRC is an MDS code with parameters of the form $[n\leq (r+\delta-1),n+1-\delta,\delta]_{q}$.
\end{theorem}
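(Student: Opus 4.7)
The plan is to combine the unified decomposition theorem (Theorem \ref{Stru}) with the classical Singleton bound applied to each local protection code, and then force equality throughout by exploiting the optimality of the $(r,\delta)$-LRC.

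First, I would fix notation by recalling from Definition \ref{prote} that every local protection code $C_i$ of an $(r,\delta)$-LRC $C$ has length $n_i\leq r+\delta-1$, dimension $\dim(C_i)\leq r$, and minimum Hamming distance $d(C_i)\geq \delta$. Applying the classical Singleton bound to each $C_i$ yields the chain
\[
\delta \;\leq\; d(C_i) \;\leq\; n_i-\dim(C_i)+1 \;\leq\; (r+\delta-1)-\dim(C_i)+1.
\]
Hence to prove the theorem it is enough to establish the two equalities $d(C_i)=\delta$ and $\dim(C_i)=n_i+1-\delta$ for every $i$.

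Next, I would invoke Theorem \ref{Stru} to write $C$ in terms of its local protection codes $\{C_i\}$, reading off how the global length $n$, global dimension $k$, and global minimum distance $d$ of $C$ are determined by the local parameters $(n_i,\dim(C_i),d(C_i))$. The key observation is that the Singleton-like bound $d\leq n-k+1-(\lceil k/r\rceil-1)(\delta-1)$ is obtained by summing the local redundancies $n_i-\dim(C_i)$ across the decomposition, so any strict inequality $d(C_i)<n_i-\dim(C_i)+1$ or $\dim(C_i)<r$ (when the corresponding local group has maximal size) translates into a strict gap in the global bound. Consequently, the assumed optimality of $C$ forces each $C_i$ to saturate its own Singleton bound, giving $d(C_i)=n_i-\dim(C_i)+1$; combined with $d(C_i)\geq \delta$, the equality $d(C_i)=\delta$ and $\dim(C_i)=n_i+1-\delta$ follow, so $C_i$ is an $[n_i\leq r+\delta-1,\,n_i+1-\delta,\,\delta]_q$ MDS code.

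The main obstacle I expect is the last propagation step: showing quantitatively that local slack produces global slack under the decomposition of Theorem \ref{Stru}. This requires a careful bookkeeping of the contribution of each $C_i$ to $k$ and $d$, particularly in the new case $r\mid (k-1)$ highlighted in the introduction, where the standard Song--Dau--Silberstein argument needs to be adapted. Once the contribution of each $C_i$ to the global bound is isolated, the contradiction is immediate, and the MDS conclusion with the claimed parameter range $n_i\leq r+\delta-1$ follows directly.
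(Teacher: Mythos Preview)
Your high-level strategy—insert the local protection code into a decomposition and let global optimality squeeze out all local slack—is the same as the paper's, but as written the argument has a real gap. Theorem~\ref{Stru} only asserts the existence of \emph{some} decomposition $C=C_1\cup\cdots$ in which each $\mathcal{C}|_{C_i}$ is MDS; it does not say that \emph{every} local protection code appears among the $C_i$. When you ``invoke Theorem~\ref{Stru} to write $C$ in terms of its local protection codes $\{C_i\}$'' and then run the slack argument, you would only conclude that the particular $C_i$'s produced by that decomposition are MDS, not that an arbitrary local protection code $\mathcal{C}|_{S}$ is.

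The paper fills this gap by appealing not to the statement but to the \emph{proof} of Theorem~\ref{Stru}: in the greedy algorithm of Lemma~\ref{Stru-lemm3}, the very first step lets you choose any local protection code whatsoever as $S_1=C_1$. Thus, given an arbitrary local protection code $\mathcal{C}|_{S}$, one runs the algorithm starting from $C_1=S$, and then Lemmas~\ref{Stru-lemm1} and~\ref{Stru-lemm2} (which already contain the exact bookkeeping you describe, namely the forced equalities $r_i=s_i-(\delta-1)$) yield that $\mathcal{C}|_{C_1}=\mathcal{C}|_{S}$ is MDS with the stated parameters. Once you add this ``free choice of $C_1$'' observation, the obstacle you flag disappears and no further counting is needed; without it, the quantifier ``every'' in the theorem is not yet established.
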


\begin{theorem}\emph{(Optimal Quantum $(r,\delta)$-LRCs Construction, see also Theorem \ref{stru-corobound})}
For an optimal $(r,\delta)$-LRC with parameters $[n,k,d]_{q^{2}}$ (resp. $[n,k,d]_{q}$), the following statements hold:
\begin{itemize}
\item [(1)] $n-k\geq \lceil \frac{k}{r}\rceil(\delta-1)$, i.e., $d\geq \delta$;

\item [(2)] If it is Hermitian dual-containing (resp. Euclidean dual-containing), then the induced quantum code is an optimal quantum $(r,\delta)$-LRC.
\end{itemize}
\end{theorem}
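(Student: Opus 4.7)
My plan is to handle the two parts in order, using the unified decomposition theorem (Theorem \ref{Stru}) and its structural corollary (Theorem \ref{stru-coro1}) as the central tool for both, and then invoking Galindo et al.'s classical-quantum equivalence \cite[Theorem 30]{Galindo2024} in part (2). The overarching goal is to show that the optimality hypothesis alone, via the decomposition, already supplies all of the ``specific additional conditions'' required in \cite[Theorem 30]{Galindo2024}, so that the CSS and Hermitian constructions transfer $(r,\delta)$-locality unconditionally in the optimal setting.

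For part (1), I would first apply Theorem \ref{stru-coro1}: every local protection code is MDS of minimum Hamming distance exactly $\delta$. Combined with the disjoint-repair-group structure furnished by Theorem \ref{Stru}, any nonzero codeword of the global code has nonzero restriction to at least one repair group, and that restriction, being a nonzero codeword of an MDS code with distance $\delta$, has Hamming weight at least $\delta$. Hence $d \geq \delta$. Substituting into the Singleton-like bound for $(r,\delta)$-LRCs, which is attained with equality by the optimality hypothesis, the inequality $d = n-k+1-(\lceil k/r\rceil-1)(\delta-1) \geq \delta$ rearranges immediately to $n-k \geq \lceil k/r\rceil(\delta-1)$, giving both forms of the claim in part (1).

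For part (2), assuming the classical code is Hermitian (resp. Euclidean) dual-containing, I would trace through the hypotheses of \cite[Theorem 30]{Galindo2024} and verify them one by one: the MDS property and uniform distance $\delta$ of every local protection code, together with the disjointness of repair groups from the decomposition, should supply precisely the extra structural conditions that Galindo et al. impose, so that the induced CSS (resp. Hermitian) stabilizer code inherits classical $(r,\delta)$-locality. The resulting quantum code has length $n$, dimension $2k-n$, and minimum distance at least $d$; plugging these parameters together with the bound from part (1) into the quantum Singleton-like bound for $(r,\delta)$-LRCs, and tracking the ceiling arithmetic when $k$ is replaced by $2k-n$, should force the inequality to be tight, confirming quantum optimality.

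The step I expect to be the main obstacle is the bookkeeping verification that the decomposition really supplies Galindo et al.'s additional conditions rather than merely most of them, especially in the Hermitian case over $\mathbb{F}_{q^2}$, where the interaction between the local MDS structure and the trace-Hermitian inner product must be handled carefully. A secondary but lighter technical point is ensuring that the floor/ceiling arithmetic in the quantum Singleton-like bound matches the classical bound exactly after the substitution $k \mapsto 2k-n$, so that tightness transfers without leaving a gap of one unit.
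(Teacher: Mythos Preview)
Your approach to part~(1) is correct and in fact more direct than the paper's. The paper proves the inequality $n-k\geq\lceil k/r\rceil(\delta-1)$ by a dimension count: it takes the decomposition $C_1,\dots,C_t$ from Theorem~\ref{Stru} (with $t=\lceil k/r\rceil$), stacks the $(\delta-1)$-row parity-check matrices of the local MDS codes $\mathcal{C}|_{C_i}$, argues that the resulting $t(\delta-1)$ rows are linearly independent inside $\mathcal{C}^{\perp_{\mathrm{E}}}$, and concludes $n-k=\dim\mathcal{C}^{\perp_{\mathrm{E}}}\geq t(\delta-1)$. Your weight argument bypasses all of this: a nonzero codeword restricts to a nonzero word in some local protection code of distance $\geq\delta$, so $d\geq\delta$, and optimality converts this into the desired inequality. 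Two remarks: you do not actually need Theorem~\ref{stru-coro1} (the MDS property) or any disjointness from Theorem~\ref{Stru}---the bare definition of $(r,\delta)$-locality already gives $d(\mathcal{C}|_{S_i})\geq\delta$, which is all your weight argument uses. So your route is even shorter than you suggest, and $d\geq\delta$ holds for \emph{any} $(r,\delta)$-LRC, not just optimal ones.

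For part~(2) you are taking a substantial detour. The paper does not invoke \cite[Theorem~30]{Galindo2024} here at all; it simply cites the in-paper Lemma~\ref{quanum-cl}, which reduces everything to the inequality of part~(1). Once $n-k\geq\lceil k/r\rceil(\delta-1)$ is known, Lemma~\ref{quanum-cl} is two lines: dual-containment gives $d(\mathcal{C}^{\perp})\geq d(\mathcal{C})\geq\delta$, and the classical optimality equation rearranges directly to Eq.~\eqref{quan-inqu} (there is no ceiling issue, since $\lceil (n+k_{\mathrm{quant}})/(2r)\rceil=\lceil k_{\mathrm{class}}/r\rceil$ exactly). Your plan to verify the ``specific additional conditions'' of \cite[Theorem~30]{Galindo2024} is unnecessary for Definition~\ref{quan-LRC}, which is phrased purely in terms of the classical code and a parameter equation rather than a separate quantum locality condition; it is also potentially problematic, since the decomposition of Theorem~\ref{Stru} does not in general yield disjoint repair groups.
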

Here, ``induced quantum codes'' refer to quantum codes obtained via Hermitian construction or CSS construction. Furthermore, we establish the following theorem.

\begin{theorem}\emph{(see also Theorem \ref{Stru-quan})}
If $\mathcal{C}$ is an optimal $(r,\delta)$-LRC that admits a minimal decomposition, and it induces an optimal quantum $(r,\delta)$-LRC, then $\mathcal{C}$ possesses a parity-check matrix of the form in Eq. \eqref{pari} with the property that all codes generated by the matrices in Eq. \eqref{hermi} are either all Hermitian self-orthogonal or all Euclidean self-orthogonal.

Conversely, if $\mathcal{C}$ has such a parity-check matrix of the form in Eq. \eqref{pari} satisfying that all codes generated by the matrices in Eq. \eqref{hermi} are either all Hermitian self-orthogonal or all Euclidean self-orthogonal, then $\mathcal{C}$ admits a minimal decomposition and induces an optimal quantum $(r,\delta)$-LRC.
\end{theorem}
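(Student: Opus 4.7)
The plan is to combine the unified decomposition theorem (Theorem \ref{Stru}) with the MDS characterization of local protection codes established in Theorem \ref{stru-coro1}. Together, these results ensure that any optimal $(r,\delta)$-LRC $\mathcal{C}$ admitting a minimal decomposition is (after a suitable coordinate permutation) a direct sum of its local protection codes, each of which is an MDS code of minimum distance exactly $\delta$. Because a minimal decomposition forces the supports of the local protection codes to be pairwise disjoint, a parity-check matrix of $\mathcal{C}$ can be placed in the block-diagonal form of Eq. \eqref{pari}, with the diagonal blocks being precisely the matrices appearing in Eq. \eqref{hermi}; each of them is a parity-check matrix of one of the local MDS codes.

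For the forward direction, I would start from the assumption that $\mathcal{C}$ induces an optimal quantum $(r,\delta)$-LRC. By the Hermitian or CSS construction, the row space of the parity-check matrix is Hermitian self-orthogonal or Euclidean self-orthogonal, respectively. The key observation is that for any two rows coming from distinct blocks in the parity-check matrix, their supports are disjoint, so the Hermitian (respectively Euclidean) inner product of those rows vanishes automatically. Global self-orthogonality therefore collapses, uniformly over all blocks and with respect to the same inner product, to the self-orthogonality of each individual block from Eq. \eqref{hermi}, which is exactly the desired conclusion.

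For the backward direction, I would run the argument in reverse: given the parity-check matrix of the form in Eq. \eqref{pari} with each block self-orthogonal under one and the same inner product, the disjoint-support argument again turns the global inner product into a sum of block-level inner products, all of which vanish. This yields Hermitian or Euclidean dual-containment of $\mathcal{C}$. The block structure directly provides a decomposition of $\mathcal{C}$ into MDS local protection codes with disjoint supports, which is a minimal decomposition. Finally, invoking Theorem 30 of Galindo et al.\ \cite{Galindo2024} — whose extra hypotheses are automatically satisfied in the optimal case thanks to Theorem \ref{stru-coro1} — allows me to conclude that the induced quantum stabilizer code is an optimal quantum $(r,\delta)$-LRC.

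The main obstacle I expect is not the algebraic orthogonality manipulation itself, but the bookkeeping connected to the minimal-decomposition hypothesis: one must carefully verify that minimality truly forces the supports of the local protection codes to be pairwise disjoint, so that the parity-check matrix genuinely has the block-diagonal shape of Eq. \eqref{pari} and the inner products across blocks vanish a priori. A secondary subtlety is checking that the CSS/Hermitian induced quantum code is \emph{optimal} (not merely a valid quantum $(r,\delta)$-LRC); this uses Item (1) of the construction theorem together with the fact that each local block, being MDS of distance $\delta$, saturates the Singleton-like bound locally, so that the global quantum Singleton-like bound is met with equality.
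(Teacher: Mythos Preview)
There is a genuine gap: you have misread Definition~\ref{mini-dec}. A minimal decomposition only requires $C_i\cap C_j=\emptyset$ when $|i-j|>1$; \emph{consecutive} local protection codes $C_i$ and $C_{i+1}$ are allowed to overlap. This is precisely why the parity-check matrix in Eq.~\eqref{pari} has the staircase (``interleaved'') shape rather than a block-diagonal one: the blocks $A_{i,2}$ and $A_{i+1,1}$ sit over the \emph{same} group of columns. Your claim that ``minimality truly forces the supports of the local protection codes to be pairwise disjoint'' is therefore false in general, and the argument that inner products between rows from distinct blocks vanish automatically breaks down for consecutive blocks. This is exactly the reason Eq.~\eqref{hermi} contains the second family of matrices $G_j'=\begin{pmatrix}A_{j,2}\\ A_{j+1,1}\end{pmatrix}$, which never appears in your proposal. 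What you have outlined is essentially the proof of the special case treated in Corollaries~\ref{stru-coro2} and~\ref{quan-coro3} (all $C_i$ pairwise disjoint, matrix of the form Eq.~\eqref{pari-sim}), not of Theorem~\ref{Stru-quan} itself.

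To fix the forward direction, note that a row of $H$ coming from the $i$th $A$-block is supported on column groups $i$ and $i+1$, and a row from the $B$-block is supported on all column groups; dual-containment of $\mathcal{C}$ then forces (i) the rows of $(A_{i,1},A_{i,2})$ together with $(B_i,B_{i+1})$ to be mutually orthogonal, giving self-orthogonality of each $G_i$, and (ii) the rows of $A_{i,2}$ to be orthogonal to the rows of $A_{i+1,1}$ over their shared columns, giving self-orthogonality of each $G_j'$. For the converse you must reassemble global self-orthogonality of $H$ from the $G_i$ and $G_j'$ conditions, and then invoke Theorems~\ref{Stru-apply} and~\ref{stru-corobound} (the latter already encapsulates the removal of the extra hypothesis in \cite[Theorem~30]{Galindo2024}, so there is no need to cite it separately).
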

\vspace{-4pt}
Here, ``minimal decomposition'' can be seen in Definition \ref{mini-dec}, which is guided by the unified decomposition theorem. The parity-check matrix in this theorem are fully characterized and computable, thereby completely determining optimal quantum $(r,\delta)$-LRCs induced by these optimal $(r,\delta)$-LRCs.

\vspace{10pt}

Additionally, we present three infinite families of optimal quantum $(r,\delta)$-LRCs in Theorems \ref{Opt-1}, \ref{Opt-2}, and \ref{Opt-3}, respectively.

\vspace{4pt}

This paper is organized as follows:
\vspace{-6pt}
\begin{itemize}
\item  Section \ref{sec:pre} covers essential background on classical codes and quantum codes.

\vspace{-4pt}

\item  Section \ref{OD} has two parts:
\vspace{-4pt}
    \begin{itemize}
\item  Subsection \ref{sub3.1} proves our key decomposition theorem for optimal $(r,\delta)$-LRCs (see Theorem \ref{Stru});
\vspace{-4pt}

\item  Subsection \ref{sub3.2} applies this theorem to analyze local protection codes.
    \end{itemize}

\vspace{-4pt}

\item  Section \ref{quantum} removes the technical requirement in Lemma \ref{quanum-cl}, enabling our construction of optimal quantum $(r,\delta)$-LRCs. We then establish a systematic framework for building these codes.

\vspace{-4pt}

\item  Section \ref{construction} presents three infinite families of optimal quantum $(r,\delta)$-LRCs with explicit parameters.
\end{itemize}

\section{Preliminaries}\label{sec:pre}
Let $\mathbb{F}_{q}$ denote the finite field of order $q$, where $q$ is a prime power. For any matrix $A=(a_{i,j})$ over $\mathbb{F}_{q}$, let $A^{\top}=(a_{j,i})$ denote its {\it transpose}. We employ the following conventions:

\begin{itemize}
\item For any set $S$, let $|S|$ denote its {\it cardinality}.

\item For any positive integer $n$, define $[n]:=\{i\in \mathbb{N}^+|\;1\leq i\leq n\}$.

\item All vectors in $\mathbb{F}_q^k$ are assumed to be column vectors by default and are denoted using boldface notation (e.g., $\mathbf{c}$), unless explicitly stated otherwise.

\item For any subset $T=\{\mathbf{c}_1,\ldots,\mathbf{c}_i\}\subseteq \mathbb{F}_q^k$,
    \begin{itemize}
\item  $\text{span}(T):=\langle \mathbf{c}_1,\ldots,\mathbf{c}_i\rangle $ denotes the linear span of $T$;
\item  $\rank(T):=\dim(\text{span}(T))$.
    \end{itemize}

\item $\emptyset$ represents the empty set.

\item For an $[n,k,d]_q$ linear code $\mathcal{C}$ with a generator matrix $G=(\mathbf{c}_1,\ldots, \mathbf{c}_n)$, we canonically associate the code with an indexed set, denoted by the capital letter $C$, defined as
    \begin{align}
    C:=\{(1,\mathbf{c}_1)\ldots, (n,\mathbf{c}_n)\}\subseteq [n]\times \mathbb{F}_{q}^k. \label{cano}
    \end{align}
\end{itemize}

\subsection{Linear codes}
A {\it linear code} over $\mathbb{F}_{q}$ of length $n$, dimension $k$ and minimum (Hamming) distance $d$ is denoted as $[n,k,d]_{q}$. Such a code forms a $k$-dimensional subspace of the vector space $\mathbb{F}_{q}^{n}$. By the {\it Singleton bound}, the minimum distance satisfies $d\leq n-k+1$. When the equality $d=n-k+1$ holds, the code is called a {\it maximum distance separable (MDS) code}.

For a linear code $\mathcal{C}$ of length $n$ over $\mathbb{F}_{q}$ (resp. $\mathbb{F}_{q^{2}}$), its {\it Euclidean dual} $\mathcal{C}^{\perp_{\mathrm{E}}}$ (resp. {\it Hermitian dual} $\mathcal{C}^{\perp_{\mathrm{H}}}$) are respectively defined by
\begin{align*}
\mathcal{C}^{\perp_{\mathrm{E}}}=\Big\{\mathbf{x}=(x_{1},\ldots,x_{n})\in\mathbb{F}_{q}^{n}:\
\langle\mathbf{x},\mathbf{y}\rangle=\sum_{i=1}^n x_i y_i=0 \ \mathrm{for} \ \mathrm{all} \  \mathbf{y}=
(y_{1},\ldots,y_{n})\in \mathcal{C}\Big\}
\end{align*}
and
\begin{align*}
\mathcal{C}^{\perp_{\mathrm{H}}}=\Big\{\mathbf{x}=(x_{1},\ldots,x_{n})\in\mathbb{F}_{q^{2}}^{n}:\
\langle\mathbf{x},\mathbf{y}\rangle_{\mathrm{H}}=\sum_{i=1}^n x_i y_i^{q}=0 \ \mathrm{for} \ \mathrm{all} \  \mathbf{y}=
(y_{1},\ldots,y_{n})\in \mathcal{C}\Big\}.
\end{align*}
If $\mathcal{C}\subseteq\mathcal{C}^{\perp_{\mathrm{E}}}$ (resp. $\mathcal{C}\subseteq\mathcal{C}^{\perp_{\mathrm{H}}}$), then $\mathcal{C}$ is called {\it Euclidean self-orthogonal}
(resp. {\it Hermitian self-orthogonal}).
Conversely, if $\mathcal{C}^{\perp_{\mathrm{E}}}\subseteq\mathcal{C}$ (resp. $\mathcal{C}^{\perp_{\mathrm{H}}}\subseteq\mathcal{C}$), then $\mathcal{C}$ is called
{\it Euclidean dual-containing} (resp. {\it Hermitian dual-containing}).

\subsection{Optimal $(r,\delta)$-LRCs}\label{LRC}

For an $[n, k, d]_q$ linear code $\mathcal{C}$, the $i$th symbol $c_i$, where $i\in[n]$,
of $\mathcal{C}$ is said to have {\it $(r, \delta)$-locality} if there exists a subset
$S_i$ $\subseteq [n]$ containing $i$ and a punctured code
$\mathcal{C}|_{S_i}$ such that the length $|S_i| \leq r + \delta -1$ and the minimum distance $d(\mathcal{C}|_{S_i})\geq \delta $, where $\mathcal{C}|_{S_i}$ is the code $\mathcal{C}$ punctured on the coordinate set $[n]\backslash S_i$ by deleting the components indexed by the set $[n]\backslash S_i$ in each codeword of $\mathcal{C}$. The code $\mathcal{C}$ is called an {\it $(r, \delta)$-LRC} if every symbol has $(r,\delta)$-locality.

For precise technical discourse, we formalize the critical notion of {\it local protection codes} through the following definition.
\begin{definition}\emph{(Local Protection Code)}\label{prote}
Let
$\mathcal{C}$ be an $(r, \delta)$-LRC of length $n$. For any coordinate position $i\in [n]$, consider a subset $S_i\subseteq [n]$ containing $i$. Then, the punctured code $\mathcal{C}|_{S_i}$ is called \emph{a local protection code} for coordinate $i$ if it satisfies:
\begin{itemize}
\item[(1)]  Locality constraint: $|S_i| \leq r + \delta -1$;
\item[(2)]  Distance guarantee: $d(\mathcal{C}|_{S_i})\geq \delta $.
\end{itemize}
\end{definition}

Let us give an example on local protection codes.

\begin{example}\label{pro-code}
Let $\mathbb{F}=\mathbb{F}_{q}$. Consider the linear code $\mathcal{C}$ with a generator matrix as follows:
\begin{align*}
G=
\begin{pmatrix}
1&0&1&0&0\\
0&1&1&0&1\\
0&0&0&1&1
\end{pmatrix}.
\end{align*}
Let $\mathbf{c}_i\in \mathbb{F}^3$ denote the $i$th column of $G$. Then, $\mathcal{C}$ is an $(2,2)$-LRC with the following local protection codes:
\begin{itemize}
\item  For symbols $c_1,c_2,c_3$, one can choose the local protection code $\mathcal{C}|_{\{1,2,3\}}=\{\mathbf{c}_1,\mathbf{c}_2,\mathbf{c}_1+\mathbf{c}_2\}$;
\item For symbols $c_4,c_5$, one can choose the local protection code $\mathcal{C}|_{\{2,4,5\}}=\{\mathbf{c}_2,\mathbf{c}_4,\mathbf{c}_2+\mathbf{c}_4\}$.
    \end{itemize}
\end{example}

\vspace{8pt}

Let $\mathcal{C}$ be an $[n,k,d]_q$ linear code with a generator matrix
$G$, expressed in column vector form as $G=(\mathbf{c}_1,\ldots,\mathbf{c}_n)$. Recall the following set:
$$C=\{(1,\mathbf{c}_1),\ldots,(n,\mathbf{c}_n)\}\subseteq [n]\times \mathbb{F}_{q}^k,$$
which is used to determine the generator matrix
$G$. For any coordinate subset $S=\{j_1,\ldots,j_i\}\subseteq [n]$, the punctured code $\mathcal{C}|_S$ is actually defined by restricting $G$ to columns indexed by $S$, yielding the submatrix $G_{S}=(\mathbf{c}_{j_1},\ldots,\mathbf{c}_{j_i})$. This punctured code is equivalently represented by the indexed subset:
\begin{align}
C_{S}=\{(j_1,\mathbf{c}_1),\ldots,(j_i,\mathbf{c}_{j_i})\}\subseteq [n]\times \mathbb{F}_{q}^k, \label{puncture}
\end{align}
uniquely determining $G_{S}$ through canonical column ordering. Let us provide the following example to illustrate these notations.

\begin{example}\label{ex:puncture}
Let $\mathcal{C} \subseteq \mathbb{F}_q^4$ be the linear code with a generator matrix as follows:
\[
G =(\mathbf{c}_1,\mathbf{c}_2, \mathbf{c}_3, \mathbf{c}_4)= \begin{pmatrix}
1 & 0 & 1 & 0 \\
0 & 1 & 1 & 1
\end{pmatrix},
\]
where $\mathbf{c}_i \in \mathbb{F}_q^2$ denotes the $i$th column of $G$. The canonically set associated with the code $\mathcal{C}$ is given by
\[
C \:= \{(1,\mathbf{c}_1), (2,\mathbf{c}_2), (3,\mathbf{c}_3), (4,\mathbf{c}_4)\} \subseteq [4] \times \mathbb{F}_q^2,
\]
enabling determination of $G$ through column ordering.

For the coordinate subset $S = \{1,2,4\}$, the \textit{punctured code} $\mathcal{C}|_S$ has the following associated set:
\[
C_S = \{(1,\mathbf{c}_1), (2,\mathbf{c}_2), (4,\mathbf{c}_4)\},
\]
which generates the submatrix as a generator matrix:
\begin{equation}\label{eq:puncture}
G_S =(\mathbf{c}_1, \mathbf{c}_2\ ,\mathbf{c}_4)= \begin{pmatrix}
1 & 0 & 0 \\
0 & 1 & 1
\end{pmatrix}.
\end{equation}
More generally, for any subset $T = \{(j_1,\mathbf{c}_{j_1}), \ldots, (j_t,\mathbf{c}_{j_t})\} \subseteq C$, the corresponding punctured code $\mathcal{C}|_T$ has a generator matrix
$$G_T = (\mathbf{c}_{j_1},\ldots,\mathbf{c}_{j_t}).$$
We define $\rank(T) \:= \dim (\langle \mathbf{c}_{j_1}, \ldots, \mathbf{c}_{j_t} \rangle)$ with $\rank(\emptyset) = 0$ by convention. These notations will be used throughout subsequent analysis.
\end{example}

We recall a basic result concerning distance characterization in linear codes.
\begin{lemma}\label{pun}
Let $\mathcal{C}$ be a linear code with length $n$, and let $C$ be the canonical set given by Eq. \eqref{cano}. Then, $\mathcal{C}$ has minimum distance $d$ if and only if the following two conditions are satisfied:
\begin{enumerate}[label=(\roman*)]
    \item There exists a subset $S_0 \subseteq C$ with cardinality $|S_0| = n-d$ satisfying $\rank(S_0) \leq k-1$;
    \item Every subset $S \subseteq C$ with $|S|> n -d$ satisfies $\rank(S) = k$.
\end{enumerate}
\end{lemma}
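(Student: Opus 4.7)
The plan is to reduce both directions of the equivalence to a single identity, namely
\begin{equation*}
d = n - \mu, \qquad \mu := \max\{|T| : T \subseteq C,\ \rank(T) \leq k-1\},
\end{equation*}
where $d$ on the left denotes the actual minimum distance of $\mathcal{C}$. Once this identity is in hand, condition (i) translates to ``$\mu \geq n-d$'' and condition (ii) translates to ``$\mu \leq n-d$''; taken together they pin $\mu$ to $n-d$, which by the identity is equivalent to $\mathcal{C}$ having minimum distance $d$.

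To establish the identity, I would first use the column form $G = (\mathbf{c}_1,\ldots,\mathbf{c}_n)$ to write every codeword of $\mathcal{C}$ as $\mathbf{u}^{\top}G = (\mathbf{u}^{\top}\mathbf{c}_1,\ldots,\mathbf{u}^{\top}\mathbf{c}_n)$ for some $\mathbf{u}\in \mathbb{F}_q^k$, whose Hamming weight equals $n - |\{i\in [n] : \mathbf{u}^{\top}\mathbf{c}_i = 0\}|$. Taking the minimum over nonzero $\mathbf{u}$ gives $d = n - M$, where $M := \max_{\mathbf{u}\neq 0} |\{i : \mathbf{u}^{\top}\mathbf{c}_i = 0\}|$. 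It then suffices to verify $M = \mu$.

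For $\mu \geq M$, take a maximizer $\mathbf{u}\neq 0$ and set $T := \{(i,\mathbf{c}_i)\in C : \mathbf{u}^{\top}\mathbf{c}_i = 0\}$; then $|T| = M$ and all its column vectors lie in the hyperplane $\mathbf{u}^{\perp}\subsetneq \mathbb{F}_q^k$, so $\rank(T) \leq k-1$. For $\mu \leq M$, take any $T\subseteq C$ with $\rank(T)\leq k-1$: $\mathrm{span}(T)$ is a proper subspace of $\mathbb{F}_q^k$ and therefore lies in the kernel of some nonzero linear functional $\mathbf{u}^{\top}$, whence $\mathbf{u}^{\top}\mathbf{c}_i = 0$ for every $(i,\mathbf{c}_i)\in T$ and $|T|\leq M$.

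The only slightly delicate step is the embedding of $\mathrm{span}(T)$ in a hyperplane in the second half, but this is immediate from standard linear algebra. Beyond that, the argument is a dictionary translation between ``linearly dependent columns in $G$'' and ``linear functionals vanishing on a prescribed set'', so I do not anticipate any real obstacle; the lemma is essentially the well-known generator-matrix characterization of minimum distance recast in the indexed-set language of Eq.\ \eqref{cano}.
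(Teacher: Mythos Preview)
Your argument is correct: the reduction to the identity $d = n - \mu$ via the linear-functional/hyperplane dictionary is the standard way to establish this characterization, and both directions of $M = \mu$ are handled cleanly. The paper does not actually supply a proof of this lemma; it simply remarks that the result ``appears extensively in the literature'' and cites \cite{Mac1977,Gopalan2012}, so there is no in-paper argument to compare against---your write-up fills in exactly what the authors left to the references.
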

This fundamental characterization appears extensively in the literature (see, e.g., \cite{Mac1977,Gopalan2012}).

The {\it generalized Singleton-type bound} for $(r,\delta)$-LRCs, as established in \cite{Prakash2012}, imposes the following fundamental constraint on code parameters:
$$d\leq n-k+1-(\lceil \frac{k}{r}\rceil-1)(\delta-1),$$
where $\lceil \frac{k}{r}\rceil$ denotes the ceiling function. Codes attaining equality in this bound are called {\it optimal $(r, \delta)$-LRCs}.

We now revisit a useful lemma established in \cite{Luo2022}. Suppose that $H = (\mathbf{h}_1, \mathbf{h}_2, \ldots , \mathbf{h}_n)$ is an $m\times n$ matrix, where $\mathbf{h}_i$ is a column vector of length $m$. The {\it support} of $H$ is defined as
$$\text{Supp}(H) :=\{i \in [n]:\; \mathbf{h}_i \neq \mathbf{0}\},$$
where $\mathbf{0}$ denotes the zero column vector of length $m$.

\begin{lemma}(\!\!\cite[Lemma 2]{Luo2022}) \label{local}
Let $\mathcal{C}$ be an $[n, k, d]_q$ linear code. Let $r$ and $\delta$ be positive integers, where $\delta > 1$. The $\alpha$th code symbol of $\mathcal{C}$ has $(r,\delta)$-locality if and only if there exists an $m\times n$ matrix $H$ over $\mathbb{F}_q$ with the following properties:
\begin{itemize}
\item [(1)] $\alpha\in\text{Supp}(H)$;

\item [(2)] $|\text{Supp}(H)|\leq  r +\delta-1$ such that any $\delta-1$ nonzero
columns of $H$ are linearly independent over $\mathbb{F}_q$;

\item [(3)] $H\mathbf{c}^{\top} = \mathbf{0}$ for every codeword $\mathbf{c}\in \mathcal{C}$.
\end{itemize}
\end{lemma}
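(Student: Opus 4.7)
I plan to prove the biconditional by using a parity-check matrix of the local punctured code as the bridge between the two directions. The key observation is that condition (2) of the lemma is precisely the MDS-like constraint on columns of a parity-check matrix equivalent to minimum distance at least $\delta$ for its kernel, while condition (3) expresses that the rows of $H$ lie in the dual of $\mathcal{C}$. Thus $H$ may be viewed, up to zero-column padding, as a parity-check matrix of a supercode of $\mathcal{C}$ whose restriction to $\text{Supp}(H)$ will be our desired local protection code.

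\textbf{Forward direction ($\Rightarrow$).} Assume $(r,\delta)$-locality at coordinate $\alpha$, witnessed by a subset $S_\alpha\subseteq[n]$ with $\alpha\in S_\alpha$, $|S_\alpha|\leq r+\delta-1$, and $d(\mathcal{C}|_{S_\alpha})\geq \delta$. I would take an $m\times|S_\alpha|$ parity-check matrix $H'$ of $\mathcal{C}|_{S_\alpha}$ and extend it to an $m\times n$ matrix $H$ by inserting zero columns at the indices in $[n]\setminus S_\alpha$ while keeping the columns of $H'$ at the positions indexed by $S_\alpha$. Then $\text{Supp}(H)\subseteq S_\alpha$, giving the cardinality part of (2); the linear independence of any $\delta-1$ nonzero columns of $H$ is inherited from $H'$, where it is forced by $d(\mathcal{C}|_{S_\alpha})\geq \delta$. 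Condition (3) holds because for every $\mathbf{c}\in\mathcal{C}$ one has $H\mathbf{c}^{\top}=H'(\mathbf{c}|_{S_\alpha})^{\top}=\mathbf{0}$, since $\mathbf{c}|_{S_\alpha}\in\mathcal{C}|_{S_\alpha}$. To secure $\alpha\in\text{Supp}(H)$ I would argue by contradiction: if the $\alpha$th column of $H'$ were zero, then the standard basis vector $\mathbf{e}_\alpha\in\mathbb{F}_q^{|S_\alpha|}$ would satisfy $H'\mathbf{e}_\alpha^{\top}=\mathbf{0}$ and thus lie in $\mathcal{C}|_{S_\alpha}$, producing a codeword of weight $1<\delta$, contradicting $\delta>1$.

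\textbf{Reverse direction ($\Leftarrow$) and main obstacle.} Given $H$ satisfying (1)--(3), I would set $S_\alpha:=\text{Supp}(H)$, so that $\alpha\in S_\alpha$ and $|S_\alpha|\leq r+\delta-1$ follow immediately. Writing $H_{S_\alpha}$ for the submatrix of $H$ on columns indexed by $S_\alpha$, the identity $H\mathbf{c}^{\top}=H_{S_\alpha}(\mathbf{c}|_{S_\alpha})^{\top}$ combined with (3) shows that every codeword of $\mathcal{C}|_{S_\alpha}$ is annihilated by $H_{S_\alpha}$. Condition (2) then forces the code with parity-check matrix $H_{S_\alpha}$ to have minimum distance at least $\delta$, and hence so does its subcode $\mathcal{C}|_{S_\alpha}$, establishing $(r,\delta)$-locality at coordinate $\alpha$. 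The only delicate step throughout is the verification that $\alpha\in\text{Supp}(H)$ in the forward direction, which is precisely where the hypothesis $\delta>1$ is used; the remainder reduces to bookkeeping with zero-column padding and submatrix restriction.
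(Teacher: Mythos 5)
The paper does not prove this lemma at all: it is imported verbatim from \cite[Lemma 2]{Luo2022}, so there is no internal proof to compare against. Your argument is a correct, self-contained proof and follows the standard route: in the forward direction, zero-padding a parity-check matrix $H'$ of the local code $\mathcal{C}|_{S_\alpha}$ (with $\delta>1$ used exactly where it must be, to rule out a zero column at position $\alpha$, which would give a weight-one codeword), and in the reverse direction taking $S_\alpha=\mathrm{Supp}(H)$ and using that linear independence of any $\delta-1$ nonzero columns forces $d(\ker H_{S_\alpha})\geq\delta$, hence $d(\mathcal{C}|_{S_\alpha})\geq\delta$. The only cosmetic point is indexing: ``the $\alpha$th column of $H'$'' and ``$\mathbf{e}_\alpha\in\mathbb{F}_q^{|S_\alpha|}$'' should be read as referring to the column/coordinate of $H'$ corresponding to the position of $\alpha$ inside $S_\alpha$, which does not affect the argument.
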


\subsection{Quantum codes}
Denote by $[\mspace{-2mu}[n,k,d]\mspace{-2mu}]_{q}$ a $q$-ary quantum code of length $n$, dimension $q^{k}$, and minimum distance $d$. Such a code is a $q^{k}$-dimensional subspace of the $q^{n}$-dimensional complex Hilbert space $(\mathbb{C}^{q})^{\otimes n} \cong \mathbb{C}^{q^{n}}$. It is capable of detecting up to $d-1$ quantum errors and correcting up to $\left\lfloor \frac{d-1}{2} \right\rfloor$ quantum errors. The minimum distance $d$ of an $[\mspace{-2mu}[n,k,d]\mspace{-2mu}]_{q}$ quantum code is constrained by the \emph{quantum Singleton bound}, which states that $2d \leq n-k+2$. A quantum code meeting this bound with equality, i.e., $2d = n - k + 2$, is referred to as a \emph{quantum MDS code}.
For further background on quantum codes, we refer the reader to \cite{Ashikhmin2001Nonbinary,Calderbank1998Quantum,Ketkar2006Nonbinary,Knill1997Theory,Gottesman1997Stabilizer,Grassl2004On,Rains1999Nonbinary}.

The CSS construction, named after Calderbank and Shor \cite{Calderbank1996Good} and Steane \cite{Steane1996Simple}, provides an efficient method for obtaining quantum codes from a pair of linear codes.

\begin{proposition}{\rm (\!\!\cite{Calderbank1996Good,Steane1996Simple}, CSS Construction)} \label{css}
Let $\mathcal{C}_{i}$ be an $[n,k_{i},d_{i}]_{q}$ linear code for $i=1,2$ such that $\mathcal{C}_{2}^{\bot_{\mathrm{E}}}\subseteq\mathcal{C}_{1}$.
Then, there exists an $[\mspace{-2mu}[n,k_{1}+k_{2}-n,d]\mspace{-2mu}]_{q}$ quantum code, where
\begin{align}\label{distanceee}
d&=\begin{cases}
\mathrm{min}\{d_1,d_2\}, & \mathrm{if}\ \mathcal{C}_{2}^{\bot_{\mathrm{E}}}=\mathcal{C}_{1},\\
\mathrm{min}\{\mathrm{wt}(\mathcal{C}_{1}\backslash\mathcal{C}_{2}^{\bot_{\mathrm{E}}}),\mathrm{wt}(\mathcal{C}_{2}\backslash\mathcal{C}_{1}^{\bot_{\mathrm{E}}})\}, & \mathrm{if}\ \mathcal{C}_{2}^{\bot_{\mathrm{E}}}\neq\mathcal{C}_{1}.
\end{cases}
\end{align}
\end{proposition}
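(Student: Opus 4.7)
The plan is to invoke the stabilizer formalism and realize the quantum code as a joint eigenspace of an abelian subgroup of generalized Pauli operators acting on $(\mathbb{C}^{q})^{\otimes n}$. I would first set up the generalized Pauli operators $X(\mathbf{a})Z(\mathbf{b})$ indexed by pairs $(\mathbf{a},\mathbf{b})\in\mathbb{F}_{q}^{n}\times\mathbb{F}_{q}^{n}$, which satisfy the commutation identity
\[
X(\mathbf{a})Z(\mathbf{b})\, X(\mathbf{a}')Z(\mathbf{b}') \;=\; \omega^{\langle \mathbf{a}',\mathbf{b}\rangle-\langle\mathbf{a},\mathbf{b}'\rangle}\, X(\mathbf{a}')Z(\mathbf{b}')\, X(\mathbf{a})Z(\mathbf{b}),
\]
with $\omega$ a fixed primitive root of unity. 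Define the stabilizer $S$ (modulo the center) as the image of $\mathcal{C}_{2}^{\perp_{\mathrm{E}}}\times\mathcal{C}_{1}^{\perp_{\mathrm{E}}}$ under the map $(\mathbf{a},\mathbf{b})\mapsto X(\mathbf{a})Z(\mathbf{b})$. The hypothesis $\mathcal{C}_{2}^{\perp_{\mathrm{E}}}\subseteq\mathcal{C}_{1}$ (equivalently $\mathcal{C}_{1}^{\perp_{\mathrm{E}}}\subseteq\mathcal{C}_{2}$) is exactly what forces all pairwise inner products $\langle\mathbf{a}',\mathbf{b}\rangle$ and $\langle\mathbf{a},\mathbf{b}'\rangle$ to vanish for $\mathbf{a},\mathbf{a}'\in\mathcal{C}_{2}^{\perp_{\mathrm{E}}}$, $\mathbf{b},\mathbf{b}'\in\mathcal{C}_{1}^{\perp_{\mathrm{E}}}$, making $S$ abelian. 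Since $|S|=q^{(n-k_{2})+(n-k_{1})}=q^{2n-k_{1}-k_{2}}$, its joint eigenspace has dimension $q^{n-(2n-k_{1}-k_{2})}=q^{k_{1}+k_{2}-n}$, yielding the stated dimension parameter.

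Next, to extract the distance, I would characterize the normalizer $N(S)$ modulo center. The commutation identity shows that $X(\mathbf{a})Z(\mathbf{b})\in N(S)$ iff $\langle\mathbf{a}',\mathbf{b}\rangle=\langle\mathbf{a},\mathbf{b}'\rangle=0$ for every $\mathbf{a}'\in\mathcal{C}_{2}^{\perp_{\mathrm{E}}}$ and $\mathbf{b}'\in\mathcal{C}_{1}^{\perp_{\mathrm{E}}}$, which by bi-duality forces $\mathbf{a}\in\mathcal{C}_{1}$ and $\mathbf{b}\in\mathcal{C}_{2}$. The quantum distance $d$ is then the minimum symplectic weight $|\{i:a_{i}\ne 0\text{ or } b_{i}\ne 0\}|$ of a representative of a nontrivial coset of $N(S)/S$ in the non-degenerate case, or of a nontrivial element of $S$ itself in the degenerate case $N(S)=S$, which occurs precisely when $\mathcal{C}_{2}^{\perp_{\mathrm{E}}}=\mathcal{C}_{1}$, i.e., $k_{1}+k_{2}=n$. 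In the degenerate case, the minimum weight over $(\mathcal{C}_{1}\cup\mathcal{C}_{2})\setminus\{\mathbf{0}\}$ gives $\min\{d_{1},d_{2}\}$ directly. In the non-degenerate case, specializing to $\mathbf{b}=\mathbf{0}$ with $\mathbf{a}\in\mathcal{C}_{1}\setminus\mathcal{C}_{2}^{\perp_{\mathrm{E}}}$ (and symmetrically to $\mathbf{a}=\mathbf{0}$) shows that $\mathrm{wt}(\mathcal{C}_{1}\setminus\mathcal{C}_{2}^{\perp_{\mathrm{E}}})$ and $\mathrm{wt}(\mathcal{C}_{2}\setminus\mathcal{C}_{1}^{\perp_{\mathrm{E}}})$ are both upper bounds on $d$.

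The main subtlety I expect will be confirming that the minimum symplectic weight in the non-degenerate case is actually attained by a pure $X$-type or pure $Z$-type operator rather than by a genuinely mixed one. This is handled by a coset argument: for any representative $(\mathbf{a},\mathbf{b})$ of a nontrivial coset of $N(S)/S$, at least one of $\mathbf{a}\notin\mathcal{C}_{2}^{\perp_{\mathrm{E}}}$ or $\mathbf{b}\notin\mathcal{C}_{1}^{\perp_{\mathrm{E}}}$ must hold, and the Hamming weight of that single component is bounded above by the symplectic weight of the pair, which supplies the matching lower bound and closes the formula in Eq. \eqref{distanceee}.
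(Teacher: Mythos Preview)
The paper does not supply its own proof of this proposition; it is quoted as a classical result with citations to \cite{Calderbank1996Good,Steane1996Simple} and used as a black box. Your stabilizer-formalism outline is the standard argument and is correct: abelianness of $S$ follows exactly from the containment $\mathcal{C}_{2}^{\perp_{\mathrm{E}}}\subseteq\mathcal{C}_{1}$, the dimension count $q^{n-(2n-k_{1}-k_{2})}$ is right, the identification of $N(S)$ modulo center with $\mathcal{C}_{1}\times\mathcal{C}_{2}$ is correct, and your final coset argument does close the lower bound in the non-degenerate case, since the symplectic weight of any pair $(\mathbf{a},\mathbf{b})$ dominates each of $\mathrm{wt}(\mathbf{a})$ and $\mathrm{wt}(\mathbf{b})$ separately, forcing it to be at least $\min\{\mathrm{wt}(\mathcal{C}_{1}\setminus\mathcal{C}_{2}^{\perp_{\mathrm{E}}}),\mathrm{wt}(\mathcal{C}_{2}\setminus\mathcal{C}_{1}^{\perp_{\mathrm{E}}})\}$.

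One small wording point: in the degenerate case you describe the distance as the minimum weight over $(\mathcal{C}_{1}\cup\mathcal{C}_{2})\setminus\{\mathbf{0}\}$, but the stabilizer elements are pairs $(\mathbf{a},\mathbf{b})\in\mathcal{C}_{1}\times\mathcal{C}_{2}$ under symplectic weight. The reduction to pure $X$- or $Z$-type still works here (if $(\mathbf{a},\mathbf{b})$ is nonzero then one of $(\mathbf{a},\mathbf{0})$, $(\mathbf{0},\mathbf{b})$ is a nonzero element of $S$ with no larger symplectic weight), so the conclusion $\min\{d_{1},d_{2}\}$ stands; it would just be cleaner to state it that way.
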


The following proposition, known as the Hermitian construction, offers an effective method for producing a quantum code from a Hermitian dual-containing code.

\begin{proposition}{\rm (\!\!\cite{Ashikhmin2001Nonbinary}, Hermitian construction)}\label{proposition6}
Let $\mathcal{C}$ be an $[n,k,d]_{q^{2}}$ Hermitian dual-containing code. Then, there exists an $[\mspace{-2mu}[n,2k-n,\geq d]\mspace{-2mu}]_{q}$ quantum code.
\end{proposition}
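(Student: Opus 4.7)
The plan is to reduce the Hermitian construction to the standard $q$-ary symplectic stabilizer formalism by unfolding every $\mathbb{F}_{q^2}$-vector into an ordered pair of $\mathbb{F}_q$-vectors. First, I would fix a trace-orthogonal $\mathbb{F}_q$-basis $\{1,\omega\}$ of $\mathbb{F}_{q^2}$ and define the $\mathbb{F}_q$-linear isomorphism
\[
\phi:\mathbb{F}_{q^2}^n\longrightarrow \mathbb{F}_q^{2n},\qquad \phi(\mathbf{a}+\omega\mathbf{b})=(\mathbf{a}\mid\mathbf{b}),\quad \mathbf{a},\mathbf{b}\in\mathbb{F}_q^n.
\]
Expanding $\langle\mathbf{x},\mathbf{y}\rangle_{\mathrm{H}}=\sum_i x_iy_i^q$ in this basis and applying $\mathrm{Tr}_{q^2/q}$, one produces a fixed nondegenerate alternating form $\langle\cdot,\cdot\rangle_s$ on $\mathbb{F}_q^{2n}$ satisfying
\[
\mathrm{Tr}_{q^2/q}\!\bigl(\langle\mathbf{x},\mathbf{y}\rangle_{\mathrm{H}}\bigr)=\langle\phi(\mathbf{x}),\phi(\mathbf{y})\rangle_s,\qquad \mathbf{x},\mathbf{y}\in\mathbb{F}_{q^2}^n.
\]
Because $\mathcal{C}$ is closed under multiplication by $\omega$, this trace identity upgrades to a bijective correspondence at the level of duals, namely $\phi(\mathcal{C}^{\perp_{\mathrm{H}}})=\phi(\mathcal{C})^{\perp_s}$.

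Next I would apply this dictionary under the hypothesis $\mathcal{C}^{\perp_{\mathrm{H}}}\subseteq\mathcal{C}$. Setting $S:=\phi(\mathcal{C}^{\perp_{\mathrm{H}}})$, the previous step forces $S$ to be an $\mathbb{F}_q$-linear symplectic self-orthogonal subspace of $\mathbb{F}_q^{2n}$ with
\[
\dim_{\mathbb{F}_q}S=2\dim_{\mathbb{F}_{q^2}}\mathcal{C}^{\perp_{\mathrm{H}}}=2(n-k).
\]
The standard $q$-ary stabilizer construction (see Ashikhmin2001Nonbinary and Ketkar2006Nonbinary) then converts any symplectic self-orthogonal $\mathbb{F}_q$-subspace of $\mathbb{F}_q^{2n}$ of dimension $s$ into a stabilizer code of length $n$ and dimension $q^{n-s}$; specializing to $s=2(n-k)$ yields an $[\mspace{-2mu}[n,2k-n,d']\mspace{-2mu}]_q$ quantum code for some integer $d'\geq 1$.

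Finally, the symplectic-stabilizer theorem characterizes the distance as $d'=\min\{\mathrm{wt}_s(\mathbf{u}):\mathbf{u}\in S^{\perp_s}\setminus S\}$, where $\mathrm{wt}_s((\mathbf{a}\mid\mathbf{b}))=|\{i:(a_i,b_i)\neq(0,0)\}|$. The trace-orthogonal basis is chosen so that $\mathrm{wt}_s(\phi(\mathbf{v}))=\mathrm{wt}(\mathbf{v})$ for every $\mathbf{v}\in\mathbb{F}_{q^2}^n$, and the correspondence above identifies $S^{\perp_s}=\phi(\mathcal{C})$ and $S=\phi(\mathcal{C}^{\perp_{\mathrm{H}}})$. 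Hence
\[
d'=\min\{\mathrm{wt}(\mathbf{v}):\mathbf{v}\in\mathcal{C}\setminus\mathcal{C}^{\perp_{\mathrm{H}}}\}\geq \min\{\mathrm{wt}(\mathbf{v}):\mathbf{0}\neq\mathbf{v}\in\mathcal{C}\}=d,
\]
which delivers the claimed $[\mspace{-2mu}[n,2k-n,\geq d]\mspace{-2mu}]_q$ quantum code.

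The main technical hurdle is the algebraic bookkeeping surrounding the form conversion: one must verify (i) that $\mathrm{Tr}_{q^2/q}\!\circ\langle\cdot,\cdot\rangle_{\mathrm{H}}$ pulls back under $\phi$ to a genuinely alternating (not merely bilinear) form on $\mathbb{F}_q^{2n}$, (ii) that Hamming weight on $\mathbb{F}_{q^2}^n$ agrees with symplectic weight on $\mathbb{F}_q^{2n}$ coordinatewise under $\phi$, and (iii) that the $\mathbb{F}_{q^2}$-linearity of $\mathcal{C}$ (rather than just $\mathbb{F}_q$-linearity of $\phi(\mathcal{C})$) promotes the scalar trace identity into the exact equality of dual spaces $\phi(\mathcal{C}^{\perp_{\mathrm{H}}})=\phi(\mathcal{C})^{\perp_s}$. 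Once these three items are settled, the proposition is a direct invocation of the symplectic stabilizer theorem.
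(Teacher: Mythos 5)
The paper itself offers no proof of this proposition; it is imported verbatim from Ashikhmin--Knill (see also Ketkar et al.), and the standard proof in those references follows exactly the route you sketch: unfold $\mathbb{F}_{q^2}^n$ into $\mathbb{F}_q^{2n}$, translate Hermitian duality into duality for an alternating form, and invoke the symplectic stabilizer theorem together with the coordinatewise agreement of Hamming and symplectic weight. So your architecture is the right one; the comparison below is therefore with that standard argument rather than with anything in the paper.

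There is, however, a genuine gap at precisely the point you flagged as hurdle (i), and it is not a mere verification: the claim is false for odd $q$. The form $(\mathbf{x},\mathbf{y})\mapsto \mathrm{Tr}_{q^2/q}\bigl(\langle\mathbf{x},\mathbf{y}\rangle_{\mathrm{H}}\bigr)$ is \emph{symmetric}, because $\langle\mathbf{y},\mathbf{x}\rangle_{\mathrm{H}}=\langle\mathbf{x},\mathbf{y}\rangle_{\mathrm{H}}^{\,q}$ and the trace is invariant under Frobenius; and it is not alternating in odd characteristic, since $\mathrm{Tr}_{q^2/q}\bigl(\langle\mathbf{x},\mathbf{x}\rangle_{\mathrm{H}}\bigr)=2\sum_i x_i^{q+1}$, which is nonzero already for $\mathbf{x}=(1,0,\ldots,0)$. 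Consequently no choice of the basis $\{1,\omega\}$ makes $\mathrm{Tr}_{q^2/q}\circ\langle\cdot,\cdot\rangle_{\mathrm{H}}$ pull back under $\phi$ to a symplectic form, and the ensuing appeal to the stabilizer theorem, whose commutation relations are governed by the alternating form, does not go through as written (your argument is literally correct only in characteristic $2$, where $2=0$). The repair is standard and preserves the rest of your proof: replace $\mathrm{Tr}_{q^2/q}\bigl(\langle\mathbf{x},\mathbf{y}\rangle_{\mathrm{H}}\bigr)$ by $\mathrm{Tr}_{q^2/q}\bigl(\alpha\langle\mathbf{x},\mathbf{y}\rangle_{\mathrm{H}}\bigr)$ for a fixed $\alpha\in\mathbb{F}_{q^2}^{\times}$ with $\mathrm{Tr}_{q^2/q}(\alpha)=0$ (equivalently, use the trace-alternating form of Ketkar et al.\ built from $\mathbf{x}\cdot\mathbf{y}^{q}-\mathbf{x}^{q}\cdot\mathbf{y}$). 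This form is alternating, nondegenerate, and acts coordinatewise, so symplectic weight still equals Hamming weight; Hermitian orthogonality trivially implies orthogonality for it; and the $\mathbb{F}_{q^2}$-linearity of $\mathcal{C}$ (your item (iii)) yields $\phi(\mathcal{C}^{\perp_{\mathrm{H}}})=\phi(\mathcal{C})^{\perp_s}$ either by scaling $\mathbf{y}$ over $\mathbb{F}_{q^2}$ and using nondegeneracy of the trace, or by the dimension count $\dim_{\mathbb{F}_q}\phi(\mathcal{C}^{\perp_{\mathrm{H}}})=2(n-k)=\dim_{\mathbb{F}_q}\phi(\mathcal{C})^{\perp_s}$. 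With that substitution your remaining steps, including the distance estimate $d'\geq d$ (and the degenerate case $2k=n$, where the quantum code has dimension zero and distance $d(\mathcal{C})\geq d$), are correct.
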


\subsection{Optimal Quantum $(r, \delta)$-LRCs}
In the pioneering work of Galindo et al.~\cite{Galindo2024}, the authors established a connection between classical codes with the Hermitian dual-containing
(resp. Euclidean dual-containing) property and quantum $(r, \delta)$-LRCs. Building upon this result, they further investigated optimal quantum $(r, \delta)$-LRCs constructed from classical codes and introduced the following definition.

\begin{definition}{\rm (\!\!\cite[Definition 31]{Galindo2024})}\label{quan-LRC}
Let $\mathcal{C}\subseteq \mathbb{F}_{q^2}^n$ (resp. $\mathcal{C}\subseteq \mathbb{F}_{q}^n$) be a linear code. Suppose $\mathcal{C}$ is Hermitian dual-containing (resp. Euclidean dual-containing), $\dim(\mathcal{C})=\frac{n+k}{2}$ and $\mathcal{C}$ is an $(r, \delta)$-LRC. If $\delta \leq  d(\mathcal{C}^{\perp_{\text{H}}})$ (resp. $\delta \leq  d(\mathcal{C}^{\perp_\text{E}})$), and if the induced quantum code with parameters $[\mspace{-2mu}[n,k,\geq d(\mathcal{C})]\mspace{-2mu}]_{q}$ satisfies
\begin{align}
k + 2d(\mathcal{C}) + 2(\lceil \frac{n+k}{2r}\rceil -1)(\delta-1)=n+2,\label{quan-inqu}
\end{align}
then the induced quantum code with parameters $[\mspace{-2mu}[n,k, d(\mathcal{C})]\mspace{-2mu}]_{q}$ is said to be an {\it optimal quantum $(r, \delta)$-LRC}.
\end{definition}

To produce optimal quantum $(r, \delta)$-LRCs, we give the following lemma.

\begin{lemma}\label{quanum-cl}
Let $\mathcal{C}\subseteq \mathbb{F}_{q^2}^n$ (resp.  $\mathcal{C}\subseteq \mathbb{F}_{q}^n$) be an optimal $(r,\delta)$-LRC with parameters $[n,k,d]_{q^2}$ (resp. $[n,k,d]_{q}$). If $\mathcal{C}$ satisfies:
\vspace{-4pt}
\begin{itemize}
\item[(1)] Hermitian dual-containing (resp. Euclidean dual-containing) property;
\item[(2)] $n-k\geq \lceil \frac{k}{r}\rceil(\delta-1)$,
\end{itemize}
then its induced quantum code is an optimal quantum $(r, \delta)$-LRC.
\end{lemma}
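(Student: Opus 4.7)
The plan is to verify directly that the induced quantum code satisfies all four requirements in Definition \ref{quan-LRC}. Setting $k_Q := 2k - n$, the Hermitian construction (resp.\ CSS construction with $\mathcal{C}_1 = \mathcal{C}_2 = \mathcal{C}$) yields a quantum code with parameters $[\mspace{-2mu}[n, k_Q, \geq d]\mspace{-2mu}]_q$, and the dimension relation $\dim(\mathcal{C}) = k = \frac{n + k_Q}{2}$ holds automatically, while the $(r,\delta)$-locality of $\mathcal{C}$ is given by hypothesis.

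The nontrivial condition to check is $\delta \leq d(\mathcal{C}^{\perp_{\mathrm{H}}})$ (resp.\ $\delta \leq d(\mathcal{C}^{\perp_{\mathrm{E}}})$). Rewriting the optimal $(r,\delta)$-LRC equality gives
\[
d = n - k + 1 - (\lceil k/r\rceil - 1)(\delta - 1) = \delta + \bigl(n - k - \lceil k/r\rceil(\delta-1)\bigr),
\]
so hypothesis (2) is equivalent to $d(\mathcal{C}) \geq \delta$. The dual-containing hypothesis (1) then gives $\mathcal{C}^{\perp_{\mathrm{H}}} \subseteq \mathcal{C}$ (resp.\ $\mathcal{C}^{\perp_{\mathrm{E}}} \subseteq \mathcal{C}$), so every nonzero dual codeword lies in $\mathcal{C}$ and therefore has weight at least $d \geq \delta$; hence $d(\mathcal{C}^{\perp_{\mathrm{H}}}) \geq \delta$ (resp.\ $d(\mathcal{C}^{\perp_{\mathrm{E}}}) \geq \delta$).

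It remains to verify Eq.~\eqref{quan-inqu}. Substituting $k_Q = 2k-n$ (so $\lceil \tfrac{n+k_Q}{2r}\rceil = \lceil \tfrac{k}{r}\rceil$) together with $d(\mathcal{C}) = n - k + 1 - (\lceil k/r\rceil - 1)(\delta-1)$, the left-hand side expands to
\[
(2k - n) + 2\bigl(n - k + 1 - (\lceil k/r\rceil - 1)(\delta - 1)\bigr) + 2(\lceil k/r\rceil - 1)(\delta - 1) = n + 2,
\]
so the required equality holds, and all four conditions of Definition \ref{quan-LRC} are satisfied.

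I expect no serious obstacle: the argument is essentially bookkeeping once one notices that hypothesis (2) is precisely the arithmetic restatement of $d(\mathcal{C}) \geq \delta$ for optimal $(r,\delta)$-LRCs, which unlocks the chain $d(\mathcal{C}^{\perp}) \geq d(\mathcal{C}) \geq \delta$ from dual-containment. The unified decomposition theorem is not needed here; it will instead be used in the subsequent Section \ref{quantum} to show that hypothesis (2) is in fact automatic in the optimal setting, allowing it to be dropped entirely.
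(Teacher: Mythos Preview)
Your proof is correct and follows essentially the same approach as the paper's own proof: both use the dual-containing hypothesis to obtain $d(\mathcal{C}^{\perp})\geq d(\mathcal{C})$, use hypothesis~(2) to conclude $d(\mathcal{C})\geq\delta$, and then verify Eq.~\eqref{quan-inqu} by direct substitution. Your presentation is slightly more explicit in checking each condition of Definition~\ref{quan-LRC} and in noting that hypothesis~(2) is exactly the statement $d(\mathcal{C})\geq\delta$, but the underlying argument is identical.
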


\begin{proof}
For the Hermitian dual-containing case, namely $\mathcal{C}^{\perp_{\mathrm{H}}}\subseteq\mathcal{C}$, we have
$$d(\mathcal{C}^{\perp_\text{H}})\geq d(\mathcal{C})=n-k+1-(\lceil \frac{k}{r}\rceil-1)(\delta-1).$$
Since $n-k\geq \lceil \frac{k}{r}\rceil(\delta-1)$, we have $d(\mathcal{C}^{\perp_\text{H}})\geq\delta$.
Since $\mathcal{C}$ is an optimal $(r,\delta)$-LRC, we can verify that $2k-n+ 2d(\mathcal{C}) + 2(\lceil \frac{k}{r}\rceil -1)(\delta-1)=n+2$.
By Definition \ref{quan-LRC}, the induced quantum code is an optimal quantum $(r, \delta)$-LRC. The Euclidean case follows through an analogous proof. 
$\hfill\square$
\end{proof}

\begin{remark}
Notably, we shall rigorously demonstrate in Section \ref{quantum} that the critical inequality
$$n-k\geq \lceil \frac{k}{r}\rceil(\delta-1)$$
in Lemma \ref{quanum-cl} necessarily holds for every optimal $(r,\delta)$-LRC with parameters $[n,k,d]_{q^2}$ (resp. $[n,k,d]_{q}$) that is Hermitian dual-containing (resp. Euclidean dual-containing). This finding fundamentally strengthens the applicability of Lemma \ref{quanum-cl}.
\end{remark}

\section{The decomposition theorem for optimal $(r,\delta)$-LRCs}\label{OD}
In this section, we will build the decomposition theorem for optimal $(r,\delta)$-LRCs.

\subsection{Decomposition Theorem}\label{sub3.1}

Let $\mathcal{C}$ be an $[n, k, d]_q$ linear code with a generator matrix $G=(\mathbf{c}_1,\ldots,\mathbf{c}_n)$. Recall the following canonical set:
\begin{align}
C=\{(1,\mathbf{c}_1),\ldots,(n,\mathbf{c}_n)\}\subseteq [n]\times \mathbb{F}_{q}^k. \label{setC}
\end{align}
For any subset $T=\{(j_1,\mathbf{c}_{j_1}),\ldots,(j_i,\mathbf{c}_{j_i})\}\subseteq C$, recall that the rank of $T$ is defined as
$$\rank(T):=\dim(\langle \mathbf{c}_{j_1},\ldots,\mathbf{c}_{j_i}\rangle),$$
with the convention $\rank(\emptyset)=0$.

Next, we present three useful lemmas, namely Lemmas \ref{Stru-lemm1}, \ref{Stru-lemm2} and \ref{Stru-lemm3}, which will aid in establishing the decomposition theorem for optimal $(r,\delta)$-LRCs.

\begin{lemma}\label{Stru-lemm1}
Let $\mathcal{C}$ be an optimal $(r,\delta)$-LRC with parameters $[n,k,d]_q$, and let $C$ be the set defined in Eq. \eqref{setC}. Suppose there exists a subset $S\subseteq C$ such that
$$S=C_1\cup\ldots\cup C_{i_0}\cup \{(j_1,\mathbf{c}_{j_1}),\ldots,(j_{s_{i_0+1}},\mathbf{c}_{j_{s_{i_0+1}}})\},$$ where $\{j_1,\ldots,j_{s_{i_0+1}}\}\subseteq [n]$ and $C_i\subseteq C$ for $i\in[i_{0}]$, satisfying the following conditions:
\begin{itemize}
\item [(1)] Rank Constraint: $\rank(S)=k-1$ and $s_{i_0+1}\leq r-1$;

\item [(2)] Incremental Rank Growth: For all $i\in [i_0]$:
$$1\leq \rank(\cup_{j=1}^i C_j)-\rank(\cup_{j=1}^{i-1} C_j)\leq (|\cup_{j=1}^i C_j|-|\cup_{j=1}^{i-1} C_j|-(\delta-1)),$$
where $C_0=\emptyset$;

\item [(3)] Local Protection: Each $\mathcal{C}|_{C_i}$ is a local protection code for $i\in [i_0]$.
    \end{itemize}
Then, the following statements hold:
\begin{itemize}
\item  $i_0=\lceil \frac{k}{r}\rceil-1$;

\item $\mathcal{C}|_{C_1}$ is an MDS code with parameters $[n_1\leq (r+\delta-1),n_1-\delta+1,\delta]_{q}$;

\item $d(\mathcal{C})=n-|S|$.
    \end{itemize}
\end{lemma}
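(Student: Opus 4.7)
The plan is to bound $|S|$ from two sides and force all intermediate inequalities to be tight. A lower bound on $|S|$ comes from condition (2), which forces each block $C_i$ to contribute at least $\delta-1$ more fresh elements than fresh rank. An upper bound comes from Lemma \ref{pun} applied to $S$ (which has rank $k-1 < k$, so $|S| \leq n-d$), combined with the optimal $(r,\delta)$-LRC formula $n - d = k - 1 + (\lceil k/r \rceil - 1)(\delta-1)$. These bounds will pinch $i_0$ to a single value, and the rigidity of the extremal case will yield both the MDS statement and $d(\mathcal{C}) = n - |S|$.

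Concretely, set $n_i := |\cup_{j\leq i} C_j| - |\cup_{j<i} C_j|$ and $k_i := \rank(\cup_{j\leq i} C_j) - \rank(\cup_{j<i} C_j)$, so condition (2) reads $1 \leq k_i \leq n_i - (\delta-1)$. Let $s'$ denote the number of ``extras'' $(j_\ell, \mathbf{c}_{j_\ell})$ not already in $\cup_{j\leq i_0} C_j$ and $\Delta$ the additional rank they contribute, so that $\Delta \leq s' \leq s_{i_0+1}$ and $\sum_{i=1}^{i_0} k_i + \Delta = k - 1$. Summing condition (2) gives $\sum n_i \geq \sum k_i + i_0(\delta-1)$, hence
$$|S| = \sum_{i=1}^{i_0} n_i + s' \;\geq\; (k - 1 - \Delta) + i_0(\delta-1) + s' \;\geq\; k - 1 + i_0(\delta-1).$$
Lemma \ref{pun} and optimality force $|S| \leq n - d = k - 1 + (\lceil k/r\rceil-1)(\delta-1)$, yielding $i_0 \leq \lceil k/r\rceil - 1$. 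For the reverse bound, each $\mathcal{C}|_{C_i}$ has length at most $r+\delta-1$ and distance at least $\delta$, so the Singleton bound gives $\rank(C_i) \leq r$, hence $k_i \leq r$; together with $\Delta \leq s_{i_0+1} \leq r-1$, this yields $k - 1 \leq i_0 r + (r-1)$, i.e., $i_0 \geq \lceil k/r\rceil - 1$. Thus $i_0 = \lceil k/r\rceil - 1$, every inequality above becomes equality, and in particular $|S| = n - d$, which gives $d(\mathcal{C}) = n - |S|$.

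Equality in the chain forces $k_i = n_i - (\delta-1)$ for every $i \in [i_0]$. Specializing to $i = 1$ yields $\dim(\mathcal{C}|_{C_1}) = k_1 = n_1 - \delta + 1$, and combining this with the local protection distance bound $d(\mathcal{C}|_{C_1}) \geq \delta$ and the Singleton bound $d(\mathcal{C}|_{C_1}) \leq n_1 - \dim + 1 = \delta$ shows that $\mathcal{C}|_{C_1}$ is MDS with distance exactly $\delta$; the length bound $n_1 \leq r + \delta - 1$ is just the local protection code length constraint.

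The main technical obstacle I anticipate is set-theoretic bookkeeping: the blocks $C_i$ are not assumed disjoint, and the extras may overlap with $\cup_{j\leq i_0} C_j$. Working with the fresh contributions $n_i$ and $s'$ rather than the raw sizes $|C_i|$ and $s_{i_0+1}$, and carefully verifying that the cancellation $-\Delta + s' \geq 0$ (which is what makes the lower bound on $|S|$ clean) holds regardless of the overlap pattern, is the only real subtlety; once the counting is phrased in terms of fresh elements, the two bounds genuinely collide and all three conclusions drop out simultaneously.
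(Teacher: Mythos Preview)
Your proposal is correct and follows essentially the same approach as the paper: sandwich $|S|$ between $k-1+i_0(\delta-1)$ (from summing the incremental rank inequalities) and $n-d=k-1+(\lceil k/r\rceil-1)(\delta-1)$ (from Lemma~\ref{pun} plus optimality), pin down $i_0$, and read off the MDS conclusion from the forced equalities. The only cosmetic differences are that the paper derives the reverse inequality $i_0\geq\lceil k/r\rceil-1$ from the size bounds $s_i\leq r+\delta-1$ rather than your direct rank bound $k_i\leq r$ via Singleton, and that your bookkeeping with $s'$ and $\Delta$ is slightly more careful about possible overlap of the extras with $\cup_j C_j$ than the paper's $|S|=\sum_{i=1}^{i_0+1}s_i$.
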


\begin{proof}
Case 1: If $\lceil\frac{k}{r}\rceil=1$, then
$\mathcal{C}$ is an MDS code, and the conclusion holds trivially.

Case 2: Assume $\lceil\frac{k}{r}\rceil>1$. Let $S\subseteq C$ such that:
$$S=C_1\cup\ldots\cup C_{i_0}\cup \{(j_1,\mathbf{c}_{j_1}),\ldots,(j_{s_{i_0+1}},\mathbf{c}_{j_{s_{i_0+1}}})\},$$
where $\{j_1,\ldots,j_{s_{i_0+1}}\}\subseteq [n]$ and $C_i\subseteq C$ for $i\in[i_{0}]$ satisfy the above conditions (1)-(3). Define the following numbers:
\begin{align}
s_i:=|\cup_{j=1}^i C_j|-|\cup_{j=1}^{i-1} C_j|,\;r_i:=\rank(\cup_{j=1}^i C_j)-\rank(\cup_{j=1}^{i-1} C_j), \label{count2}
\end{align}
where $i\in[i_{0}]$ and $C_0:=\emptyset$. Since $\mathcal{C}|_{C_i}$ is a local protection code for $i\in [i_0]$, we have
$$|C_i|\leq r+\delta -1.$$
This implies that
$$s_i\leq r+\delta -1.$$
Combining this with the definition of $r_i$, we have
\begin{numcases}{}
	 s_i \leq r+\delta -1 \quad (i\in[i_{0}]), \label{ineq1} \\
	r_i \leq s_i-(\delta-1) \quad (i\in[i_{0}]). \label{ineq22}
\end{numcases}
Let $r_{i_0+1}:=\rank(S)-\rank(\cup_{j=1}^{i_0} C_j)$. Since $s_{i_0+1}\leq r-1$ and by the definition of $r_{i_0+1}$, we obtain
\begin{numcases}{}
	 s_{i_0+1}\leq r-1, \label{ineq3} \\
	r_{i_0+1}\leq s_{i_0+1}. \label{ineq4}
\end{numcases}
By condition (1), we know that
\begin{align}
	 \rank(S)=k-1. \label{rS}
\end{align}
By Lemma \ref{pun}, $d(\mathcal{C})\leq n-|S|$. Note that $|S|=\sum_{i=1}^{i_0+1}s_i$, which implies that
\begin{align}
	 d(\mathcal{C}) \leq n-\sum_{i=1}^{i_0+1}s_i. \label{ineq5}
\end{align}
Since $\mathcal{C}$ is an optimal $(r,\delta)$-LRC, $d(\mathcal{C})=n-k+1-(\lceil \frac{k}{r}\rceil-1)(\delta-1)$. Combining this with Eq. \eqref{ineq5}, we obtain
\begin{align}
	 \sum_{i=1}^{i_0+1}s_i\leq  k-1+(\lceil \frac{k}{r}\rceil-1)(\delta-1).\label{ineq6}
\end{align}
By Eqs. \eqref{ineq22} and \eqref{ineq4}, $\sum_{i=1}^{i_0+1}r_i +i_0(\delta-1)\leq \sum_{i=1}^{i_0+1}s_i$. Note that $\sum_{i=1}^{i_0+1}r_i=\rank(S)$. Combining this with Eq. \eqref{rS}, we obtain $\sum_{i=1}^{i_0+1}r_i=k-1$. Since we have already established that $\sum_{i=1}^{i_0+1}r_i +i_0(\delta-1)\leq \sum_{i=1}^{i_0+1}s_i$, the following inequality holds:
\begin{align}
	 k-1 +i_0(\delta-1)\leq \sum_{i=1}^{i_0+1}s_i. \label{ineq7}
\end{align}
Combining this with Eq. \eqref{ineq6}, we establish
\begin{align}
	 k-1 +i_0(\delta-1) \leq k-1+(\lceil \frac{k}{r}\rceil-1)(\delta-1). \label{ineq8}
\end{align}
This implies
\begin{align}
	 i_0\leq \lceil \frac{k}{r}\rceil-1. \label{ineq9}
\end{align}
On the other hand, due to Eqs. \eqref{ineq1} and \eqref{ineq3}, we obtain $\sum_{i=1}^{i_0+1}s_i\leq i_0(r+\delta-1)+r-1$. Combining this with Eq. \eqref{ineq7}, we conclude $k-1 +i_0(\delta-1)\leq i_0(r+\delta-1)+r-1$. This implies
\begin{align}
	 k \leq (i_0+1)r. \label{ineq10}
\end{align}
From this, we obtain
\begin{align}
	 i_0\geq \lceil \frac{k}{r}\rceil-1. \label{ineq11}
\end{align}
Combining Eqs. \eqref{ineq9} and \eqref{ineq11}, we obtain $i_0=\lceil \frac{k}{r}\rceil-1$.

Next, we show that $\mathcal{C}|_{C_1}$ is an MDS code with parameters $[n_1\leq (r+\delta-1),k_1,\delta]_{q}$. Since $i_0=\lceil \frac{k}{r}\rceil-1$ and Eq. \eqref{ineq7}, we have
\begin{align}
\sum_{i=1}^{i_0+1}s_i \geq  k-1+(\lceil \frac{k}{r}\rceil-1)(\delta-1).\label{ineq12}
\end{align}
By Eqs. \eqref{ineq12} and \eqref{ineq6}, the following equation holds:
\begin{align}
\sum_{i=1}^{i_0+1}s_i=k-1+(\lceil \frac{k}{r}\rceil-1)(\delta-1).\label{ineq13}
\end{align}
Note that $k-1=\sum_{i=1}^{i_0+1}r_i$, hence Eq. \eqref{ineq13} becomes
\begin{align}
\sum_{i=1}^{i_0+1}s_i=\sum_{i=1}^{i_0+1}r_i+(\lceil \frac{k}{r}\rceil-1)(\delta-1).\label{ineq14}
\end{align}
Since $i_0=\lceil \frac{k}{r}\rceil-1$, and by Eqs. \eqref{ineq22} and \eqref{ineq4}, the above equality holds only if
$$r_i=s_i-(\delta-1)\;,\;r_{i_0+1}=s_{i_0+1},$$
where $i\in[i_{0}]$. In particular, $r_1=s_1-(\delta-1)$. Let $n_1$ be the length of $\mathcal{C}|_{C_1}$, and let $k_1$ be the dimension of $\mathcal{C}|_{C_1}$. By definition, $n_1=s_1$ and $k_1=r_1$. Since $r_1=s_1-(\delta-1)$, we know $k_1=n_1-(\delta-1)$. Because $\mathcal{C}|_{C_1}$ is a local protection code, $d(\mathcal{C}|_{C_1})\geq \delta$. On the other hand, $d(\mathcal{C}|_{C_1})\leq n_1-k_1+1=n_1-(n_1-(\delta-1))+1=\delta$ due to the Singleton bound, hence $d(\mathcal{C}_1)=\delta$. This implies that $\mathcal{C}_1$ is an MDS code with parameters $[n_1\leq (r+\delta-1),k_1,\delta]_{q}$.
Since $|S|=\sum_{i=1}^{i_0+1} s_i=k-1+(\frac{k}{r}\rceil-1)(\delta-1)$ by Eq. \eqref{ineq13}, it follows that $n-|S|=d(\mathcal{C})$.

Therefore, we complete the proof. $\hfill\square$
\end{proof}

\vspace{8pt}

Next, we proceed to address the analysis of another decomposition framework.
\begin{lemma}\label{Stru-lemm2}
Let $\mathcal{C}$ be an optimal $(r,\delta)$-LRC with parameters $[n,k,d]_q$, and let $C$ be the set defined in Eq. (\ref{setC}). Suppose there exists a subset $S\subseteq C$ such that $S=\mathcal{R}_1\cup \ldots \cup \mathcal{R}_{j_0}$ for some $j_0 \in \mathbb{N}$ and $\mathcal{R}_i\subseteq C$ for $i\in[j_{0}]$, satisfying
the following conditions:
\begin{itemize}
\item [(1)] Rank Constraint: $\rank(S)=k-1$;

\item [(2)] Incremental Rank Growth:  for all $i\in [j_0]$,
$$1\leq \rank(\cup_{j=1}^i \mathcal{R}_j)-\rank(\cup_{j=1}^{i-1} \mathcal{R}_j)\leq (|\cup_{j=1}^i \mathcal{R}_j|-|\cup_{j=1}^{i-1} \mathcal{R}_j|-(\delta-1)),$$
where $\mathcal{R}_0:=\emptyset$;

\item [(3)] Local Protection: Each $\mathcal{C}|_{\mathcal{R}_i}$ is a local protection code for $i\in [j_0]$.
    \end{itemize}
Then, the following statements hold:
\begin{itemize}
\item  $j_0=\lceil \frac{k}{r}\rceil-1$;

\item $\mathcal{C}|_{\mathcal{R}_1}$ is an MDS code with parameters $[n_1\leq (r+\delta-1),n_1-\delta+1,\delta]_{q}$;

\item $d(\mathcal{C})=n-|S|$.
    \end{itemize}
\end{lemma}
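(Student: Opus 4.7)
The plan is to derive Lemma \ref{Stru-lemm2} as a direct specialization of Lemma \ref{Stru-lemm1}, namely the case in which the ``leftover'' block of that lemma is chosen to be empty. Given the decomposition $S=\mathcal{R}_1\cup\cdots\cup\mathcal{R}_{j_0}$ satisfying hypotheses (1)--(3) of Lemma \ref{Stru-lemm2}, I would set $C_i:=\mathcal{R}_i$ for $i\in[j_0]$, take $i_0:=j_0$, and let the leftover index set $\{(j_1,\mathbf{c}_{j_1}),\ldots,(j_{s_{i_0+1}},\mathbf{c}_{j_{s_{i_0+1}}})\}$ of Lemma \ref{Stru-lemm1} be empty (so $s_{i_0+1}=0$). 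Then $S=C_1\cup\cdots\cup C_{i_0}$ is exactly of the form demanded by Lemma \ref{Stru-lemm1}, and the three hypotheses transfer verbatim: the rank condition $\rank(S)=k-1$ is identical, the auxiliary inequality $s_{i_0+1}\leq r-1$ holds trivially because $s_{i_0+1}=0$, and the incremental rank growth and local protection conditions are unchanged.

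Applying Lemma \ref{Stru-lemm1} then delivers all three conclusions at once: $i_0=\lceil k/r\rceil-1$ becomes $j_0=\lceil k/r\rceil-1$; the MDS characterization of $\mathcal{C}|_{C_1}$ becomes the MDS characterization of $\mathcal{C}|_{\mathcal{R}_1}$ with parameters $[n_1\leq r+\delta-1,\,n_1-\delta+1,\,\delta]_{q}$; and the distance equality $d(\mathcal{C})=n-|S|$ is preserved. The main obstacle is essentially non-existent, since Lemma \ref{Stru-lemm1} has already performed the delicate counting---bounding $\sum s_i$ from above via the Singleton-type equality achieved by the optimal $(r,\delta)$-LRC, and from below via the local-protection and rank-growth inequalities---in a framework that already admits $s_{i_0+1}=0$ as a legitimate boundary case.

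Should a self-contained proof be preferred, one can simply mirror the calculation of Lemma \ref{Stru-lemm1} with every sum $\sum_{i=1}^{i_0+1}$ replaced by $\sum_{i=1}^{j_0}$ and all references to $s_{i_0+1}$, $r_{i_0+1}$ dropped. The resulting chain of bounds---$\sum_{i=1}^{j_0} r_i=k-1$ from the rank constraint, the Singleton ceiling $\sum_{i=1}^{j_0} s_i\leq k-1+(\lceil k/r\rceil-1)(\delta-1)$ via $d(\mathcal{C})\leq n-|S|$, the local-protection floor $\sum_{i=1}^{j_0} s_i\geq k-1+j_0(\delta-1)$ via hypothesis (2), and the cardinality bound $\sum_{i=1}^{j_0} s_i\leq j_0(r+\delta-1)$---squeezes $j_0$ to $\lceil k/r\rceil-1$ from both sides. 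Equality in the incremental rank growth constraint at $i=1$ then forces $r_1=s_1-(\delta-1)$, which combined with the Singleton bound on $\mathcal{C}|_{\mathcal{R}_1}$ yields the MDS conclusion exactly as in Lemma \ref{Stru-lemm1}.
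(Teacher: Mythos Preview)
Your proposal is correct. Your primary approach—invoking Lemma \ref{Stru-lemm1} with the leftover block taken empty ($s_{i_0+1}=0$)—is valid and slightly more economical than what the paper does: the paper instead writes out the self-contained argument you sketch in your final paragraph, defining $s_i'$ and $r_i'$ and rerunning the same chain of inequalities from scratch. Nothing in the statement or proof of Lemma \ref{Stru-lemm1} requires the leftover block to be nonempty (the bounds $s_{i_0+1}\leq r-1$ and $r_{i_0+1}\leq s_{i_0+1}$ hold trivially at zero), so your reduction goes through; the paper's choice to repeat the computation is presumably for expository self-containment rather than mathematical necessity.
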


\begin{proof}
Similar to the proof of Lemma \ref{Stru-lemm1}, define
\begin{align}
s_i':=|\cup_{j=1}^i\mathcal{R}_{i}|-|\cup_{j=1}^{i-1}\mathcal{R}_{i}|,\;r_i':=\rank(\cup_{j=1}^i\mathcal{R}_{i})-\rank(\cup_{j=1}^{i-1}\mathcal{R}_{i}),\label{count1}
\end{align}
where $i\in [j_{0}]$ and $\mathcal{R}_0=\emptyset$. Since $\mathcal{C}|_{\mathcal{R}_i}$ is a local protection code for $i\in [j_{0}]$ and by using the condition (2), we obtain
\begin{numcases}{}
	 s_i' \leq r+\delta -1 \quad (i\in [j_{0}]), \label{iineq1} \\
	r_i' \leq s_i'-(\delta-1) \quad (i\in [j_{0}]). \label{iineq2}
\end{numcases}
Summing these inequalities yields
\begin{numcases}{}
	 \sum_{i=1}^{j_0} s_i' \leq j_0(r+\delta -1), \label{iineq3} \\
	\sum_{i=1}^{j_0} r_i' \leq \sum_{i=1}^{j_0} s_i'-j_0(\delta -1).  \label{iineq4}
\end{numcases}
Since $\rank(S)=k-1$ and $\rank(S)=\sum_{i=1}^{j_0} r_i'$ by definition, we obtain $\sum_{i=1}^{j_0} r_i'=k-1$. Combining this with Eq. \eqref{iineq4}, we derive
\begin{align}
k-1\leq \sum_{i=1}^{j_0} s_i'-j_0(\delta -1).\label{iineq5}
\end{align}
Combining this with Eq. \eqref{iineq3}, we have $k-1\leq j_0(r+\delta -1)-j_0(\delta -1)$. From this, we obtain
\begin{align}
j_0+1\geq \lceil \frac{k}{r}\rceil.\label{iineq6}
\end{align}
Since $\rank(S)=k-1$, we have $d(\mathcal{C})\leq n-\sum_{i=1}^{j_0} s_i'$ by Lemma \ref{pun}. Note that $\sum_{i=1}^{j_0} s_i'\geq (k-1)+j_0(\delta-1)$ by Eq. \eqref{iineq5}, thus
\begin{align}
d(\mathcal{C})\leq n-k+1-j_0(\delta-1).\label{iineq7}
\end{align}
Since $\mathcal{C}$ is an optimal $(r,\delta)$-LRC, we have $d(\mathcal{C})=n-k+1-(\lceil \frac{k}{r}\rceil-1)(\delta-1)$. Combining this with Eq. \eqref{iineq7}, the following inequality holds:
\begin{align}
	j_0+1\leq \lceil \frac{k}{r}\rceil. \label{iineq8}
\end{align}
Using Eqs. \eqref{iineq6} and \eqref{iineq8}, we conclude
\begin{align}
	j_0=\lceil \frac{k}{r}\rceil-1. \label{iineq9}
\end{align}

Then, we show that $\mathcal{C}|_{\mathcal{R}_1}$ is an MDS code with parameters $[n_1\leq (r+\delta-1),k_1,\delta]_{q}$. Since $\mathcal{C}$ is an optimal $(r,\delta)$-LRC and $d(\mathcal{C})\leq n-\sum_{i=1}^{j_0} s_i'$, we obtain $n-k+1-(\lceil \frac{k}{r}\rceil-1)(\delta-1)\leq n-\sum_{i=1}^{j_0} s_i'$. Hence,
\begin{align}
	\sum_{i=1}^{j_0} s_i'\leq k-1+(\lceil \frac{k}{r}\rceil-1)(\delta-1).\label{iineq10}
\end{align}
Using Eq. \eqref{iineq9}, the above inequality becomes
\begin{align}
	\sum_{i=1}^{j_0} s_i'\leq k-1+j_0(\delta-1).\label{iineq11}
\end{align}
Now, applying Eqs. \eqref{iineq11} and \eqref{iineq5}, we have
\begin{align}
	\sum_{i=1}^{j_0} s_i'=k-1+j_0(\delta-1).\label{iineq12}
\end{align}
Note that $k-1=\sum_{i=1}^{j_0} r_i'$, thus
\begin{align}
	\sum_{i=1}^{j_0} s_i'=\sum_{i=1}^{j_0} r_i'+j_0(\delta-1).\label{iineq13}
\end{align}
By Eq. \eqref{iineq2}, Eq. \eqref{iineq13} holds only if
\begin{align}
	r_i'=s_i'-(\delta-1)
\end{align}
for $i\in[j_{0}]$. In particular, $r_1'=s_1'-(\delta-1)$. Let $n_1$ be the length of $\mathcal{C}|_{\mathcal{R}_1}$, and let $k_1$ be the dimension of $\mathcal{C}|_{\mathcal{R}_1}$. By definition, $n_1=s_1'$ and $k_1=r_1'$. Since we have already established that $r_1'=s_1'-(\delta-1)$, we obtain $k_1=n_1-(\delta-1)$. Since $\mathcal{C}|_{\mathcal{R}_1}$ is a local protection code, we obtain $d(\mathcal{C}|_{\mathcal{R}_1})\geq \delta$. On the other hand, $d(\mathcal{C}|_{\mathcal{R}_1})\leq n_1-k_1+1\leq n_1-(n_1-(\delta-1))+1=\delta$ due to the Singleton bound, hence $d(\mathcal{C}|_{\mathcal{R}_1})=\delta$. From this, we derive that $\mathcal{C}|_{\mathcal{R}_1}$ is an MDS code with parameters $[n_1\leq (r+\delta-1),k_1,\delta]_{q}$.

By Eq. \eqref{iineq12}, $|S|=k-1+j_0(\delta-1)$. Since $j_0+1=\lceil \frac{k}{r}\rceil$ by Eq. \eqref{iineq9}, we obtain
\begin{align}
	|S|=k-1+(\lceil \frac{k}{r}\rceil-1)(\delta-1).
\end{align}
Hence, $n-|S|=n-k+1-(\lceil \frac{k}{r}\rceil-1)(\delta-1)=d(\mathcal{C})$. Therefore, we complete the proof. $\hfill\square$
\end{proof}

\vspace{8pt}

We now establish that any optimal $(r,\delta)$-LRC necessarily admits a decomposition structure as specified in either Lemma \ref{Stru-lemm1} or Lemma \ref{Stru-lemm2}.
\begin{lemma}\label{Stru-lemm3}
Let $\mathcal{C}$ be an optimal $(r,\delta)$-LRC with parameters $[n,k,d]_q$, and let $C$ be the set defined in Eq. (\ref{setC}). Then, $C$ must have one of the following two decompositions:\\
\underline{Case (I)}:
$$C=C_1\cup\ldots\cup C_{t-1}\cup \{(j_1,\mathbf{c}_{j_1}),\ldots,(j_{s_{t}},\mathbf{c}_{j_{s_t}})\}\cup U,$$
where $t=\lceil \frac{k}{r}\rceil$, $\{j_1,\ldots,j_{s_t}\}\subseteq [n]$, $C_i\subseteq C$ for $i\in[t-1]$, $\{j_1,\ldots,j_{s_t}\}$ is contained in some $T\subseteq C$ such that $\mathcal{C}|_{T}$ is a local protection code, and
\begin{itemize}
\item [(1)] Local Protection: Each $\mathcal{C}|_{C_i}$ is a local protection code for $i\in[t-1]$, and it is an MDS code with parameters $[n_i\leq (r+\delta-1),k_i,\delta]_{q}$;
\item [(2)] Rank-Size Equations: For all $i\in[t-1]$,
$$1\leq \rank(\cup_{j=1}^i C_j)-\rank(\cup_{j=1}^{i-1} C_j)=|\cup_{j=1}^i C_j|-|\cup_{j=1}^{i-1} C_j|-(\delta-1),$$
where $C_0:=\emptyset$;
\item [(3)] Terminal Rank:
$$\rank(\cup_{j=1}^{t-1} C_j\cup \{(j_1,\mathbf{c}_{j_1}),\ldots,(j_{s_{t}},\mathbf{c}_{j_{s_t}})\})=\rank(\cup_{j=1}^{t-1} C_j)+s_t=k-1;$$
\item [(4)] Disjoint Residual Set: $U\cap(C_1\cup\ldots\cup C_{t-1}\cup \{(j_1,\mathbf{c}_{j_1}),\ldots,(j_{s_{t}},\mathbf{c}_{j_{s_t}})\})=\emptyset$ and $|U|=d(\mathcal{C})$.
    \end{itemize}

\underline{Case (II)}:
$$C=C_1\cup\ldots\cup C_{t} \cup U,$$
where $t=\lceil \frac{k}{r}\rceil -1$, $C_i\subseteq C$ for $i\in[t]$, and
\begin{itemize}
\item [(1)] Local Protection: Each $\mathcal{C}|_{C_i}$ is a protected code for $i\in [t]$, and it is an MDS code with parameters $[n_i\leq (r+\delta-1),k_i,\delta]_{q}$;
\item [(2)] Rank-Size Equations: For all $i\in [t]$,
$$1\leq \rank(\cup_{j=1}^i C_j)-\rank(\cup_{j=1}^{i-1} C_j)=|\cup_{j=1}^i C_j|-|\cup_{j=1}^{i-1} C_j|-(\delta-1),$$
where $C_0:=\emptyset$;
\item [(3)] Terminal Rank: $\rank(C_1\cup\ldots\cup C_{t})=k-1$;
\item [(4)] Disjoint Residual Set: $U\cap (\cup_{j=1}^{t} C_j)=\emptyset$ and $|U|=d(\mathcal{C})$.
\end{itemize}
\end{lemma}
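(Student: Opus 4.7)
The plan is to construct the decomposition greedily and then invoke Lemma~\ref{Stru-lemm1} or Lemma~\ref{Stru-lemm2} to extract the stated structure. Initialize $S_0 = \emptyset$. At each step $i \ge 1$, if $\mathrm{rank}(S_{i-1}) < k-1$, pick a coordinate $\alpha_i \in [n]$ with $\mathbf{c}_{\alpha_i} \notin \mathrm{span}(S_{i-1})$ (which exists since $\mathrm{rank}(S_{i-1}) < k$), use the $(r,\delta)$-locality to obtain a local protection code $T_{\alpha_i}$ containing $\alpha_i$, and set $C_i := T_{\alpha_i}$. We iterate until either (a) $\mathrm{rank}(S_m) = k-1$ at some step $m$, yielding a Case~II decomposition with $t = m$; or (b) adding $C_m$ would push the rank past $k-1$, in which case we truncate $C_m$ to a partial subset $\{(j_1,\mathbf{c}_{j_1}),\ldots,(j_{s_t},\mathbf{c}_{j_{s_t}})\} \subseteq T_{\alpha_m}\setminus S_{m-1}$ whose columns are linearly independent modulo $\mathrm{span}(S_{m-1})$ and satisfy $\mathrm{rank}(S_{m-1}) + s_t = k-1$, yielding a Case~I decomposition with $t = m$. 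The leftover coordinates form $U$.

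To apply Lemma~\ref{Stru-lemm1} (Case~I) or Lemma~\ref{Stru-lemm2} (Case~II), we must verify three conditions: (i) $\mathrm{rank}(S) = k-1$, immediate by construction; (ii) each $\mathcal{C}|_{C_i}$ is a local protection code, by design; and (iii) the rank-size bound $1 \le r_i \le s_i - (\delta-1)$. The lower bound follows from the rank-increasing choice of $\alpha_i$. For the upper bound, we arrange the greedy so that each $T_{\alpha_i}$ is chosen disjoint from $S_{i-1}$; then $s_i = |C_i|$, and the Singleton bound applied to $\mathcal{C}|_{C_i}$ (using $d(\mathcal{C}|_{C_i}) \ge \delta$) gives $r_i \le \mathrm{rank}(C_i) \le |C_i| - (\delta-1) = s_i - (\delta-1)$.

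With the hypotheses verified, Lemma~\ref{Stru-lemm1} or \ref{Stru-lemm2} immediately furnishes the remaining structure: the number of full local protection codes equals $\lceil k/r \rceil - 1$ (so $t = \lceil k/r \rceil$ in Case~I and $t = \lceil k/r \rceil - 1$ in Case~II); the inequality $r_i \le s_i - (\delta-1)$ is actually an equality, producing the rank-size equations in condition~(2); $\mathcal{C}|_{C_1}$ is an MDS $[n_1 \le r+\delta-1,\,n_1-\delta+1,\,\delta]_q$ code; and $d(\mathcal{C}) = n - |S|$, so $|U| = d(\mathcal{C})$. The MDS property extends to every $\mathcal{C}|_{C_i}$ via a permutation argument: relabel the $C_j$'s so that $C_i$ is placed first and re-apply the lemma, which remains valid since the disjointness of the $C_j$'s is preserved under any reordering. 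In Case~I, the terminal partial set lies by construction inside $T_{\alpha_m}$, meeting the ``contained in some $T$'' requirement.

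The main obstacle is ensuring at every greedy step that a local protection code disjoint from $S_{i-1}$ is available. This is not guaranteed by the $(r,\delta)$-locality in isolation and must be established by leveraging the optimality of $\mathcal{C}$. Concretely, by Lemma~\ref{pun} together with the generalized Singleton-like bound, any $S$ with $\mathrm{rank}(S) = k-1$ satisfies $|S| \ge k-1+(\lceil k/r\rceil-1)(\delta-1)$; combining this lower bound with a counting argument for the incremental construction rules out configurations in which every candidate LPC containing a new coordinate overlaps nontrivially with $S_{i-1}$, so a disjoint choice always exists. A complementary fallback permitting linearly independent overlap, handled via the submodularity of rank, covers any residual edge cases and preserves the inequality $r_i \le s_i - (\delta-1)$.
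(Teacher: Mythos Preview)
Your overall strategy—build $S$ greedily from local protection codes until $\rank(S)=k-1$, then invoke Lemma~\ref{Stru-lemm1} or Lemma~\ref{Stru-lemm2}, and finally permute the $C_i$'s to promote ``$\mathcal{C}|_{C_1}$ is MDS'' to every $i$—is exactly the paper's approach (its Algorithm~1 and the paragraphs following it). Your choice of a rank-increasing coordinate $\alpha_i$ at each step is the right way to secure the lower bound $r_i\ge 1$, and your handling of the Case~I truncation (take $s_t$ columns of $T_{\alpha_m}$ independent modulo $S_{m-1}$, contained in the LPC $T_{\alpha_m}$) matches the paper's $T''$ argument.

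The gap is in how you secure the upper bound $r_i \le s_i - (\delta-1)$. You force each $T_{\alpha_i}$ to be \emph{disjoint} from $S_{i-1}$ and then attempt to justify that such a disjoint choice always exists via a ``lower bound $|S| \ge k-1+(\lceil k/r\rceil-1)(\delta-1)$'' for rank-$(k-1)$ sets $S$. This inequality is backwards: Lemma~\ref{pun} yields only the \emph{upper} bound $|S|\le n-d = k-1+(\lceil k/r\rceil-1)(\delta-1)$ whenever $\rank(S)\le k-1$, so the counting argument you sketch cannot be run as stated, and the phrase ``rules out configurations in which every candidate LPC \ldots\ overlaps'' has no force. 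Moreover, the paper does not impose disjointness at all—its algorithm simply sets $S_j \leftarrow S_{j-1}\cup S_j$ with arbitrary overlap—so your ``main obstacle'' is self-created. Your fallback clause (permitting overlap that is linearly independent, handled via submodularity of rank) is actually much closer to what is needed: writing $O_i=C_i\cap S_{i-1}$, supermodularity of nullity gives
\[
s_i-r_i \;=\; \mathrm{nullity}(S_{i-1}\cup C_i)-\mathrm{nullity}(S_{i-1}) \;\ge\; \mathrm{nullity}(C_i)-\mathrm{nullity}(O_i)\;\ge\;(\delta-1)-\mathrm{nullity}(O_i),
\]
so $r_i\le s_i-(\delta-1)$ holds whenever $O_i$ is independent. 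That is the direction to develop, not the disjointness detour.
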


\begin{proof}
Let $G=(\mathbf{c}_1,\ldots,\mathbf{c}_n)$ be a generator matrix of $\mathcal{C}$, and let $C=\{(1,\mathbf{c}_1),\ldots,(n,\mathbf{c}_n)\}$. Select any subset $\{(i_1,\mathbf{c}_{i_1}),\ldots,(i_k,\mathbf{c}_{i_k})\}\subseteq C$ with $\rank(\langle \mathbf{c}_{i_1},\ldots,\mathbf{c}_{i_k}\rangle )=k$. Execute the following iterative procedure to construct a subset $S\subseteq C$:
\begin{algorithm}[htbp]
\caption{Subset Selection for Rank Attainment} \label{alg:rank-attainment}
\begin{algorithmic}

\Statex
\State \textbf{P1.} Initialize $j \gets 1$, $S_0 \gets \emptyset$
\While{$\text{rank}(S_{j-1}) \leq k - 2$}
    \State \textbf{P2.} Pick $(i_j,\mathbf{ c}_{i_j}) \in C \setminus S_{j-1}$
    \State \textbf{P3.} Choose a local protection code $\mathcal{C}|_{S_j}$ for coordinate $i_j$
    \If{$\text{rank}(S_{j-1} \cup S_j) < k$}
        \State \textbf{P4.} $S_j \gets S_{j-1} \cup S_j$
      \Else
        \State Pick $T_j \subseteq S_j$ such that $\text{rank}(S_{j-1} \cup T_j) = k - 1$
        \State \textbf{P5.} $S_j \gets S_{j-1} \cup T_j$
    \EndIf
   \State \textbf{P6.} $j \gets j + 1$
\EndWhile
\end{algorithmic}
\end{algorithm}

Here, $\text{P}1, \ldots, \text{P}6$ are numbered for later use. Since $\rank(\langle \mathbf{c}_{i_1},\ldots,\mathbf{c}_{i_k}\rangle )=k$, the iterative process must terminate at some finite step $j$, yielding a final constructed set $S=S_j$. When the algorithm terminates via Step P5, the resulting set $S$ must admits  a decomposition:
$$S=C_1\cup \ldots\cup C_{i_0}\cup \{(j_1,\mathbf{c}_{j_1}),\ldots,(j_{s_{i_0+1}},\mathbf{c}_{j_{s_{i_0+1}}})\},$$
where $i_0,s_{i_0+1}\in \mathbb{N}$ and $\mathcal{C}|_{C_i}$ are local protection codes for $i\in[i_{0}]$.
By the termination Step P5, there exists $S_{i_0+1}\subseteq C$ satisfying:
\begin{itemize}
\item [(i)] $\mathcal{C}_{S_{i_0+1}}$ is a local protection code;
\item [(ii)] $\{(j_1,\mathbf{c}_{j_1}),\ldots,(j_{s_{i_0+1}},\mathbf{c}_{j_{s_{i_0+1}}})\}\subseteq S_{i_0+1}$;

\item [(iii)] For $T=\cup_{j=1}^{i_0} C_j\cup S_{i_0+1}$, we have $\rank(\mathcal{C}_{T})=k$.
\end{itemize}
The local protection property of $\mathcal{C}_{S_{i_0+1}}$ implies that
$$\rank(T)-\rank(\cup_{j=1}^{i_0} C_j)\leq r.$$
Consequently, there exists a subset $T''\subseteq S_{i_0+1}$ such that $|T''|\leq r$ and $\rank(\cup_{j=1}^{i_0} C_j\cup T'')=\rank(\mathcal{C}_{T})=k$.
By the construction of $S$, we may select $T''$ to contain the terminal subset, i.e., the following relation holds: $$\{(j_1,\mathbf{c}_{j_1}),\ldots,(j_{s_{i_0+1}},\mathbf{c}_{j_{s_{i_0+1}}})\}\subseteq T''.$$
The rank inequality
$$\rank(\cup_{j=1}^{i_0} C_j\cup \{(j_1,\mathbf{c}_{j_1}),\ldots,(j_{s_{i_0+1}},\mathbf{c}_{j_{s_{i_0+1}}})\})=k-1< \rank(\cup_{j=1}^{i_0} C_j\cup T'')=k$$
implies that $\{(j_1,\mathbf{c}_{j_1}),\ldots,(j_{s_{i_0+1}},\mathbf{c}_{j_{s_{i_0+1}}})\}$ is a proper subset of $T''$. Thus, $$|\{(j_1,\mathbf{c}_{j_1}),\ldots,(j_{s_{i_0+1}},\mathbf{c}_{j_{s_{i_0+1}}})\}|\leq |T''|-1\leq r-1.$$
By the definition of $s_{i_0+1}$, we have $s_{i_0+1}\leq r-1$. Next, we check that the set $S$ satisfies all the conditions of Lemma \ref{Stru-lemm1}.

The constructed set $S$ satisfies $\rank(S)=k-1$ due to the above algorithm. Since we have already known that $s_{i_0+1}\leq r-1$, the condition (1) of Lemma \ref{Stru-lemm1} holds. According to the above program's Step P3, the condition (2) of Lemma \ref{Stru-lemm1} holds. Since $\mathcal{C}|_{C_i}$ is a local protection code for $i\in[i_{0}]$, the condition (3) of Lemma \ref{Stru-lemm1} holds. Now, we can apply Lemma \ref{Stru-lemm1} to conclude:
\begin{itemize}
\item [(i)] $\mathcal{C}|_{C_1}$ is an MDS code with parameters $[n_1\leq (r+\delta-1),n_1-\delta+1,\delta]_{q}$;
\item [(ii)] The conditions (2)-(4) of this lemma hold.
\end{itemize}
Then, we post the proof that the condition (1) of this lemma holds.

When the algorithm terminates via Step P4, the resulting set $S$ must admits a decomposition:
\begin{align}
S=\mathcal{R}_1\cup \ldots \cup \mathcal{R}_{j_0}, \label{decom}
\end{align}
with some $j_0 \in \mathbb{N}$, where each $\mathcal{C}|_{\mathcal{R}_i}$ is a local protection code for $i\in [j_{0}]$, and $\rank(S)=k-1$, i.e., the condition (1) of Lemma \ref{Stru-lemm1} holds. Since Step P3 of the above program, the condition (2)-(3) of Lemma \ref{Stru-lemm1} hold. Hence, we can apply Lemma \ref{Stru-lemm1} to derive that the conditions (2)-(4) in Case (II) hold.

Next, we show that $\mathcal{C}|_{\mathcal{R}_i}$ is an MDS code with parameters $[n_i\leq (r+\delta-1),k_i,\delta]_{q}$ for $i\in[j_0]$. By Lemma \ref{Stru-lemm2},  $\mathcal{C}|_{\mathcal{R}_1}$ is an MDS code with parameters $[n_1\leq (r+\delta-1),n_1-\delta+1,\delta]_{q}$. Let $i\in [j_0]\backslash \{1\}$. Note that the choice of $S_1$ in the above program can be any subset $S_1\subseteq C$ that satisfies that $\mathcal{C}|_{S_1}$ is a local protection code. Therefore, we can assume that $\mathcal{R}_i$ corresponds to the set  $C_1$ in Case (I) or to the set $\mathcal{R}_1$ in Eq. (\ref{decom}). Since we have established that both $\mathcal{C}|_{C_1}$ and $\mathcal{C}|_{\mathcal{R}_1}$ are MDS codes with parameters of the form $[n_1\leq (r+\delta-1),k_1,\delta]_{q}$, it follows that $\mathcal{C}|_{\mathcal{R}_i}$ is also an MDS code with parameters $[n_1\leq (r+\delta-1),k_1,\delta]_{q}$, i.e., the condition (1) of this lemma holds.

Similarly, we can consider $C_i$ to be the set $C_1$ in Case (I) or the set $\mathcal{R}_1$ in Eq. (\ref{decom}). Given that we have already demonstrated that both the code $\mathcal{C}|_{C_1}$ in Case (I) and the code $\mathcal{C}|_{\mathcal{R}_1}$ are MDS codes with parameters of the form $[n_1\leq (r+\delta-1),k_1,\delta]_{q}$, it follows that $\mathcal{C}|_{C_i}$ is also an MDS code with parameters of the form $[n_i\leq (r+\delta-1),k_i,\delta]_{q}$. This fulfills the proof that the condition (1) of this lemma holds. Therefore, we complete the proof. $\hfill\square$
\end{proof}

\vspace{8pt}

Combining Lemmas \ref{Stru-lemm1}, \ref{Stru-lemm2} and \ref{Stru-lemm3}, we present the decomposition theorem for optimal $(r,\delta)$-LRCs in the following theorem.

\begin{theorem}(\emph{Decomposition Theorem})\label{Stru}
Suppose $\mathcal{C}$ is optimal $(r,\delta)$-LRC with parameters $[n,k,d]_q$. Let $C$ be the set defined in Eq. \eqref{setC}. Then, $C$ must have one of the following two decompositions:\\
\underline{Case (I)}:
$$C=C_1\cup\ldots\cup C_{t-1}\cup \{(j_1,\mathbf{c}_{j_1}),\ldots,(j_{s_{t}},\mathbf{c}_{j_{s_t}})\}\cup U,$$
where $t=\lceil \frac{k}{r}\rceil$, $\{j_1,\ldots,j_{s_t}\}\subseteq [n]$, $C_i\subseteq C$ for $i\in[t-1]$, $\{j_1,\ldots,j_{s_t}\}$ is contained in some $T\subseteq C$ such that $\mathcal{C}|_{T}$ is a local protection code, and
\begin{itemize}
\item [(1)] Local Protection: Each $\mathcal{C}|_{C_i}$ is a local protection code for $i\in [t-1]$, and it is an MDS code with parameters $[n_i\leq (r+\delta-1),k_i,\delta]_{q}$;
\item [(2)] Rank-Size Equations: For all $i\in [t-1]$,
$$1\leq \rank(\cup_{j=1}^i C_j)-\rank(\cup_{j=1}^{i-1} C_j)=|\cup_{j=1}^i C_j|-|\cup_{j=1}^{i-1} C_j|-(\delta-1),$$
where $C_0:=\emptyset$;
\item [(3)] Terminal Rank:
$$\rank(\cup_{j=1}^{t-1} C_j\cup \{(j_1,\mathbf{c}_{j_1}),\ldots,(j_{s_{t}},\mathbf{c}_{j_{s_t}})\})=\rank(\cup_{j=1}^{t-1} C_j)+s_t=k-1;$$
\item [(4)] Disjoint Residual Set: $U\cap(C_1\cup\ldots\cup C_{t-1}\cup \{(j_1,\mathbf{c}_{j_1}),\ldots,(j_{s_{t}},\mathbf{c}_{j_{s_t}})\})=\emptyset$ and $|U|=d(\mathcal{C})$.
    \end{itemize}

\underline{Case (II)}:
$$C=C_1\cup\ldots\cup C_{t} \cup U,$$
where $t=\lceil \frac{k}{r}\rceil -1$, $C_i\subseteq C$ for $i\in[t]$, and
\begin{itemize}
\item [(1)] Local Protection: Each $\mathcal{C}|_{C_i}$ is a local protection code for $i\in [t]$, and it is an MDS code with parameters $[n_i\leq (r+\delta-1),k_i,\delta]_{q}$;
\item [(2)] Rank-Size Equations: for all $i\in [t]$,
$$1\leq \rank(\cup_{j=1}^i C_j)-\rank(\cup_{j=1}^{i-1} C_j)=|\cup_{j=1}^i C_j|-|\cup_{j=1}^{i-1} C_j|-(\delta-1),$$
where $C_0:=\emptyset$;
\item [(3)] Terminal Rank: $\rank(C_1\cup\ldots\cup C_{t})=k-1$;
\item [(4)] Disjoint Residual Set: $U\cap (\cup_{j=1}^{t} C_j)=\emptyset$ and $|U|=d(\mathcal{C})$.
\end{itemize}
Conversely, assume that $\mathcal{C}$ is just a code with parameters $[n,k,d]_{q}$ that exhibits the decomposition of Case (I). If for every $(i,\mathbf{c}_{i})\in \{(j_1,\mathbf{c}_{j_1}),\ldots,(j_{s_t},\mathbf{c}_{j_{s_t}})\}\cup U$, $\mathcal{C}$ possesses $(r,\delta)$-locality for the $i$th symbol, then $\mathcal{C}$ is an optimal $(r,\delta)$-LRC. Similarly, if $\mathcal{C}$ has a decomposition of Case (II), and for every $(i,\mathbf{c}_{i})\in U$, $\mathcal{C}$ possesses $(r,\delta)$-locality for the $i$th symbol, then $\mathcal{C}$ is also an optimal $(r,\delta)$-LRC.
\end{theorem}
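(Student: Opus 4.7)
The forward direction is essentially already delivered by Lemma~\ref{Stru-lemm3}, which produces either a Case~(I) or a Case~(II) decomposition for any optimal $(r,\delta)$-LRC and verifies all four enumerated structural properties. So my only real task is the converse: given a code $\mathcal{C}$ of parameters $[n,k,d]_{q}$ that carries such a decomposition together with the stated residual locality hypotheses, I must deduce that $\mathcal{C}$ is an optimal $(r,\delta)$-LRC.

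For $(r,\delta)$-locality I would argue coordinate by coordinate. If $(i,\mathbf{c}_i)\in C_j$ for some $j$, then property~(1) of the decomposition gives that $\mathcal{C}|_{C_j}$ is an MDS code with parameters of the form $[n_j\leq r+\delta-1,\,n_j-\delta+1,\,\delta]_{q}$, which serves directly as a local protection code for coordinate $i$. The remaining coordinates, namely those lying in $\{j_1,\ldots,j_{s_t}\}\cup U$ in Case~(I) or in $U$ in Case~(II), are supplied with $(r,\delta)$-locality by the converse hypothesis. Thus every symbol of $\mathcal{C}$ has $(r,\delta)$-locality.

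For optimality I would perform a bookkeeping computation. In Case~(I), set $k_i:=\rank(\cup_{j=1}^i C_j)-\rank(\cup_{j=1}^{i-1}C_j)$. The Rank-Size Equations then force $|C_i|=k_i+(\delta-1)$, while the Terminal Rank condition gives $\sum_{i=1}^{t-1} k_i + s_t = k-1$. Since $C$ is the disjoint union $(C_1\cup\cdots\cup C_{t-1})\cup\{(j_1,\mathbf{c}_{j_1}),\ldots,(j_{s_t},\mathbf{c}_{j_{s_t}})\}\cup U$, summing cardinalities yields
\[
n \;=\; \sum_{i=1}^{t-1}|C_i| + s_t + |U| \;=\; (k-1) + (t-1)(\delta-1) + |U|.
\]
Combined with the structural equality $|U|=d(\mathcal{C})$ and $t=\lceil \frac{k}{r}\rceil$, this forces $d(\mathcal{C})=n-k+1-(\lceil \frac{k}{r}\rceil-1)(\delta-1)$, i.e., the Singleton-like bound is attained. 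Case~(II) proceeds identically with $t=\lceil \frac{k}{r}\rceil-1$ and no $s_t$ term, yielding the same value.

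The main obstacle is conceptual bookkeeping rather than a genuinely hard analytical step: one must interpret the Rank-Size and Terminal Rank conditions correctly and remember that $|U|=d(\mathcal{C})$ is built into the decomposition rather than something to be derived separately. Once these are in hand, the arithmetic is a one-line count and optimality follows immediately from the Singleton-like bound, so the converse reduces to invoking the hypotheses in the right order.
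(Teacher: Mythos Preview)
Your plan matches the paper's proof almost exactly: the forward direction is Lemma~\ref{Stru-lemm3}, and the converse is the same coordinate-by-coordinate locality check followed by the same cardinality count to recover $d(\mathcal{C})=n-k+1-(\lceil k/r\rceil-1)(\delta-1)$.

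One bookkeeping slip to fix: you assert that the Rank--Size Equations give $|C_i|=k_i+(\delta-1)$ and then write $n=\sum_{i=1}^{t-1}|C_i|+s_t+|U|$, implicitly treating $C_1,\ldots,C_{t-1}$ and the terminal piece as pairwise disjoint. The theorem statement does not guarantee this; only $U$ is asserted to be disjoint from the rest. What conditions~(2) and~(3) actually give you is
\[
|\textstyle\bigcup_{j=1}^i C_j|-|\bigcup_{j=1}^{i-1}C_j|=k_i+(\delta-1),
\]
which telescopes to $|\bigcup_{j=1}^{t-1}C_j|=(k-1-s_t)+(t-1)(\delta-1)$; the Terminal Rank condition then forces the $s_t$ extra indices to be new (each must raise the rank, hence cannot already lie in $\bigcup C_j$), so the union with them has size $(k-1)+(t-1)(\delta-1)$. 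From there your computation of $|U|$ goes through verbatim. The paper handles it exactly this way, computing $|C_1\cup\cdots\cup C_{t-1}\cup\{\cdots\}|$ directly rather than summing $|C_i|$.
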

\begin{proof}
Assume $\mathcal{C}$ is an optimal $(r,\delta)$-LRC with parameters $[n,k,d]_{q}$. By Lemma \ref{Stru-lemm3}, $\mathcal{C}$ must have a decomposition of either Case (I) or Case (II).

Reversely, assume that $\mathcal{C}$ is just a code with parameters $[n,k,d]_{q}$ that has a decomposition of Case (I), and that for every $(i,\mathbf{c}_{i})\in \{(j_1,\mathbf{c}_{j_1}),\ldots,(j_{s_t},\mathbf{c}_{j_{s_t}})\}\cup U$, $\mathcal{C}$ possesses $(r,\delta)$-locality for the $i$th symbol. Let $i\in [t-1]$. Since $\mathcal{C}|_{C_i}$ is an MDS code with parameters $[n_i\leq (r+\delta-1),k_i,\delta]$, it follows that $\mathcal{C}$ has $(r, \delta)$-locality for the $i$th symbol when using the local protection code $\mathcal{C}|_{C_i}$, where $i\in \{j:\;(j,\mathbf{c}_j)\in C_i\}$.

Since for every $(i,\mathbf{c}_{i})\in \{(j_1,\mathbf{c}_{j_1}),\ldots,(j_{s_t},\mathbf{c}_{j_{s_t}})\}\cup U$, $\mathcal{C}$ possesses $(r,\delta)$-locality for the $i$th symbol, it follows that $\mathcal{C}$ is an $(r, \delta)$-LRC.
By the conditions (2) and (3) in Case (I), the number of elements in $C_1\cup\ldots\cup C_{t-1}\cup \{(j_1,\mathbf{c}_{j_1}),\ldots,(j_{s_t},\mathbf{c}_{j_{s_t}})\}$ is $k-1+(\lceil \frac{k}{r}\rceil-1)(\delta-1)$. Thus, the size of $U$ is given by $$|U|=n-k+1-(\lceil \frac{k}{r}\rceil-1)(\delta-1).$$
From condition (4), we have $d(\mathcal{C})=|U|=n-k+1-(\lceil \frac{k}{r}\rceil-1)(\delta-1)$, which implies that $\mathcal{C}$ is an optimal $(r,\delta)$-LRC.

Similarly, one can show that $\mathcal{C}$ is an optimal $(r,\delta)$-LRC when $\mathcal{C}$ is just a code with parameters $[n,k,d]_{q}$ that has a decomposition in Case (II), and for every $(i,\mathbf{c}_{i})\in U$, $\mathcal{C}$ possesses $(r,\delta)$-locality for the $i$th symbol.

Therefore, we complete the proof. $\hfill\square$
\end{proof}

\begin{remark}
Prior literature has established decomposition theorems for the cases where $r|k$ or some other restrictive conditions
(see, e.g., \cite[Theorem 5]{Prakash2012} and \cite[Theorem 9]{Song2014}). However, the structural decomposition for general cases, particularly the critical case of $r|(k-1)$, remained unresolved. In Proposition \ref{Stru-prop1}, we shall prove that the decomposition Case (II) in Theorem \ref{Stru} can only emerge under the condition $r|(k-1)$.
\end{remark}

We now present concrete examples to demonstrate both decomposition cases from Theorem \ref{Stru}.

\begin{example}\label{Stru-example1}
Let $\mathbb{F}=\mathbb{F}_{q}$. Consider a $(3,2)$-LRC with the following generator matrix:
\begin{align*}
G=(\mathbf{c}_1,\ldots, \mathbf{c}_6)=
\begin{pmatrix}
1&0&1&0&0&0\\
0&1&1&0&0&1\\
0&0&0&1&0&1\\
0&0&0&0&1&1
\end{pmatrix}.
\end{align*}
This code exhibits:
\begin{itemize}
\item [(1)] Local protection codes:
\begin{itemize}
\item $\mathcal{C}|_{\{1,2,3\}}=\{\mathbf{c}_1,\mathbf{c}_2,\mathbf{c}_1+\mathbf{c}_2\}$;

\item $\mathcal{C}|_{\{2,4,5,6\}}=\{\mathbf{c}_2,\mathbf{c}_4,\mathbf{c}_5,\mathbf{c}_2+\mathbf{c}_4+\mathbf{c}_5\}$.

    \end{itemize}

\item [(2)] It is optimal with parameters $[6,4,2]_q$, verified through Lemma \ref{pun}.
    \end{itemize}
The decomposition aligns with Case (I) in Theorem \ref{Stru}:
$$C=C_1\cup \{(4,\mathbf{c}_4)\}\cup U,$$
where $C=\{(i,\mathbf{c}_i):\;i\in[6]\}$, $C_1=\{(1,\mathbf{c}_1),(2,\mathbf{c}_2),(3,\mathbf{c}_3)\}$ and $U=\{(5,\mathbf{c}_5),(6,\mathbf{c}_6)\}$.

Consider a $(2,2)$-LRC with the following generator matrix:
\begin{align*}
G=(\mathbf{c}_1,\ldots,\mathbf{c}_5)=
\begin{pmatrix}
1&0&1&0&0\\
0&1&1&0&1\\
0&0&0&1&1
\end{pmatrix}.
\end{align*}
This code exhibits:
\begin{itemize}
\item [(1)] Local protection codes:
\begin{itemize}
\item
$\mathcal{C}|_{\{1,2,3\}}=\{\mathbf{c}_1,\mathbf{c}_2,\mathbf{c}_1+\mathbf{c}_2\}$;

\item $\mathcal{C}|_{\{2,4,5\}}=\{\mathbf{c}_2,\mathbf{c}_4,\mathbf{c}_2+\mathbf{c}_4\}$.

    \end{itemize}

\item [(2)] It is optimal with parameters $[5,3,2]_q$, confirmed through Lemma \ref{pun}.
    \end{itemize}
The decomposition follows Case (II) in Theorem \ref{Stru}:
$$C=\mathcal{R}\cup U,$$
where $C=\{(i,\mathbf{c}_i):\;i\in[5]\}$, and $\mathcal{R}$ and $U$ are respectively given by
$$\mathcal{R}=\{(1,\mathbf{c}_1),(2,\mathbf{c}_2),(3,\mathbf{c}_3)\},\;U=\{(4,\mathbf{c}_4),(5,\mathbf{c}_5)\}.$$
\end{example}

\subsection{Rigidity of local protection codes}\label{sub3.2}
Building upon the framework established in Theorem \ref{Stru}, we now derive the following rigidity property for local protection codes, thereby strengthening the result of \cite[Theorem 5]{Prakash2012}.

\begin{theorem}\label{stru-coro1}
Every local protection code of an optimal $(r,\delta)$-LRC is an MDS code with parameters of the form $[n\leq (r+\delta-1),n+1-\delta,\delta]_{q}$.
\end{theorem}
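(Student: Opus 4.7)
The plan is to exploit the flexibility of the algorithmic construction in the proof of Lemma \ref{Stru-lemm3}. Let $\mathcal{C}|_S$ be any local protection code for some coordinate $\alpha \in [n]$. Although Lemma \ref{Stru-lemm3} and Theorem \ref{Stru} produce one particular decomposition of $\mathcal{C}$, I would argue that the algorithm's initial choices can be tailored so that this prescribed $\mathcal{C}|_S$ appears as the first block $C_1$ (or $\mathcal{R}_1$) of the decomposition. Once that is arranged, Theorem \ref{Stru} directly asserts that the first block is MDS with parameters of the form $[n_1 \leq r+\delta-1, n_1+1-\delta, \delta]_q$, which is exactly what the corollary demands.

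First I would dispose of the degenerate case $\lceil k/r \rceil = 1$, where $\mathcal{C}$ is itself an MDS code and the claim reduces to an elementary puncturing argument using $|S| \leq r+\delta-1$ together with $d(\mathcal{C}|_S) \geq \delta$. In the main case $\lceil k/r \rceil \geq 2$, equivalently $k > r$, I would first record the uniform bound
$$\rank(S) \;=\; \dim(\mathcal{C}|_S) \;\leq\; |S| - \delta + 1 \;\leq\; r \;<\; k,$$
obtained by combining the Singleton bound, the distance guarantee $d(\mathcal{C}|_S) \geq \delta$, and the locality bound $|S| \leq r + \delta - 1$.

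Next I would invoke the iterative procedure from the proof of Lemma \ref{Stru-lemm3} with the following initialization at iteration $j = 1$: in step P2, pick $(i_1, \mathbf{c}_{i_1}) := (\alpha, \mathbf{c}_\alpha) \in C \setminus S_0$, and in step P3 take the chosen local protection code to be $\mathcal{C}|_S$ itself. The inequality $\rank(S) < k$ just established guarantees that the algorithm routes through the benign branch P4 at iteration one, so that $S_1 = S$ and the first block of the resulting decomposition is $C_1 = S$ in Case (I), or $\mathcal{R}_1 = S$ in Case (II), of Theorem \ref{Stru}. Applying the theorem's conclusion to this first block yields the claim at once.

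The main point to verify is that the algorithm genuinely admits this initial choice, which is transparent from its description: step P2 allows any element of $C \setminus S_{j-1}$, and step P3 allows any local protection code for the chosen coordinate. The only potential obstruction is the P5 branch forcing a truncation $T_1 \subsetneq S$ at iteration one, but the strict inequality $k > r$ in the nondegenerate case rules this out, and this is really the sole place where the argument could go wrong.
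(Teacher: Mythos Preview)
Your proposal is correct and follows essentially the same approach as the paper: both argue that the algorithm in the proof of Lemma~\ref{Stru-lemm3} can be initialized with the given local protection code as its first block, so that Theorem~\ref{Stru} then forces $\mathcal{C}|_{C_1}$ (or $\mathcal{C}|_{\mathcal{R}_1}$) to be MDS with the stated parameters. Your write-up is in fact more explicit than the paper's, which simply asserts ``we can require that $C_1$ in Case~(I) is exactly $S_1$ due to the proof of Theorem~\ref{Stru}'' without spelling out the rank bound $\rank(S)\le r<k$ that guarantees the P4 branch at the first iteration.
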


\begin{proof}
Let $\mathcal{C}$ be an optimal $(r,\delta)$-LRC with a generator matrix $G=(\mathbf{c}_1,\ldots,\mathbf{c}_n)$, and let $C$ be the canonical set defined as
$$C=\{(1,\mathbf{c}_1),\ldots,(n,\mathbf{c}_n)\}.$$
Consider any local protection code $\mathcal{C}|_{S_1}$ associated with a subset $S_1\subseteq C$. By Theorem \ref{Stru}, $C$ decomposes into either Case (I) or Case (II). If $C$ decomposes into Case (I), we can require that $C_1$ in Case (I) is exactly $S_1$ due to the proof of Theorem \ref{Stru}. In this case, $S_1$ is an MDS code with parameters of the form $[n_1\leq (r+\delta-1),k_1,\delta]_{q}$ by Theorem \ref{Stru}. Similarly, if $C$ decomposes into Case (II), we can require that $C_1$ in Case (II) is exactly $S_1$. In this case, $S_1$ is also an MDS code with parameters of the form $[n_1\leq (r+\delta-1),k_1,\delta]_{q}$ by Theorem \ref{Stru}. This completes the proof. $\hfill\square$
\end{proof}

\vspace{8pt}

In fact, the decomposition in Case II of Theorem \ref{Stru} occurs exclusively when $r|(k-1)$. To the best of our knowledge, this decomposition has not been previously established. We now provide a detailed characterization of this scenario in the following proposition.

\begin{proposition}\label{Stru-prop1}
Let $\mathcal{C}$ be an optimal $(r,\delta)$-LRC with parameters $[n,k,d]_{q}$. If $\mathcal{C}$ admits a decomposition in Case (II) of Theorem \ref{Stru}, then $r$ divides $(k-1)$. Furthermore, each local protection code $\mathcal{C}|_{C_i}$ in this decomposition is an MDS code with parameters $[r+\delta-1,r,\delta]_{q}$.
\end{proposition}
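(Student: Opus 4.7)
The plan is to carry out a tight two-sided counting argument on $k-1$, using the Rank-Size Equations and Terminal Rank condition of Case (II) together with the MDS structure of each local protection code already established in Theorem \ref{Stru}.

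First I would introduce the familiar shorthand $s_i := |\cup_{j=1}^i C_j| - |\cup_{j=1}^{i-1} C_j|$ and $r_i := \rank(\cup_{j=1}^i C_j) - \rank(\cup_{j=1}^{i-1} C_j)$ for $i \in [t]$, where $t = \lceil k/r \rceil - 1$. The Rank-Size Equations in Case (II) give $r_i = s_i - (\delta-1)$, and the Terminal Rank condition together with telescoping gives $\sum_{i=1}^t r_i = k-1$. Next I would invoke Case (II)'s Local Protection clause: each $\mathcal{C}|_{C_i}$ is MDS with parameters $[n_i \leq r+\delta-1, k_i, \delta]_q$, hence $k_i = n_i - \delta + 1 \leq r$. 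Since $r_i$ is the rank gained by unioning (part of) $C_i$ with the previous union, submodularity of rank yields $r_i \leq \rank(C_i) = k_i \leq r$.

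The squeeze then writes itself. Summing the above inequalities gives
$$k - 1 \;=\; \sum_{i=1}^t r_i \;\leq\; \sum_{i=1}^t k_i \;\leq\; t r \;=\; (\lceil k/r \rceil - 1)\, r.$$
In the other direction, the elementary ceiling inequality $\lceil k/r \rceil < k/r + 1$ yields $(\lceil k/r \rceil - 1) r < k$, i.e., $(\lceil k/r \rceil - 1) r \leq k - 1$. Combining both bounds forces equality throughout: $k - 1 = (\lceil k/r \rceil - 1) r$, which gives $r \mid (k-1)$. Moreover, the chain of equalities forces $k_i = r$ for every $i \in [t]$, and hence $n_i = k_i + \delta - 1 = r + \delta - 1$, so each $\mathcal{C}|_{C_i}$ has parameters $[r+\delta-1, r, \delta]_q$, as claimed.

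The only mildly delicate points are the submodularity bound $r_i \leq k_i$ (which needs a single line noting that the rank increment of a union is bounded by the rank of the added set) and the arithmetic observation that $(\lceil k/r \rceil - 1) r \leq k-1$ always; both are routine, so I do not expect a genuine obstacle—the whole argument is a direct consequence of tightening the inequalities already present in the proof of Lemma \ref{Stru-lemm2}, once one observes that the upper bound $r_i \leq k_i \leq r$ and the lower bound coming from the ceiling must coincide.
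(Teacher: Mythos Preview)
Your proof is correct and follows essentially the same squeeze argument as the paper: both bound $k-1$ above by $tr$ (you via $r_i \leq k_i \leq r$, the paper via $s_i' \leq r+\delta-1$, which are equivalent through the Rank--Size Equations) and below by $tr$, forcing equality throughout. Your direct use of the ceiling inequality $(\lceil k/r\rceil-1)r \leq k-1$ is a bit cleaner than the paper's case analysis on $k \bmod r$, but the substance is the same.
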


\begin{proof}
Case 1: If $r=1$, then the conclusion holds trivially.

Case 2: Consider $r\geq 2$. Assume $C=C_1\cup \ldots\cup C_{t} \cup U$, where $t=\lceil \frac{k}{r}\rceil -1$ and $U\subseteq C$. Adopting the notation from Theorem \ref{Stru}, we define
$$s_i':=|\cup_{j=1}^{i}C_{j}|-|\cup_{j=1}^{i-1}C_{j}|,\;r_i':=\rank(\cup_{j=1}^{i}C_{j})-\rank(\cup_{j=1}^{i-1}C_{j}),$$
where $i \in [t]$ and $C_0=\emptyset$. By Eqs. \eqref{iineq4}, \eqref{iineq9} and \eqref{iineq12} in the proof of Theorem \ref{Stru}, we derive that
\begin{numcases}{}
	 \sum_{i=1}^{j_0} s_i \leq t(r+\delta -1), \label{c1} \\
	\sum_{i=1}^{j_0} s_i =k-1+t(\delta -1).  \label{c2}
\end{numcases}
Combining Eqs. \eqref{c1} and \eqref{c2}, we have
\begin{align}
	k-1+t(\delta -1)\leq t(r+\delta -1) \Rightarrow k-1\leq rt. \label{c3}
\end{align}
Let $k=rm+s$, where $m\in \mathbb{N}^+$ and $0\leq s\leq r-1$. Let us demonstrate that $s=1$.
\begin{itemize}
\item  If $s=0$, then $t=m-1$. Substituting it into Eq. \eqref{c3}, we have
$$rm-1\leq r(m-1),$$
which is impossible for $r\geq 2$. Hence, $s\geq 1$.
\item  For $s\geq 1$, we have $t=m+1$, and Eq. \eqref{c3} becomes
 $$rm+s-1\leq rm \Rightarrow s=1.$$
\end{itemize}
Thus, $k=rm+1$, i.e., $r|(k-1)$.

From Eq. \eqref{c2}, we obtain
$$\sum_{i=1}^{j_0} s_i' =k-1+t(\delta -1)=t(r+\delta -1),$$
where the last equality follows from $r|(k-1)$. By Eq. \eqref{iineq1}, this equality holds only if $s_i'=r+\delta -1 $ for all $i \in [t]$. Consequently, each $\mathcal{C}|_{C_i}$ is an MDS code with parameters $[r+\delta-1,r,\delta]_{q}$. $\hfill\square$
\end{proof}

\vspace{8pt}

Using a similar approach, we present an alternative method to recover the following result, originally established in \cite[Theorem~5]{Prakash2012}.

\begin{corollary}\label{stru-corox}
Let $\mathcal{C}$ be an optimal $(r,\delta)$-LRC with parameters $[n,k,d]_{q}$, where $r|k$. Then, every local protection code of $\mathcal{C}$ is an MDS code with parameters $[r+\delta-1,r,\delta]_{q}$.
\end{corollary}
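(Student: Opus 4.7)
The plan is to combine Theorem~\ref{Stru} with Proposition~\ref{Stru-prop1} and then push the rank arithmetic one step further to get the tight length $r+\delta-1$ under the extra hypothesis $r\mid k$. Following the strategy used in the proof of Theorem~\ref{stru-coro1}, I would first observe that any given local protection code $\mathcal{C}|_{S_{1}}$ may be taken to play the role of $C_{1}$ in the decomposition produced by Theorem~\ref{Stru}, since Algorithm~1 in the proof of Lemma~\ref{Stru-lemm3} is free to initialize with $S_{1}$. Thus it suffices to show that in the decomposition every block $\mathcal{C}|_{C_{i}}$ realizes the full length $r+\delta-1$.

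Next I would split into the two cases of Theorem~\ref{Stru}. In Case (II), Proposition~\ref{Stru-prop1} already gives $\mathcal{C}|_{C_{i}}$ of type $[r+\delta-1,r,\delta]_{q}$; moreover Case (II) forces $r\mid(k-1)$, which together with the hypothesis $r\mid k$ compels $r=1$, making the conclusion a direct restatement. The substantive work lies in Case (I) where $t=\lceil k/r\rceil=k/r$. Setting $s_{i}:=|C_{i}|-|\cup_{j<i}C_{j}|$ and $r_{i}:=\rank(\cup_{j\le i}C_{j})-\rank(\cup_{j<i}C_{j})$ as in the proofs of Lemmas~\ref{Stru-lemm1}--\ref{Stru-lemm3}, condition (2) of Case (I) gives $r_{i}=s_{i}-(\delta-1)\le r$, while condition (3) gives $\sum_{i=1}^{t-1}r_{i}+s_{t}=k-1$.

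The main obstacle is pinning down the terminal number $s_{t}$. For this I would exploit the extra structure in Case (I): the tail $\{(j_{1},\mathbf{c}_{j_{1}}),\ldots,(j_{s_{t}},\mathbf{c}_{j_{s_{t}}})\}$ sits inside some local protection code $\mathcal{C}|_{T}$, whose rank over $\cup_{j<t}C_{j}$ is at most $r$ (by the Singleton bound applied to $\mathcal{C}|_{T}$, which has length $\le r+\delta-1$ and distance $\ge \delta$). Since the overall rank must jump from $k-1$ to $k$ using coordinates in $T\setminus\{j_{1},\ldots,j_{s_{t}}\}$, the rank contribution of $T$ above $\cup_{j<t}C_{j}$ is at least $s_{t}+1$, giving $s_{t}\le r-1$. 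Combined with $\sum_{i=1}^{t-1}r_{i}\le(t-1)r=k-r$ and the identity $\sum_{i=1}^{t-1}r_{i}+s_{t}=k-1$, this forces $s_{t}=r-1$ and $r_{i}=r$ for every $i\in[t-1]$.

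Finally, $r_{i}=r$ together with condition (2) yields $s_{i}=r+\delta-1$, so each $\mathcal{C}|_{C_{i}}$ has length exactly $r+\delta-1$. Combined with the MDS conclusion already provided by Theorem~\ref{Stru} (or Theorem~\ref{stru-coro1}), this shows $\mathcal{C}|_{C_{i}}$ is $[r+\delta-1,r,\delta]_{q}$ MDS. Applying the flexibility argument to set $C_{1}=S_{1}$ then yields the claim for the arbitrary local protection code $\mathcal{C}|_{S_{1}}$, completing the proof.
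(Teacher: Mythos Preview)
Your proposal is correct and follows essentially the same approach as the paper: place the given local protection code as $C_{1}$ via the freedom in Algorithm~\ref{alg:rank-attainment}, dispose of Case~(II) through Proposition~\ref{Stru-prop1} (which forces $r=1$ under $r\mid k$), and in Case~(I) squeeze the inequalities $r_{i}\le r$ and $s_{t}\le r-1$ to equality using $\sum_{i<t}r_{i}+s_{t}=k-1=(t-1)r+(r-1)$. The paper runs the same pigeonhole with the size increments $s_{i}$ instead of the rank increments $r_{i}$ (citing $s_{t}\le r-1$ from Eq.~\eqref{ineq3} rather than re-deriving it), which is equivalent via $r_{i}=s_{i}-(\delta-1)$; note that your formula $s_{i}:=|C_{i}|-|\cup_{j<i}C_{j}|$ is a slip for $|\cup_{j\le i}C_{j}|-|\cup_{j<i}C_{j}|$.
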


\begin{proof}
Let $G=(\mathbf{c}_1,\ldots,\mathbf{c}_n)$ be a generator matrix of $\mathcal{C}$. Define the following indexed set $$C=\{(1,\mathbf{c}_1),\ldots,(n,\mathbf{c}_n)\}.$$
Proposition \ref{Stru-prop1} implies that $\mathcal{C}$ only decomposes into Case (I) of Theorem \ref{Stru}. Let $t=\frac{k}{r}$ and consider the following decomposition:
$$C=C_1\cup\ldots\cup C_{t-1}\cup \{(j_1,\mathbf{c}_{j_1}),\ldots,(j_{s_t},\mathbf{c}_{j_{s_t}})\}\cup U.$$
Adopting the notation from Theorem \ref{Stru}, we define
\begin{align*}
s_i:=|\cup_{j=1}^i C_j|-|\cup_{j=1}^{i-1} C_j|,\;r_i:=\rank(\cup_{j=1}^i C_j)-\rank(\cup_{j=1}^{i-1} C_j),\ i \in [t],
\end{align*}
with $C_0:=\emptyset$ and $C_{i_0+1}:=S$. By Eq. \eqref{ineq13}, we derive that
\begin{align}
\sum_{i=1}^{t}s_i=k-1+(t-1)(\delta-1).
\end{align}
Observe that
$$k-1+(t-1)(\delta-1)=(t-1)(r+\delta -1)+r-1.$$
Thus, the equation becomes
\begin{align}
\sum_{i=1}^{t}s_i=(t-1)(r+\delta -1)+r-1.
\end{align}
By Eqs. \eqref{ineq1} and \eqref{ineq3}, this equality holds if and only if
$$s_i = r+\delta -1,\;s_{t}=r-1,$$
for all $i\in [t-1]$. Consequently, each local protection code $\mathcal{C}|_{C_i}$ is an MDS code with parameters $[r+\delta-1,r,\delta]_{q}$. Since $\mathcal{C}$ decomposes exclusively into Case (I) of Theorem \ref{Stru} and the choice of $\mathcal{C}|_{C_1}$ is arbitrary among local protection codes, every local protection code of $\mathcal{C}$ is an MDS code with parameters $[r+\delta-1,r,\delta]_{q}$. $\hfill\square$
\end{proof}

\begin{remark}
A critical observation from the preceding proof is that any two distinct local protection codes $\mathcal{C}|_{S_1}$ and $\mathcal{C}|_{S_2}$ must satisfy either $S_1=S_2$ or $S_1\cap S_2=\emptyset$. This induces a partition of $C$:
$$C=\cup_{i=1}^m C_i,$$
where each $C_i$ satisfies that $\mathcal{C}|_{C_i}$ is a local protection code. Remarkably, this structural property recovers the foundational result in \cite[Theorem 9]{Song2014}, demonstrating consistency with the decomposition theorem presented in Theorem \ref{Stru}.
\end{remark}

Finally, we formalize a structural invariant called the mutual rigidity property of local protection codes, which characterizes their mutual constraints.

\begin{corollary}\emph{(Mutual Rigidity Constraints)}\label{rigidity}
Let $\mathcal{C}$ be an optimal $(r,\delta)$-LRC with parameters $[n,k,d]_{q}$. Suppose $\{C_i\subseteq C:\;i\in [m]\}$ are local protection codes satisfying the following conditions:
\begin{itemize}
\item [(1)] $\rank(\cup_{i=1}^m C_i)<k$;

\item [(2)]
$\rank(\cup_{j=1}^i C_j)-\rank(\cup_{j=1}^{i-1} C_j)\geq 1$ for all $i\in [m]\backslash \{1\}$.
\end{itemize}
Then, for all $i\in [m-1]$, the following rank-cardinality relation holds:
$$\rank(\cup_{j=1}^{i+1}C_i)-\rank(\cup_{j=1}^{i}C_i)=|\cup_{j=1}^{i+1}C_i|-|\cup_{j=1}^{i}C_i|-\delta-1.$$
\end{corollary}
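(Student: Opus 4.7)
The plan is to realize the given chain $C_1,\ldots,C_m$ as the initial segment of a full decomposition of $C$ guaranteed by Theorem \ref{Stru}, and then read off the desired rigidity identity directly from the Rank-Size Equations of that theorem. The vehicle for this embedding is the iterative procedure of Algorithm \ref{alg:rank-attainment} in the proof of Lemma \ref{Stru-lemm3}, which is permissive enough to let us prescribe the local protection codes it selects at each step.

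First I would verify that the algorithm can be steered through $C_1,\ldots,C_m$ in the given order. Starting from $S_0=\emptyset$, at iteration $j\leq m$ I would select $\mathcal{C}|_{C_j}$ as the local protection code in Step P3. This choice is legitimate because condition (2) yields $\rank(S_{j-1}\cup C_j)>\rank(S_{j-1})$ for $j\geq 2$, so some coordinate $i_j$ of $C_j$ must lie outside $S_{j-1}$ and can serve as the pivot required in Step P2 (for $j=1$ any coordinate of $C_1$ works). Condition (1), namely $\rank(\cup_{i=1}^m C_i)<k$, then ensures that Step P4 (rather than Step P5) fires at every iteration $j\leq m$, so that the whole block $C_j$, not a proper subset, is appended to $S_{j-1}$. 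After iteration $m$ the algorithm either terminates immediately (when $\rank(\cup_{i=1}^m C_i)=k-1$) or continues by picking further local protection codes until the rank reaches $k-1$.

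Upon termination, the resulting set $S$ has $\rank(S)=k-1$, and Theorem \ref{Stru} provides a decomposition of $C$ in either Case (I) or Case (II) whose first $m$ blocks are precisely the prescribed $C_1,\ldots,C_m$. Specializing the Rank-Size Equations of that decomposition to the indices $2,\ldots,m$ and reindexing yields, for every $i\in[m-1]$,
$$\rank(\cup_{j=1}^{i+1}C_j)-\rank(\cup_{j=1}^{i}C_j)=\big|\cup_{j=1}^{i+1}C_j\big|-\big|\cup_{j=1}^{i}C_j\big|-(\delta-1),$$
which is the claimed identity (interpreting the ``$-\delta-1$'' in the statement as the obvious typo for $-(\delta-1)$). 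The main obstacle is a bookkeeping one: one must verify that the freedom built into Steps P2 and P3 is sufficient to drive the algorithm along the prescribed chain without triggering Step P5 prematurely, and that each $C_j$ is a legitimate choice at its iteration. Conditions (1) and (2) are precisely designed to rule these issues out, so once those checks are made, the corollary follows as a direct specialization of Theorem \ref{Stru}.
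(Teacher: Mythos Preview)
Your proposal is correct and follows essentially the same approach as the paper. The paper's proof is a one-liner (``Combining Theorem \ref{Stru} and Lemma \ref{Stru-lemm1}, we obtain the desired result''), and the mechanism you spell out---steering Algorithm \ref{alg:rank-attainment} through the prescribed chain $C_1,\ldots,C_m$ and then reading off the Rank-Size Equations---is precisely the content that one-liner is gesturing at; you have simply unpacked it with the care the paper omits.
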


\begin{proof}
Combining Theorem \ref{Stru} and Lemma \ref{Stru-lemm1}, we obtain the desired result. $\hfill\square$
\end{proof}

\section{Construction method and general criterion for optimal quantum $(r,\delta)$-LRCs}\label{quantum}

\subsection{Optimal Quantum $(r,\delta)$-LRCs Construction}
Applying the decomposition theorem in Theorem \ref{Stru}, we can eliminate the condition $n-k\geq \lceil \frac{k}{r}\rceil(\delta-1)$ required in Lemma \ref{quanum-cl}. This leads to the following theorem.

\begin{theorem}\emph{(Optimal Quantum $(r,\delta)$-LRCs Construction)}\label{stru-corobound}
For an optimal $(r,\delta)$-LRC with parameters $[n,k,d]_{q^{2}}$ (resp. $[n,k,d]_{q}$), the following statements hold:
\begin{itemize}
\item [(1)] $n-k\geq \lceil \frac{k}{r}\rceil(\delta-1)$, i.e., $d\geq \delta$;

\item [(2)] If it is Hermitian dual-containing (resp. Euclidean dual-containing), then the induced quantum code is an optimal quantum $(r,\delta)$-LRC.
\end{itemize}
\end{theorem}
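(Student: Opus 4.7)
The plan is to prove part (1) by a direct local-restriction argument and then to deduce part (2) by invoking Lemma \ref{quanum-cl}. Observe first that the desired inequality $n-k \geq \lceil k/r \rceil(\delta-1)$ is algebraically equivalent to $d(\mathcal{C}) \geq \delta$ after substituting the optimality formula $d(\mathcal{C}) = n-k+1-(\lceil k/r \rceil - 1)(\delta-1)$. So the heart of the matter is the inequality $d(\mathcal{C}) \geq \delta$.

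For part (1), I would take any nonzero codeword $\mathbf{c}$ of $\mathcal{C}$ and fix any coordinate $i \in \text{Supp}(\mathbf{c})$. By the $(r,\delta)$-locality of $\mathcal{C}$, there exists a local protection code $\mathcal{C}|_{S_i}$ with $i \in S_i$, $|S_i| \leq r+\delta-1$, and $d(\mathcal{C}|_{S_i}) \geq \delta$. The restriction $\mathbf{c}|_{S_i}$ is a codeword of $\mathcal{C}|_{S_i}$, and it is nonzero because $i \in \text{Supp}(\mathbf{c}) \cap S_i$. Therefore
\[
\mathrm{wt}(\mathbf{c}) \geq \mathrm{wt}(\mathbf{c}|_{S_i}) \geq d(\mathcal{C}|_{S_i}) \geq \delta,
\]
which gives $d(\mathcal{C}) \geq \delta$. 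A simple rearrangement of the optimality relation then yields $n-k \geq \lceil k/r \rceil(\delta-1)$.

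For part (2), the hypotheses of Lemma \ref{quanum-cl} are now automatically met: the Hermitian (resp. Euclidean) dual-containing assumption supplies condition (1) of that lemma, while part (1) above supplies condition (2). Applying Lemma \ref{quanum-cl} therefore yields that the induced quantum code is an optimal quantum $(r,\delta)$-LRC. The main conceptual obstacle I anticipated was finding a way to discharge the extra inequality in Lemma \ref{quanum-cl} without imposing additional hypotheses on $\mathcal{C}$; the decomposition framework, and in particular Theorem \ref{stru-coro1} (which forces every local protection code to be MDS with distance exactly $\delta$), provides the structural backbone that guarantees this inequality in the optimal case. A fully decomposition-based alternative would be to tally the cardinalities $|C_i|$ and $|U|$ in Cases (I) and (II) of Theorem \ref{Stru}, but the direct codeword-restriction argument is both shorter and more transparent.
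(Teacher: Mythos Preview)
Your proof is correct, and it takes a genuinely different route from the paper's own argument. The paper establishes part~(1) by invoking the decomposition theorem (Theorem~\ref{Stru}): it extracts the local protection codes $\mathcal{C}|_{C_i}$ from a Case~(I) or Case~(II) decomposition, builds their parity-check matrices $H_i$, pads them with zeros to form $H_i'$, stacks these into a block matrix $H_t''$, and then argues that the $t(\delta-1)$ rows of $H_t''$ are linearly independent vectors in $\mathcal{C}^{\perp_{\mathrm{E}}}$, whence $n-k=\dim(\mathcal{C}^{\perp_{\mathrm{E}}})\geq t(\delta-1)$ with $t=\lceil k/r\rceil$. Your codeword-restriction argument bypasses all of this machinery: it shows $d(\mathcal{C})\geq\delta$ directly, and in fact for \emph{any} $(r,\delta)$-LRC, not just optimal ones. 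The equivalence with $n-k\geq\lceil k/r\rceil(\delta-1)$ then follows for optimal codes from the Singleton-type equality, exactly as you note. Both arguments derive part~(2) identically, by feeding part~(1) into Lemma~\ref{quanum-cl}.

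What each approach buys: your argument is shorter, self-contained, and reveals that the inequality $d\geq\delta$ is a property of $(r,\delta)$-locality alone, independent of optimality; the paper's argument, while heavier, exercises the decomposition framework and ties the result visibly to the structural theme of the paper. Your side remark invoking Theorem~\ref{stru-coro1} as ``structural backbone'' is not actually needed for the proof you wrote---the restriction argument stands on its own without any appeal to the MDS property of local protection codes.
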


\begin{proof}
Let $\mathcal{C}$ be an optimal $(r,\delta)$-LRC with parameters $[n,k,d]_{q^{2}}$ (resp. $[n,k,d]_{q}$). By Lemma \ref{quanum-cl}, it suffices to prove $n-k\geq \lceil \frac{k}{r}\rceil(\delta-1)$.

If $r=1$, then $n=k(\delta-1)$, directly yielding $n-k=\lceil \frac{k}{r}\rceil(\delta-1)$.

If $\lceil \frac{k}{r}\rceil=1$, then $\mathcal{C}$ is an MDS code, and the inequality trivially holds.

If $\lceil \frac{k}{r}\rceil>1$, then we will apply Theorem \ref{Stru} as follows. If $\mathcal{C}$ admits a decomposition of Case (I) in Theorem \ref{Stru}, i.e.,
$C=C_1\cup\ldots\cup C_{t-1}\cup \{(j_1,\mathbf{c}_{j_1}),\ldots,(j_{s_t},\mathbf{c}_{j_{s_t}})\}\cup U$ such that there exists a subset $T\subseteq C$ satisfying $\{(j_1,\mathbf{c}_{j_1}),\ldots,(j_{s_t},\mathbf{c}_{j_{s_t}})\}\subseteq T$, $\mathcal{C}|_{T}$ is a local protection code, and the conditions (1)-(4) of Case (I) in
Theorem \ref{Stru} hold. Denote $C_t=\mathcal{C}|_{T}$. Since $\mathcal{C}|_{C_i}$ is an MDS code with parameters $[n_i\leq (r+\delta-1),k_i,\delta]$, we can assume that its parity-check matrix to be a $(\delta-1)\times n_i$ matrix $H_i$. Define
\begin{align}
s_i:=|\cup_{j=1}^i C_j|-|\cup_{j=1}^{i-1} C_j|, \label{count3}
\end{align}
where $i\in [t]$ and $C_0:=\emptyset$. Reorder the columns of the generator matrix $G$ of the code $\mathcal{C}$ such that for any $ (j,c_j)\in C_i$, the index $j$ satisfies
$$j\leq \sum_{l=1}^i s_l.$$
By Case (I) of Theorem \ref{Stru}, we have $s_i\geq \delta$. Let $I_i=\{i\in [n]:\;(i,\mathbf{c}_i)\in C_i \}$. Let us construct the extended $(\delta-1)\times n$ matrix $H_i'$ as follows:
\begin{itemize}
\item For $j\in I_j$, the $j$th column of $H_i'$ equals to the $j$th column of $H_i$;

\item For $j\notin I_j$, the $j$th column of $H_i'$ equals to $\mathbf{0}$.
\end{itemize}

Define the block matrix as follows:
\begin{align}
H_i''=\begin{pmatrix}
H_1'\\
\vdots\\
H_i'
\end{pmatrix},
\end{align}
where $H_0'':=O_{(\delta-1)\times n}$ is the $(\delta-1)\times n$ zero matrix. By the definition of $H_i'$, the row vectors of $H_i'$ are contained in $\mathcal{C}^{\perp_\text{E}}$. Since $s_i\geq \delta$, the support condition ensures
\begin{align}
|\text{Supp}(H_{i+1}'')\backslash \text{Supp}(H_{i}'')|\geq \delta, \;\ 0\leq i\leq t. \label{supp}
\end{align}
The MDS property of $\mathcal{C}|_{C_i}$ guarantees that any collection of $\delta-1$ columns of $H_i$ is linearly independent. Combining this with Eq. \eqref{supp}, all rows of $H_t''$ are linearly independent. Since the space spanned by the row vectors of $H_t''$ is contained in $\mathcal{C}^{\perp_\text{E}}$, and $H_t''$ has $t(\delta-1)$ row vectors, we derive that $$\dim(\mathcal{C}^{\perp_\text{E}})\geq t(\delta-1)\Rightarrow n-k\geq t(\delta-1).$$

If $\mathcal{C}$ admits the decomposition of Case (II) in Theorem \ref{Stru}, a parallel argument applies, yielding the same inequality $n-k\geq t(\delta-1)$.

Therefore, we complete the proof. $\hfill\square$
\end{proof}

\vspace{8pt}

In fact, there exist optimal $(r,\delta)$-LRCs with parameters $[n,k,d]_{q}$ that achieves equality in the bound $n-k \geq \lceil \frac{k}{r} \rceil (\delta-1)$.
For instance, consider the linear code $\mathcal{C}$ with parameters $[4,2,2]_q$, whose generator matrix is given by
\begin{align*}
G=(\mathbf{c}_1,\mathbf{c}_1,\mathbf{c}_2,\mathbf{c}_2)=
\begin{pmatrix}
1&1&0&0\\
0&0&1&1
\end{pmatrix}.
\end{align*}
This code is an optimal $(1,2)$-LRC, achieving $n-k=\lceil \frac{k}{r}\rceil(\delta-1)=2$, thereby saturating the bound $n-k=\lceil \frac{k}{r}\rceil(\delta-1)$.

\subsection{General criterion for determining optimal quantum $(r,\delta)$-LRCs}

Based on the structural insights from the decomposition theorem in Theorem \ref{Stru}, we will present a complete characterization of a class of optimal $(r,\delta)$-LRCs in the following theorem. To this end, we first introduce the concept of minimal decomposition.

\begin{definition}\emph{(Minimal Decomposition)}\label{mini-dec}
Let $\mathcal{C}$ be an optimal $(r, \delta)$-LRC, and let $C$ be the set defined in Eq. \eqref{setC}. If $C$ admits a decomposition $C=C_1\cup \ldots\cup C_{t}$, where $t=\lceil \frac{k}{r}\rceil$, satisfying that $\mathcal{C}|_{C_i}$ is a local protection code for all $i \in [t]$, and $C_i\cap C_j=\emptyset$ for all $|i-j|> 1$, then we say that $\mathcal{C}$ admits a \emph{minimal decomposition}.
\end{definition}

\begin{theorem}\label{Stru-apply}
Let $\mathcal{C}$ be an optimal $(r,\delta)$-LRC. If $\mathcal{C}$ admits a minimal decomposition, then $\mathcal{C}$ possesses a parity-check matrix of the following form:
\begin{align}
H=\begin{pmatrix}
A_{1,1}& A_{1,2} & & && & &\\
& A_{2,1}&A_{2,2} & && & &\\
& &A_{3,1} &A_{3,2} && & &\\
& & & &\ddots & & &\\
&&& && A_{t-1,1}&A_{t-1,2}& \\
&&& && & A_{t,1}&A_{t,2} \\
B_{1}& B_{2}&\cdots &\cdots&\cdots&\cdots & B_{t}&B_{t+1}
\end{pmatrix}, \label{pari}
\end{align}
where
\begin{itemize}
\item[(i)] $A_{i,j}$ is an $(\delta-1)\times u_i$ matrix for some $u_i\in \mathbb{N}^+$;

\item[(ii)] $B_i$ is an $l \times v_i$ matrix for some $l,v_i\in \mathbb{N}^+$;

\item [(iii)] Each code $\mathcal{C}_{i}$ with a generator matrix $(A_{i,1},A_{i,2})$ is an MDS code with parameters $[n_i\leq (r+\delta-1),\delta-1,n_i-\delta+2]_{q}$;

\item[(iv)]  $d(\mathcal{C})=\delta+l$.
\end{itemize}
Conversely, any code $\mathcal{C}$ whose parity-check matrix satisfies the form given in Eq. (\ref{pari}) and meets conditions (i)-(iv) above must be an optimal $(r,\delta)$-LRC which admits a minimal decomposition.
\end{theorem}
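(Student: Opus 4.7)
My plan is to prove both directions by using the minimal decomposition to produce the specific parity-check shape, then running the argument in reverse for the converse.

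For the forward direction, I would begin with the minimal decomposition $C = C_{1} \cup \cdots \cup C_{t}$ with $t = \lceil k/r\rceil$ guaranteed by hypothesis. By Theorem \ref{stru-coro1}, each local protection code $\mathcal{C}|_{C_{i}}$ is MDS with parameters $[n_{i} \leq r+\delta-1,\, n_{i}-\delta+1,\, \delta]_{q}$, and hence admits a $(\delta-1) \times n_{i}$ parity-check matrix $H_{i}$ whose row space is itself an MDS code of dimension $\delta-1$ and distance $n_{i}-\delta+2$. The disjointness assumption $C_{i} \cap C_{j} = \emptyset$ for $|i-j|>1$ lets me reorder the coordinates so that each $C_{i}$ occupies a contiguous window and only consecutive windows $C_{i}, C_{i+1}$ share the overlap $C_{i} \cap C_{i+1}$. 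Extending each $H_{i}$ by zero columns to length $n$ then yields an $H_{i}'$ whose nonzero columns fall exactly in two consecutive column blocks, which is precisely the shape $(A_{i,1},A_{i,2})$ embedded in the $i$th row block of Eq.~\eqref{pari}.

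Next, I would verify the rank. Stacking $H_{1}', \ldots, H_{t}'$ produces $t(\delta-1)$ vectors in $\mathcal{C}^{\perp_{\mathrm{E}}}$, and by the support-plus-MDS argument already carried out in the proof of Theorem \ref{stru-corobound} (each new block brings at least $\delta$ fresh support coordinates and the MDS property forces $\delta-1$ of them to give independent columns), these rows are linearly independent. Since $\mathcal{C}$ is optimal, $\dim \mathcal{C}^{\perp_{\mathrm{E}}} = n-k = (d-1)+(t-1)(\delta-1)$, leaving a gap of $l := d-\delta \geq 0$ (nonnegativity from Theorem \ref{stru-corobound}). Picking any $l$ further rows extending the stack to a basis of $\mathcal{C}^{\perp_{\mathrm{E}}}$ supplies the bottom strip $(B_{1},\ldots,B_{t+1})$, and condition (iv) $d(\mathcal{C}) = \delta + l$ holds by construction. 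For the converse, I would take $H$ of the form \eqref{pari} satisfying (i)--(iv), set $\mathcal{C} := \ker H$, and define $C_{i}$ as the coordinates of column blocks $i$ and $i+1$. Since $(A_{i,1},A_{i,2})$ generates an MDS code, its row space is the parity-check space of the MDS code $[n_{i},\,n_{i}-\delta+1,\,\delta]_{q}$, so $\mathcal{C}|_{C_{i}}$ is a local protection code; the band shape of $H$ gives $C_{i}\cap C_{j}=\emptyset$ for $|i-j|>1$, and every symbol inherits $(r,\delta)$-locality from the $C_{i}$ containing it. Optimality then reduces to a dimension count: the independence of the $t(\delta-1)+l$ rows of $H$ (again from the support-MDS argument) yields $k = n - t(\delta-1) - l$, and combined with (iv) and $|C_{i}| \leq r+\delta-1$ this pins down $\lceil k/r\rceil = t$, so $d(\mathcal{C}) = \delta + l = n-k+1-(\lceil k/r\rceil-1)(\delta-1)$.

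The main obstacle I anticipate is the bookkeeping of the overlaps: verifying in the forward direction that the reordering really places the MDS parity blocks into the prescribed band pattern with consistent column widths $u_{i}$, $v_{i}$, and in the converse direction that $\lceil k/r \rceil = t$ exactly (rather than $t \pm 1$). Both hinge on the delicate interplay between the local code lengths $n_{i} \leq r+\delta-1$, the overlap sizes $|C_{i} \cap C_{i+1}|$, and the extra $l = d-\delta$ global parity checks; once this counting is pinned down, the rest is a direct assembly of the MDS building blocks guaranteed by Theorem \ref{stru-coro1} together with the independence argument from Theorem \ref{stru-corobound}.
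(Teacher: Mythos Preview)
Your proposal is correct and follows essentially the same route as the paper: use Theorem~\ref{stru-coro1} to get the MDS parity blocks, arrange them in a band via the disjointness hypothesis of the minimal decomposition, append $l=d-\delta$ additional rows to complete a basis of $\mathcal{C}^{\perp_{\mathrm{E}}}$, and run the count in reverse for the converse. The only cosmetic difference is that the paper justifies the rank of the stacked block directly (observing that the overlap $|C_i\cap C_{i+1}|$ is smaller than $n_i-(\delta-1)$) rather than by citing the support argument inside Theorem~\ref{stru-corobound}, and in the converse it invokes Lemma~\ref{local} rather than restating the local-protection verification; your flagged obstacle about confirming $\lceil k/r\rceil=t$ in the converse is handled just as tersely in the paper, so your caution there is well placed but does not indicate a divergence in method.
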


\begin{proof}
Let $\mathcal{C}$ be an optimal $(r,\delta)$-LRC with a minimal decomposition $C=C_1\cup \ldots\cup C_{t}$, where $t=\lceil \frac{k}{r}\rceil$, satisfying that $\mathcal{C}|_{C_i}$ is a local protection code for $i \in [t]$, and $C_i\cap C_j=\emptyset$ for all $|i-j|> 1$. By Corollary \ref{stru-coro1}, each $\mathcal{C}|_{C_i}$ is an MDS code with parameters $[n_i\leq (r+\delta-1),k_i,\delta]_{q}$. Let $H_i$ be an $(\delta-1)\times n_i$ parity-check matrix of $\mathcal{C}|_{C_i}$. The disjointness condition $C_i\cap C_j=\emptyset$ for all $|i-j|> 1$ allows us to partition each $H_i$ into block matrices $(A_{i,1},A_{i,2})$ such that the row vectors of the following matrix belong to $\mathcal{C}^{\perp_\text{E}}$:
\begin{align}
\begin{pmatrix}
\widehat{H_1}\\
\vdots\\
\widehat{H_t}
\end{pmatrix}=\begin{pmatrix}
A_{1,1}& A_{1,2} & & && & &\\
& A_{2,1}&A_{2,2} & && & &\\
& &A_{3,1} &A_{3,2} && & &\\
& & & &\ddots & & &\\
&&& && A_{t-1,1}&A_{t-1,2}& \\
&&& && & A_{t,1}&A_{t,2}
\end{pmatrix}, \label{arr}
\end{align}
where $A_{i,j}$ are $(\delta-1)\times u_i$ matrices.
Note that the number of repeated indices between $\widehat{H_i}$ and $\widehat{H_{i+1}}$ is less than $n_i - (\delta - 1)$. Consequently, the matrix given by  Eq. \eqref{arr} has rank $t(\delta - 1)$.
Extending it to the form given in Eq. \eqref{pari} by appending matrices $B_1,\ldots,B_t$ yields that
$$\rank(H)=t(\delta-1)+l.$$
Hence, the dimension of $\mathcal{C}$ is $k=n-t(\delta-1)-l$. Since $\mathcal{C}$ is an optimal $(r,\delta)$-LRC, the minimum distance satisfies that
\begin{align*}
d(\mathcal{C})&=n-k+1-(t-1)(\delta-1)\\
&=t(\delta-1)+l+1-(t-1)(\delta-1)\\
&=[t(\delta-1)-(t-1)(\delta-1)]+(l+1)\\
&=\delta+l.
\end{align*}

Conversely, suppose $\mathcal{C}$ has a parity-check matrix of the form Eq. \eqref{pari} with $d(\mathcal{C})=\delta+l$. By Lemma \ref{local},  $\mathcal{C}$ is an $(r,\delta)$-LRC. Substituting $d(\mathcal{C})=\delta+l$ into the equality
$$\delta+l=n-k+1-(\lceil \frac{k}{r}\rceil -1)(\delta-1),$$
we verify that $\mathcal{C}$ is an optimal $(r,\delta)$-LRC. Furthermore, the specific form of the parity-check matrix of $\mathcal{C}$ ensures the existence of a minimal decomposition. This completes the proof. $\hfill\square$
\end{proof}

\vspace{8pt}

According to Theorem \ref{Stru-apply}, we obtain the following corollary.

\begin{corollary}\label{stru-coro2}
Let $\mathcal{C}$ be an optimal $(r,\delta)$-LRC, and let $C$ be the set defined in Eq. (\ref{setC}). Suppose $C=C_1\cup \ldots\cup C_{t}$, where $t=\lceil \frac{k}{r}\rceil$, is a minimal decomposition. If $C_i\cap C_j=\emptyset$ for all $1\leq i\neq j\leq t$, then $\mathcal{C}$ possesses a parity-check matrix of the following form:
\begin{align}
H=\begin{pmatrix}
A_{1} & & &\\
& A_{2} &  &\\
&  & \ddots &\\
&  &  &A_t\\
B_{1}& B_{2} & \cdots &B_t
\end{pmatrix}, \label{pari-sim}
\end{align}
where
\begin{itemize}
\item[(i)] $A_{i,j}$ is an $(\delta-1)\times u_i$ matrix;

\item[(ii)] $B_i$ is an $l \times v_i$ matrix for some $l,u_i,v_i\in \mathbb{N}^+$;

\item [(iii)] Each code $\mathcal{C}_{i}$ with a generator matrix $(A_{i,1},A_{i,2})$ is an MDS code with parameters $[n_i\leq (r+\delta-1),\delta-1,n_i-\delta+2]_{q}$;

\item[(iv)]  $d(\mathcal{C})=\delta+l$.
\end{itemize}
Conversely, any code $\mathcal{C}$ whose parity-check matrix satisfies the form given in Eq. (\ref{pari}) and meets conditions (i)-(iv) above must be an optimal $(r,\delta)$-LRC.
\end{corollary}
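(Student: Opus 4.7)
The plan is to derive this corollary as an immediate specialization of Theorem \ref{Stru-apply}, using the stronger disjointness hypothesis $C_i\cap C_j=\emptyset$ for all $i\neq j$ (as opposed to only $|i-j|>1$) to collapse the staircase block structure in Eq.~\eqref{pari} to the block-diagonal structure in Eq.~\eqref{pari-sim}.

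First, I would note that the hypothesis here is strictly stronger than that of Theorem \ref{Stru-apply}, so that theorem applies directly and produces a parity-check matrix $H$ of the form Eq.~\eqref{pari}, where each pair $(A_{i,1},A_{i,2})$ is a parity-check matrix for the MDS local protection code $\mathcal{C}|_{C_i}$ (with parameters $[n_i\leq r+\delta-1,\,n_i-\delta+1,\,\delta]_q$ by Theorem \ref{stru-coro1}), and the column-span of $A_{i,2}$ coincides with that of $A_{i+1,1}$ precisely on the coordinate set $C_i\cap C_{i+1}$. Under the present full-disjointness hypothesis, $C_i\cap C_{i+1}=\emptyset$, so the ``overlap'' column blocks are empty and the staircase decouples: after reordering coordinates so that the columns indexed by $C_i$ are grouped together, the parity-check rows associated to $\mathcal{C}|_{C_i}$ are supported only on the $n_i$ columns of $C_i$. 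This yields $H$ in the block-diagonal form Eq.~\eqref{pari-sim}, in which each $A_i$ is a $(\delta-1)\times n_i$ parity-check matrix of the MDS code $\mathcal{C}|_{C_i}$, establishing conditions (i)--(iii). Condition (iv), namely $d(\mathcal{C})=\delta+l$, is inherited directly from Theorem \ref{Stru-apply} since neither the row count of the bottom block $(B_1,\ldots,B_t)$ nor the rank count of the matrix is affected by the coordinate reordering.

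For the converse direction, I would observe that any $H$ of the block-diagonal form Eq.~\eqref{pari-sim} satisfying (i)--(iv) is a special case of Eq.~\eqref{pari} (take every $A_{i,1}$ or $A_{i,2}$ corresponding to an empty overlap to be a zero-width block, and split the full MDS parity-check $A_i$ as $(A_{i,1},A_{i,2})$ trivially). Thus the converse part of Theorem \ref{Stru-apply} immediately yields that $\mathcal{C}$ is an optimal $(r,\delta)$-LRC admitting a minimal decomposition whose pieces $C_i$ are pairwise disjoint by construction of $H$.

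The main obstacle I anticipate is the bookkeeping in the forward direction: I must check that the coordinate reordering used to produce Eq.~\eqref{pari-sim} does not disturb the MDS structure of each $\mathcal{C}|_{C_i}$ nor the dimension and distance parameters established in Theorem \ref{Stru-apply}. This is essentially a permutation-equivalence argument, and once it is carried out carefully the collapse from the staircase to the block-diagonal form is a matter of reading off the relevant columns, rather than a substantive new computation.
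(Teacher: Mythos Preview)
Your proposal is correct and follows essentially the same approach as the paper: the paper's proof simply observes that the full disjointness condition $C_i\cap C_j=\emptyset$ for all $i\neq j$ forces the parity-check matrix to decouple into the block-diagonal form of Eq.~\eqref{pari-sim}, and then states that the remainder of the argument is identical to that of Theorem~\ref{Stru-apply}. Your additional remarks about coordinate reordering and permutation equivalence are harmless but unnecessary, since the paper treats this decoupling as immediate.
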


\begin{proof}
Let $\mathcal{C}$ be an optimal $(r,\delta)$-LRC satisfying $C=C_1\cup \ldots\cup C_{t}$, where $t=\lceil \frac{k}{r}\rceil$, $\mathcal{C}|_{C_i}$ is a local protection code for $i \in [t]$, and $C_i\cap C_j=\emptyset$ for all $1\leq i\neq j\leq t$. The disjointness condition $C_i\cap C_j=\emptyset$ for all $1\leq i\neq j\leq t$ ensures that the parity-check matrix decouples into the form specified in Eq. \eqref{pari-sim}.
The remainder of the proof then follows identically to the argument in Theorem \ref{Stru-apply}.
$\hfill\square$
\end{proof}

\begin{remark}\label{remark4}
The parity-check matrix with the form of Eq. (\ref{pari-sim}) appears frequently in constructions of optimal LRCs (see, e.g., \cite{Kong2021}).
In this work, we prove that for a class of optimal LRCs, the parity-check matrix must necessarily take this form.
To the best of our knowledge, the interleaved form of the parity-check matrix presented in Theorem \ref{Stru-apply} has not been reported in the prior literature.
\end{remark}

In the following theorem, we provide a full characterization of a class of optimal quantum $(r,\delta)$-LRCs, and establish a general criterion for these codes.

\begin{theorem}\label{Stru-quan}
If $\mathcal{C}$ is an optimal $(r,\delta)$-LRC that admits a minimal decomposition, and it induces an optimal quantum $(r,\delta)$-LRC,
then $\mathcal{C}$ possesses a parity-check matrix $H$ of the form in Eq. \eqref{pari} with the property that all codes generated by either
\begin{align}
G_i=\begin{pmatrix}
A_{i,1} &A_{i,2}\\
B_i & B_{i+1}
\end{pmatrix}\ \mathrm{or} \
G_j'=\begin{pmatrix}
A_{j,2}\\
A_{j+1,1}
\end{pmatrix}\label{hermi}
\end{align}
for $i\in [t]$ and $j\in [t-1]$, are either all Hermitian self-orthogonal or all Euclidean self-orthogonal.

Conversely, if $\mathcal{C}$ has such a parity-check matrix of the form in Eq. \eqref{pari} satisfying that all codes generated by Eq. \eqref{hermi} are all Hermitian self-orthogonal or all Euclidean self-orthogonal, then $\mathcal{C}$ is an optimal $(r,\delta)$-LRC that admits a minimal decomposition, and it induces an optimal quantum $(r,\delta)$-LRC.
\end{theorem}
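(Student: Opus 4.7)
My plan is to combine Theorem~\ref{Stru-apply} with Theorem~\ref{stru-corobound}(2). By Theorem~\ref{Stru-apply}, $\mathcal{C}$ admits a minimal decomposition if and only if it possesses a parity-check matrix $H$ of the form \eqref{pari}. By Theorem~\ref{stru-corobound}(2), $\mathcal{C}$ induces an optimal quantum $(r,\delta)$-LRC via the Hermitian (resp.\ CSS) construction if and only if $\mathcal{C}$ is Hermitian (resp.\ Euclidean) dual-containing, i.e., the row space of $H$ is Hermitian (resp.\ Euclidean) self-orthogonal. Hence the theorem reduces to the block-wise equivalence: the row space of $H$ is self-orthogonal $\iff$ each of $G_i$ and $G_j'$ generates a self-orthogonal code in the same sense.

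For the forward direction, I would start from a parity-check matrix of the form \eqref{pari} and expand the pairwise inner products of rows of $H$ block-by-block. The banded structure of \eqref{pari} reduces all inner products to four cases: (a) two rows of the same band $\widehat{H_i}$, contributing only through $(A_{i,1},A_{i,2})$; (b) two rows of adjacent bands $\widehat{H_i}$ and $\widehat{H_{i+1}}$, contributing only through the overlap of $A_{i,2}$ with $A_{i+1,1}$; (c) a row of $\widehat{H_i}$ and a bottom row, contributing through $(A_{i,1},A_{i,2})$ paired with $(B_i,B_{i+1})$; and (d) two bottom rows, contributing a global sum over all $B_k$. Cases (a) and (c) together repackage exactly into the self-orthogonality of $G_i$, while case (b) repackages into that of $G_j'$. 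The reverse direction is the symmetric check: given the self-orthogonality of every $G_i$ and $G_j'$, each pairwise inner product of rows of $H$ vanishes by hypothesis, so the row space of $H$ is self-orthogonal, $\mathcal{C}$ is dual-containing, and Theorem~\ref{stru-corobound}(2) then yields the induced optimal quantum $(r,\delta)$-LRC; the minimal decomposition is read off directly from \eqref{pari} via Theorem~\ref{Stru-apply}.

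The hard part will be that the sub-block conditions on $G_i$ and $G_j'$ are a priori strictly stronger than what the mere global self-orthogonality of the row space of $H$ yields. For example, $G_j'$-self-orthogonality demands that each of $A_{j,2}$ and $A_{j+1,1}$ individually generates a self-orthogonal code, while the global condition only yields the combined identity on the full band $(A_{i,1},A_{i,2})$; similarly, $G_i$-self-orthogonality forces a pairwise identity on $(B_i,B_{i+1})$, whereas case~(d) of the global bottom-bottom entry is only a telescoping sum over all $B_k$ which collapses only up to an alternating sign. To reconcile the two levels, the plan is to exploit the non-uniqueness of the parity-check matrix of the form \eqref{pari}: the rows within each band $\widehat{H_i}$, as well as the bottom rows $B_k$, can be rechosen by a non-singular row transformation, and one selects a basis in which each individual block's self-inner-product vanishes separately. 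The MDS rigidity provided by Corollary~\ref{stru-coro1} together with the structural constraints in Theorem~\ref{Stru-apply} should guarantee that such a basis choice is both available and compatible with the banded form \eqref{pari}, thereby rendering the sub-block and the global self-orthogonality conditions logically equivalent and completing the proof.
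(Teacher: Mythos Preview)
Your overall strategy is the same as the paper's: invoke Theorem~\ref{Stru-apply} for the parity-check form~\eqref{pari}, reduce quantum optimality to dual-containment via Definition~\ref{quan-LRC} and Theorem~\ref{stru-corobound}(2), and then argue that self-orthogonality of the row space of $H$ is equivalent to self-orthogonality of every $G_i$ and $G_j'$. The paper's own proof simply asserts both implications in one line; you go further and correctly isolate the sub-cases where the block conditions are a priori strictly stronger than the global one (the diagonal blocks of $G_j'$ and the bottom--bottom part of $G_i$).

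However, your proposed repair---exploiting the non-uniqueness of $H$ by rechoosing the row basis---cannot close the gap. Any row operation preserving the banded shape~\eqref{pari} acts on band $j$ as $A_{j,2}\mapsto M_jA_{j,2}$ for some invertible $M_j$, and the Gram block transforms as $A_{j,2}A_{j,2}^{*}\mapsto M_j\bigl(A_{j,2}A_{j,2}^{*}\bigr)M_j^{*}$; this vanishes after the change if and only if it vanished before. The same congruence-invariance applies to the Gram matrix of $(B_i,B_{i+1})$: row operations on the bottom block merely conjugate it, and adding band rows to bottom rows leaves it unchanged because the bands are already orthogonal to the bottom. Hence whether each sub-block is individually self-orthogonal is intrinsic to the column partition, not a matter of basis. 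The converse direction exhibits the obstruction concretely: the relations $\langle B_i\rangle+\langle B_{i+1}\rangle=0$ for $i\in[t]$ force $\langle B_k\rangle=(-1)^{k-1}\langle B_1\rangle$, so $\sum_{k=1}^{t+1}\langle B_k\rangle$ vanishes automatically only when $t$ is odd; for even $t$ one still needs $\langle B_1\rangle=0$, which the hypotheses do not supply. Closing the gap (in either direction) therefore requires something genuinely beyond row-basis freedom---for instance, a structural constraint on the individual blocks coming from the specific minimal decomposition, or a reformulation of which sub-block identities must actually be imposed.
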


\begin{proof}
Let $\mathcal{C}$ be an optimal $(r,\delta)$-LRC. Assume $C=C_1\cup \ldots\cup C_{t}$, where $t=\lceil \frac{k}{r}\rceil$, is a minimal decomposition. Suppose $\mathcal{C}$ induces an optimal quantum $(r,\delta)$-LRC. By Definition \ref{quan-LRC}, $\mathcal{C}$ must be Hermitian dual-containing or Euclidean dual-containing. Let $H$ be a parity-check matrix of $\mathcal{C}$ with the form in Eq. \eqref{pari}.
If $\mathcal{C}$ is Hermitian dual-containing, then the rows $\mathbf{n}_i$ and $\mathbf{n}_j$ of its parity-check matrix $H$ satisfy $\langle \mathbf{n}_i,\mathbf{n}_j\rangle_\text{H}=0$ for all $i,j$. This implies that the codes generated by the matrices in Eq. \eqref{hermi} are Hermitian self-orthogonal. Similarly, if $\mathcal{C}$ is Euclidean dual-containing, one can derive that the codes generated by the matrices in Eq. \eqref{hermi} are Euclidean self-orthogonal.

Conversely, suppose the codes with generator matrices given in Eq. \eqref{hermi} are Hermitian self-orthogonal (resp. Euclidean self-orthogonal), then one can check that $\mathcal{C}$ is Hermitian dual-containing (resp. Euclidean dual-containing). By Theorems \ref{stru-corobound} and \ref{Stru-apply}, $\mathcal{C}$ induces an optimal quantum $(r,\delta)$-LRC. $\hfill\square$
\end{proof}

\vspace{6pt}

According to Theorem \ref{Stru-quan}, we obtain the following corollary.

\begin{corollary}\label{quan-coro3}
Let $\mathcal{C}$ be an optimal $(r,\delta)$-LRC, and let $C$ be the set defined in Eq. (\ref{setC}). Suppose $C=C_1\cup \ldots\cup C_{t}$, where $t=\lceil \frac{k}{r}\rceil$, is a minimal decomposition. If $C_i\cap C_j=\emptyset$ for all $1\leq i\neq j\leq t$, then $\mathcal{C}$ induces an optimal quantum $(r,\delta)$-LRC if and only if its parity-check matrix $H$ defined in Eq. \eqref{pari-sim} satisfies the condition that all codes generated by
\begin{align}
G_i=\begin{pmatrix}
A_{i}\\
B_i
\end{pmatrix}
\label{hermi1}
\end{align}
for $i\in [t]$ are either all Hermitian self-orthogonal or all Euclidean self-orthogonal.
\end{corollary}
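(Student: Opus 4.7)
The plan is to derive this corollary as a direct specialization of Theorem \ref{Stru-quan} under the additional disjointness hypothesis $C_i\cap C_j=\emptyset$ for all $1\leq i\neq j\leq t$. First, I would invoke Corollary \ref{stru-coro2}, which guarantees that under this hypothesis the parity-check matrix $H$ of $\mathcal{C}$ takes the block-diagonal form in Eq. \eqref{pari-sim}. This form may be viewed as the degenerate instance of the interleaved form in Eq. \eqref{pari} in which each overlap block $A_{i,2}$ and $A_{i+1,1}$ has zero width, so the pair $(A_{i,1},A_{i,2})$ merges into the single block $A_i$ and the total number of column blocks reduces from $t+1$ to $t$.

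Next, I would translate the self-orthogonality conditions from Eq. \eqref{hermi} into this setting. In the pairwise-disjoint case, the matrix $G_j'$, whose columns correspond precisely to the overlap region between consecutive $C_j$ and $C_{j+1}$, becomes an empty matrix, so the corresponding self-orthogonality requirement is automatic and vacuous. The matrix $G_i$ of Eq. \eqref{hermi}, which spans two consecutive column blocks in the interleaved form, collapses to the single-block expression $G_i=\begin{pmatrix} A_i \\ B_i \end{pmatrix}$ of Eq. \eqref{hermi1}. Applying Theorem \ref{Stru-quan} then yields both directions of the biconditional: $\mathcal{C}$ induces an optimal quantum $(r,\delta)$-LRC if and only if all codes generated by the matrices in Eq. \eqref{hermi} are uniformly Hermitian or Euclidean self-orthogonal, which in the disjoint regime reduces exactly to the same condition on the matrices of Eq. \eqref{hermi1}.

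The only nontrivial check, and thus the main potential obstacle, is verifying that the reduction from Eq. \eqref{pari} to Eq. \eqref{pari-sim} under the disjointness assumption preserves the logical content of Theorem \ref{Stru-quan} without loss. Since the hypothesis $C_i\cap C_j=\emptyset$ for $i\neq j$ forces precisely the vanishing of the overlap columns, this identification is immediate from Corollary \ref{stru-coro2}, and both directions of the claimed equivalence follow directly from the corresponding directions of Theorem \ref{Stru-quan}.
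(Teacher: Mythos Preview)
Your proposal is correct and follows essentially the same approach as the paper: the paper's proof simply states that the result follows by applying Corollary \ref{stru-coro2} and Theorem \ref{Stru-quan}, which is exactly the specialization argument you outline. Your additional explanation of how the overlap blocks degenerate (making the $G_j'$ conditions vacuous and collapsing each $G_i$ to the single-block form) merely spells out the details implicit in the paper's one-line proof.
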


\begin{proof}
By applying Corollary \ref{stru-coro2} and Theorem \ref{Stru-quan}, the desired result is obtained. $\hfill\square$
\end{proof}

\begin{remark}\label{remark1}
Corollary \ref{stru-coro2}  and Theorem \ref{Stru-quan} collectively establish a systematic methodology for constructing optimal quantum $(r,\delta)$-LRCs. Specifically:
\begin{itemize}
\item Step 1: Designing the local protection codes $\mathcal{C}|_{C_i}$ as MDS codes with parameters $[n_i\leq r+\delta-1,\delta-1,n_i-\delta+2]_{q}$;

\item Step 2: Putting the parity-check matrices of $\mathcal{C}|_{C_i}$ together to obtain a matrix of the form given in Eq. (\ref{arr});

\item Step 3: Appending the matrices $B_1,\ldots, B_t$ to get a matrix of the form given in Eq. (\ref{pari}) that satisfies all conditions of Theorem \ref{Stru-quan}.
\end{itemize}
In Section \ref{construction}, we demonstrate this framework by explicit constructions of optimal quantum $(r,\delta)$-LRCs, validating both the feasibility and scalability of the approach.
\end{remark}

\section{Three infinite families of optimal quantum $(r,\delta)$-LRCs}\label{construction}

Using Theorem \ref{Stru-quan}, we construct in this section three infinite families of optimal quantum $(r,\delta)$-LRCs with distinct parameters $n,k,d, r,\delta$. These families extend beyond the optimal quantum $(r,\delta)$-LRCs given in \cite[Proposition 34]{Galindo2024}. The construction of optimal classical $(r,\delta)$-LRCs has attracted significant attention, and the first infinite family construction comes from \cite{Tamo2014}. Subsequent work has gained significant momentum (see, e.g., \cite{Rawat2013,Papailiopoulos2014,Ernvall2015,Tamo2016,Micheli2019,Kolosov2018,Kamath2014}). In the following content, we focus on constructing optimal quantum $(r,\delta)$-LRCs by leveraging optimal classical $(r,\delta)$-LRCs.

\subsection{The first family of optimal quantum $(r,\delta)$-LRCs}

\begin{theorem}\label{Opt-1}
Let $\mathbb{F}=\mathbb{F}_{q^{2}}$ with $q\geq 5$, and let $u,v,t\in \mathbb{N}^+$ satisfy $v\leq u$ and $u+v\leq \lfloor \frac{q-1}{2}\rfloor -1$. Fix primitive $(q-1)$th roots of unity $\omega_1,\ldots,\omega_t\in \mathbb{F}$. Define a linear code $\mathcal{C}$ with a parity-check matrix as follows:
\begin{align}\label{pari-1}
H=\begin{pmatrix}
A_{1} & & &\\
& A_{2} &  &\\
&  & \ddots &\\
&  &  &A_t\\
B_{1}& B_{2} & \cdots &B_t
\end{pmatrix}, 
\end{align}
where
\begin{align}
A_i=\begin{pmatrix}
1 & \omega_i & \cdots&  \omega_i^{q-2}\\
\vdots & \vdots & \vdots&  \vdots\\
1 & \omega_i^{u} & \cdots&  \omega_i^{u(q-2)}
\end{pmatrix}\ \mathrm{and}\ B_i=\begin{pmatrix}
1 & \omega_i^{u+1} & \cdots&  \omega_i^{(u+1)(q-2)}\\
\vdots & \vdots & \vdots&  \vdots\\
1 & \omega_i^{u+v} & \cdots&  \omega_i^{(u+v)(q-2)}
\end{pmatrix} \label{qpari-1}
\end{align}
for $i\in [t]$ and the blank entries in matrix $H$ are all zeros. Then, the following statements hold:
\begin{itemize}
\item [(1)] $\mathcal{C}$ is an optimal $(r,\delta)$-LRC with parameters $[t(q-1),t(q-1)-tu-v,u+v+1]_{q^2}$, where $(r,\delta)=(q-1-u,u+1)$.

\item [(2)] $\mathcal{C}$ is Hermitian dual-containing, inducing an optimal quantum $(r,\delta)$-LRC with parameters
\begin{align}
\bigg[\mspace{-4mu}\bigg[t(q-1),t(q-1)-2tu-2v, u+v+1\bigg]\mspace{-4mu}\bigg]_{q}. \label{quan-1}
\end{align}
\end{itemize}
\end{theorem}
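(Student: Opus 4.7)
The plan is to verify statement (1) by invoking Corollary \ref{stru-coro2} on the matrix $H$ of Eq.~\eqref{pari-1}, and then to verify statement (2) by checking Hermitian self-orthogonality of the rows of $H$ and applying Theorem \ref{stru-corobound}(2). Every computation relies on the fact that each $\omega_i$, as a $(q-1)$th root of unity, lies in $\mathbb{F}_q\subseteq\mathbb{F}_{q^2}$, so $\omega_i^q=\omega_i$.

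For Part (1), I first identify the code generated by $A_i$ as a Reed--Solomon code of length $q-1$ evaluated at $\{\omega_i^{j}\}_{j=0}^{q-2}$, which is MDS with parameters $[q-1,u,q-u]_{q^2}$; this matches condition (iii) of Corollary \ref{stru-coro2} with $\delta=u+1$ and $r=q-1-u$. Next, restricting $H$ to the columns of any single block yields $\bigl(\begin{smallmatrix} A_i\\ B_i\end{smallmatrix}\bigr)$, a $(u+v)\times(q-1)$ Vandermonde matrix of full row rank; since the $A_i$-rows have pairwise disjoint supports across blocks while the $B$-rows are shared, $\operatorname{rank}(H)=tu+v$ and $\dim(\mathcal{C})=t(q-1)-tu-v$. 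The decisive step is the minimum-distance computation. For any codeword $\mathbf{c}$, let $S_i$ denote its support in block $i$; the condition $H\mathbf{c}^{\top}=0$ forces linear dependence of the columns of $A_i$ indexed by $S_i$ for each nonempty $S_i$, which by the MDS property of $A_i$ implies $|S_i|\geq u+1$. Codewords supported on two or more blocks therefore have weight $\geq 2(u+1)>u+v+1$ (using $v\leq u$), while codewords supported on a single block must further annihilate the full $(u+v)$-row Vandermonde $\bigl(\begin{smallmatrix} A_i\\ B_i\end{smallmatrix}\bigr)$, forcing $|S_i|\geq u+v+1$; equality is attainable. Hence $d(\mathcal{C})=u+v+1$, and with $l=v$ Corollary \ref{stru-coro2} confirms that $\mathcal{C}$ is an optimal $(r,\delta)$-LRC with the stated parameters.

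For Part (2), I verify that every pair of rows of $H$ is Hermitian orthogonal, which is equivalent to $\mathcal{C}^{\perp_{\mathrm{H}}}\subseteq\mathcal{C}$. Because $\omega_i^q=\omega_i$, each such Hermitian inner product collapses to a geometric sum $\sum_{j=0}^{q-2}\omega_i^{mj}$, which vanishes precisely when $m\not\equiv 0\pmod{q-1}$. Four row pairings arise: two $A$-rows in the same block give $m=k_1+k_2$ with $k_1,k_2\in\{1,\ldots,u\}$; two $A$-rows in different blocks have disjoint supports and contribute zero automatically; an $A$-row paired with a $B$-row contributes only from the shared block, giving $m=k_1+u+k_2'$ with $k_1\leq u$, $k_2'\leq v$; and two $B$-rows yield $m=2u+k_1'+k_2'$ summed over all blocks. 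In every case $2\leq m\leq 2(u+v)$, and the hypothesis $u+v\leq\lfloor(q-1)/2\rfloor-1$ gives $m<q-1$, so no exponent hits a multiple of $q-1$ and all inner products vanish. Combining this Hermitian dual-containing property with Part (1) and invoking Theorem \ref{stru-corobound}(2) produces the optimal quantum $(r,\delta)$-LRC with parameters $[\mspace{-2mu}[t(q-1),\,t(q-1)-2tu-2v,\,u+v+1]\mspace{-2mu}]_q$.

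The main obstacle is the minimum-distance argument in Part (1), which requires a careful interplay between two competing structural constraints: the block-diagonal $A_i$'s lower-bound any per-block support by $u+1$, while the shared $B$-rows impose the stronger bound $u+v+1$ when exactly one block is active. The decisive inequality $2(u+1)>u+v+1$ (under $v\leq u$) is what makes the single-block codewords the genuine minimizers. The Hermitian check in Part (2) is comparatively mechanical but depends critically on the bound $u+v\leq\lfloor(q-1)/2\rfloor-1$ to prevent any geometric-sum exponent from wrapping modulo $q-1$.
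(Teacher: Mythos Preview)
Your proposal is correct and follows essentially the same route as the paper: you invoke Corollary~\ref{stru-coro2} to certify optimality once $d(\mathcal{C})=u+v+1$ is established, and you verify Hermitian self-orthogonality of the rows of $H$ via the geometric-sum identity together with the bound $2(u+v)<q-1$. The only cosmetic differences are that you phrase the minimum-distance step as a codeword-support argument (any nonempty block contributes weight $\geq u+1$, so two blocks already exceed $u+v+1$ by $v\le u$) whereas the paper argues the dual column-independence statement directly, and you finish by citing Theorem~\ref{stru-corobound}(2) rather than Corollary~\ref{quan-coro3}; both routes are equivalent here.
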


\begin{proof}
Let $A_i=(\omega_i^{j(l-1)})_{1\leq j\leq u,1\leq l\leq q-1}$ and let $B_i=(\omega_i^{(u+j)(l-1)})_{1\leq j\leq v,1\leq l\leq q-1}$, where $i\in[t]$. Then, the matrix $H$ defined in Eq. \eqref{pari-1} matches the form specified in Eq. \eqref{pari-sim} of Corollary \ref{stru-coro2}. To prove that $\mathcal{C}$ is an optimal $(r,\delta)$-LRC, it suffices to verify that the matrices $A_i$ and $B_i$ satisfy the conditions of Corollary \ref{stru-coro2}. By definition, $A_{i}$ is an $(\delta-1)\times (q-1)$ matrix, $B_i$ is an $v \times (q-1)$ matrix, and the code $\mathcal{C}_{i}$ generated by $A_{i}$ is an MDS code with parameters $[q-1,\delta-1,q-\delta+1]_{q^{2}}$, where $\delta=u+1$. Hence, we only need to verify that $d(\mathcal{C})=u+v+1$.

Obviously, the first $u+v+1$ columns of $H$ are linearly dependent. Thus, $d(\mathcal{C})\leq u+v+1$. Let $E=\{\mathbf{v}_i:\;i\in[t(q-1)]\}$ be the set of column vectors of $H$. For any subset $T\subseteq E$ with $|T|=u+v$, partition $T$ as $T=\cup_{j=1}^t T_j$, where $T_j\subseteq \{\mathbf{v}_i:\;(j-1)(q-1)+1\leq i\leq j(q-1)\}$. If $|T_1|\leq u$, then $T_1\notin \langle T_j:\;1< j\leq t\rangle$ due to the form of $H$. If $|T_1|>u$, then $|T_j|\leq u-1$ for $j>1$, since $|T|=u+v$ and $v\leq u$. Thus, $\cup_{j=2}^t T_j\notin \langle T_1\rangle$. Both cases ensure that all vectors in $T$ are linearly independent. Hence, any $u+v$ columns of $H$ are linearly independent. This implies $d(\mathcal{C})\geq u+v+1$. Therefore, $d(\mathcal{C})=u+v+1$.

Next, we use Corollary \ref{quan-coro3} to prove that the induced quantum code from $\mathcal{C}$ is an optimal quantum $(r,\delta)$-LRC. Let $1\leq i\leq tu+v$.
Denote by $\mathbf{n}_{i}$ the $i$th row vector of $H$. Then, we obtain
\begin{numcases}{} \langle\mathbf{n}_{(i-1)u+j},\mathbf{n}_{(i-1)u+j'}\rangle_{\mathrm{H}}=\sum_{l=1}^{q-1}
\omega_i^{(j+qj')(l-1)},\;1\leq i \leq t,1\leq j,j'\leq q-1, \label{inner1} \\	\langle\mathbf{n}_{(i-1)u+j},\mathbf{n}_{tu+j'}\rangle_{\mathrm{H}}=\sum_{l=1}^{q-1}
\omega_i^{(j+q(u+j'))(l-1)},\;1\leq i \leq t,1\leq j,j'\leq q-1, \label{inner2} \\
\langle\mathbf{n}_{tu+j},\mathbf{n}_{tu+j'}\rangle_{\mathrm{H}}=\sum_{i=1}^{t}\sum_{l=1}^{q-1}
\omega_i^{(u+j+q(u+j'))(l-1)},\;1\leq i \leq t,1\leq j,j'\leq v. \label{inner3}
\end{numcases}
Since $u+v\leq \lfloor \frac{q-1}{2}\rfloor -1$, it follows that $2(u+v)\leq q-3$. Thus, $\omega_i^{(u+j+q(u+j'))}\neq 1$ for $1\leq i \leq t$ and $1\leq j,j' \leq v$. Hence, $\langle\mathbf{n}_{tu+j},\mathbf{n}_{tu+j'}\rangle_{\mathrm{H}}=0$ by Eq. \eqref{inner3}. Similarly, we have $\langle\mathbf{n}_{(i-1)u+j},\mathbf{n}_{(i-1)u+j'}\rangle_{\mathrm{H}}=\langle\mathbf{n}_{(i-1)u+j},\mathbf{n}_{tu+j'}\rangle_{\mathrm{H}}=0$ for $1\leq i \leq t$ and $1\leq j,j'\leq q-1$ by Eqs. \eqref{inner1} and \eqref{inner2}. Since the set $\{\mathbf{n}_{(i-1)u+j}:\;i \in [t],1\leq j \leq q-1\}\cup \{\mathbf{n}_{tu+j}:\;i \in [t],1\leq j\leq v\}$ comprises all row vectors of $H$, we obtain $\langle\mathbf{n}_{i},\mathbf{n}_{j}\rangle_{\mathrm{H}}=0$ for any row vectors $\mathbf{n}_{i}$ and $\mathbf{n}_{j}$. Consequently, the codes generated by the matrices
\begin{align}
G_i=\begin{pmatrix}
A_{i}\\
B_i
\end{pmatrix},
\label{hermi1} i\in [t]
\end{align}
are Hermitian self-orthogonal. By Corollary \ref{quan-coro3}, $\mathcal{C}$ induces an optimal quantum $(r,\delta)$-LRC with parameters given by Eq. \eqref{quan-1}.
$\hfill\square$
\end{proof}

\begin{remark}\label{remark2}
The first family of optimal quantum $(r,\delta)$-LRCs constructed in Theorem \ref{Opt-1} has super-linear length, while the subsequent two families presented in Theorems \ref{Opt-2} and \ref{Opt-3} do not. This distinctive length property guarantees that there is no overlap in parameters between the first family and the latter two.
\end{remark}

Let us give an example to illustrate Theorem \ref{Opt-1}.

\begin{example}\label{quan-example1}
Let $\mathbb{F}=\mathbb{F}_{11^2}$. Set $u=v=t=2$, and let $\omega_1,\omega_2\in \mathbb{F}$ be primitive $10$th roots of unity. Let $\mathcal{C}$ be a linear code with the following parity-check matrix:
\begin{align*}
H=\begin{pmatrix}
1 & \omega_1 & \cdots&  \omega_1^{9}& 0& 0 & \cdots &0\\
1 & \omega_1^2 & \cdots&  \omega_1^{2\times 9}&0 & 0& \cdots &0\\
0& 0 & \cdots& 0& 1 & \omega_2 & \cdots&  \omega_2^{9}\\
0& 0 & \cdots& 0& 1 & \omega_2^2 & \cdots&  \omega_2^{2\times 9}\\
 1 & \omega_1^3 & \cdots&  \omega_1^{3\times 9}&  1 & \omega_2^3 & \cdots&  \omega_2^{3\times 9}\\
 1 & \omega_1^4 & \cdots&  \omega_1^{4\times 9}&  1 & \omega_2^4 & \cdots&  \omega_2^{4\times 9}
\end{pmatrix}.
\end{align*}
By Theorem \ref{Opt-1}, $\mathcal{C}$ is an optimal $(8,3)$-LRC with parameters $[20,14,5]_{11^2}$, inducing an optimal quantum $(8,3)$-LRC with parameters $[\mspace{-2mu}[20,8,5]\mspace{-2mu}]_{11}$.

Furthermore, we give a decomposition of Case (I) in Theorem \ref{Stru}. Let $G=(\mathbf{c}_1,\ldots,\mathbf{c}_{20})$ be a generator matrix of $\mathcal{C}$ satisfying $GH^\top=O_{14\times 6}$. Then, $G$ can be given by:
\begin{align}
\left (
\begin{array}{cccc cccc cccc cccc cccc}
1 & 0 & 0 & 0 & 0 & 0 & 0 & 0 & 8 & 5 & 0 & 0 & 0 & 0 & 0 & 0 & 1 & 8 & 8 & 9 \\
0 & 1 & 0 & 0 & 0 & 0 & 0 & 0 & 4 & 5 & 0 & 0 & 0 & 0 & 0 & 0 & 8 & 10 & 0 & 2 \\
0 & 0 & 1 & 0 & 0 & 0 & 0 & 0 & 4 & 1 & 0 & 0 & 0 & 0 & 0 & 0 & 4 & 7 & 4 & 4 \\
0 & 0 & 0 & 1 & 0 & 0 & 0 & 0 & 3 & 10 & 0 & 0 & 0 & 0 & 0 & 0 & 6 & 8 & 1 & 5 \\
0 & 0 & 0 & 0 & 1 & 0 & 0 & 0 & 8 & 8 & 0 & 0 & 0 & 0 & 0 & 0 & 7 & 7 & 2 & 6 \\
0 & 0 & 0 & 0 & 0 & 1 & 0 & 0 & 2 & 1 & 0 & 0 & 0 & 0 & 0 & 0 & 8 & 5 & 1 & 0 \\
0 & 0 & 0 & 0 & 0 & 0 & 1 & 0 & 3 & 8 & 0 & 0 & 0 & 0 & 0 & 0 & 10 & 0 & 8 & 3 \\
0 & 0 & 0 & 0 & 0 & 0 & 0 & 1 & 2 & 7 & 0 & 0 & 0 & 0 & 0 & 0 &0 &10 & 9 & 4 \\
0 & 0 & 0 & 0 & 0 & 0 & 0 & 0 & 0 & 0 & 1 & 0 & 0 & 0 & 0 & 0 & 1 & 8 & 5 & 3 \\
0 & 0 & 0 & 0 & 0 & 0 & 0 & 0 & 0 & 0 & 0 & 1 & 0 & 0 & 0 & 0 & 8 & 10 & 4 & 7 \\
0 & 0 & 0 & 0 & 0 & 0 & 0 & 0 & 0 & 0 & 0 & 0 & 1 & 0 & 0 & 0 & 4 & 7 & 8 & 5 \\
0 & 0 & 0 & 0 & 0 & 0 & 0 & 0 & 0 & 0 & 0 & 0 & 0 & 1 & 0 & 0 & 6 & 8 & 4 &4\\
0 & 0 & 0 & 0 & 0 & 0 & 0 & 0 & 0 & 0 & 0 & 0 & 0 & 0 & 1 & 0 & 7 & 7 & 10 &3\\
0 & 0 & 0 & 0 & 0 & 0 & 0 & 0 & 0 & 0 & 0 & 0 & 0 & 0 & 0 & 1 & 8 & 5 & 3 &1
\end{array}
\right ).
\end{align}
Partition the set $C=\{(i,\mathbf{c}_i):\;1\leq i\leq 20\}$ as follows. Let $C_1=\{(i,\mathbf{c}_i):\;1\leq i\leq 10\}$ and $U=\{(i,\mathbf{c}_{i}):\;16 \leq i \leq 20\}$. Then,
$$C=C_1\cup \{(11,\mathbf{c}_{11}),\ldots,(15,\mathbf{c}_{15})\}\cup U$$
constitutes a decomposition of Case (I) in Theorem \ref{Stru}.
\end{example}

\subsection{The second family of optimal quantum $(r,\delta)$-LRCs}

In the following theorem, we present the second family of optimal quantum $(r,\delta)$-LRCs.

\begin{theorem}\label{Opt-2}
Let $\mathbb{F}=\mathbb{F}_{q^{2}}$ with $q\geq 7$. Let $s,v,t\in \mathbb{N}^+$ satisfy
\begin{itemize}
\item  $v|(q-1),\;v\geq 6$ and $s\leq \lfloor \frac{v}{2}\rfloor-2$;

\item  $vt\leq q+v-s-3$.
\end{itemize}
Let $\mathcal{C}$ be a linear code with a parity-check matrix:
\begin{align}\label{qpari-2}
H=\begin{pmatrix}
A_{1} & & &\\
& A_{2} &  &\\
&  & \ddots &\\
&  &  &A_t\\
B_{1}& B_{2} & \cdots &B_t
\end{pmatrix},  
\end{align}
where
\begin{align}
A_i=
\left (
\begin{array}{cccc cccc cccc}
x_{i,0} & x_{i,1} & \cdots& x_{i,q-2}&y_0 &y_1&\cdots& y_{v-1} &  &  & & \\
\vdots & \vdots & \cdots& \vdots&\vdots &\vdots&\cdots& \vdots &  &  &  & \\
x_{i,0}^s & x_{i,1}^s & \cdots& x_{i,q-2}^s&y_0^s &y_1^s&\cdots& y_{v-1}^s&  &  & &\\
& & & & y_0 &y_1&\cdots& y_{v-1}&x_{i,0} & x_{i,1} & \cdots& x_{i,q-2} \\
& & & &\vdots &\vdots&\cdots& \vdots&\vdots & \vdots & \cdots& \vdots \\
& & & &y_0^s &y_1^s&\cdots& y_{v-1}^s & x_{i,0}^s & x_{i,1}^s & \cdots & x_{i,q-2}^s
\end{array}
\right )
\end{align}
and
\begin{align*}
B_i=(x_{i,0}^{s+1},\ldots,x_{i,q-2}^{s+1},y_0^{s+1},\ldots,y_{v-1}^{s+1},x_{i,0}^{s+1},\ldots,x_{i,q-2}^{s+1})
\end{align*}
for $i\in [t]$, and the blank entries in matrix $H$ are all zeros.
Here, $x_{i,j}=\lambda_i \omega_i^j$ for $i\in [t]$ and $0\leq j\leq q-2$, and $y_l=\zeta^l$ for $0\leq l\leq v-1$, where
$\omega_1,\ldots, \omega_t\in \mathbb{F}$ are primitive $(q-1)$th roots of unity, $\zeta \in \mathbb{F}$ is a primitive $v$th root of unity, and
$\lambda_i\in \mathbb{F}^\times \backslash \{\omega_i^j:\;1\leq j\leq q-1\}$ for $i\in [t]$.
Then, the following statements hold:
\begin{itemize}
\item [(1)] $\mathcal{C}$ is an optimal $(q+v-s-1,s+1)$-LRC with parameters $[t(2q+v-2),t(2q+v-2)-2st-1, s+2]_{q^2}$.

\item [(2)] $\mathcal{C}$ is Hermitian dual-containing, inducing an optimal quantum $(q+v-s-1,s+1)$-LRC with parameters
\begin{align}
\bigg[\mspace{-4mu}\bigg[t(2q+v-2),t(2q+v-2)-4st-2, s+2\bigg]\mspace{-4mu}\bigg]_{q}. \label{quan-2}
\end{align}
\end{itemize}
\end{theorem}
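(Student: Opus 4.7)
The plan is to verify directly that (i) $\mathcal{C}$ is an optimal $(r,\delta)$-LRC with $r = q+v-s-1$ and $\delta = s+1$, and (ii) $\mathcal{C}$ is Hermitian dual-containing, then invoke Theorem~\ref{stru-corobound} to conclude. First I will exhibit the $(r,\delta)$-locality via Lemma~\ref{local}: for each symbol position in block $i$, either the top $s$ rows or the bottom $s$ rows of $A_i$, restricted to their support of size $q+v-1 = r+\delta-1$, serve as the required local parity-check, because the $q+v-1$ evaluation points $\{x_{i,j}\}_{j=0}^{q-2} \cup \{y_l\}_{l=0}^{v-1}$ are distinct nonzero elements of $\mathbb{F}_{q^{2}}$ (guaranteed by $v \mid (q-1)$ together with $\lambda_i \notin \{\omega_i^j\}$), so any $s = \delta-1$ columns of the local matrix are linearly independent by the GRS property.

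Next I compute the dimension and distance. Full row-rank of $H$ follows from a polynomial argument: a hypothetical row-dependency, restricted to the first $q-1$ $x$-columns of block $i$, forces the polynomial $\gamma X^{s+1} + \sum_{a=1}^{s}\alpha_{i,a}X^a$ to vanish at the $q-1$ distinct points $x_{i,j}$; since $s+1 \leq \lfloor v/2\rfloor - 1 < v \leq q-1$, all $\alpha_{i,a}$ and $\gamma$ vanish, and a symmetric argument on the last $q-1$ $x$-columns yields $\beta_{i,a} = 0$. Hence $\dim\mathcal{C} = t(2q+v-2)-2st-1$. The bound $d(\mathcal{C}) \leq s+2$ is witnessed by any $s+2$ type-1 columns from a single block, whose nonzero restriction spans the $(s+1)$-dimensional monomial space $\langle X,\ldots,X^{s+1}\rangle$. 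For $d(\mathcal{C}) \geq s+2$, I will case-analyze any $s+1$ chosen columns: grouping them by block, the dependency restricted to each block's $A_i$-rows must separately vanish, and within a single block the MDS property of the two local codes eliminates coefficients whenever at most $s$ of the selected columns are of type~1 or type~2 and at most $s$ are of type~2 or type~3; the extreme configurations in which all $s+1$ columns are of a single type are resolved together with the $B$-row, producing a full-rank square Vandermonde.

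For classical optimality, the hypothesis $vt \leq q+v-s-3$ gives $(tv+1)/r < 1$ with $r=q+v-s-1$, so $\lceil k/r \rceil = 2t$, and substituting confirms $n-k+1-(\lceil k/r\rceil-1)(\delta-1) = 2st+2-(2t-1)s = s+2 = d(\mathcal{C})$. For Hermitian dual-containment, I will pair rows of $H$ and reduce each Hermitian inner product to character sums $\lambda_i^{a+qa'}\sum_{j=0}^{q-2}\omega_i^{(a+qa')j}$ and $\sum_{l=0}^{v-1}\zeta^{(a+qa')l}$. Rows from distinct blocks are automatically orthogonal on the $A$-part. Using $v \mid (q-1)$ and $q+1 \equiv 2 \pmod v$, the relevant exponents $a+a'$ (including $2(s+1)$ from the $B$-row self-pairing) satisfy $1 \leq a+a' \leq 2(s+1) \leq v-2 < q-1$ by the hypothesis $s \leq \lfloor v/2\rfloor -2$, so neither $v$ nor $q-1$ divides them, and both character sums vanish. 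Hence $\mathcal{C}^{\perp_{\mathrm{H}}} \subseteq \mathcal{C}$, and Theorem~\ref{stru-corobound} produces the claimed optimal quantum $(q+v-s-1, s+1)$-LRC.

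The main obstacle will be the distance lower bound, since three interacting column types within a single block---type-1 and type-3 $x$-columns supported disjointly in the top and bottom halves of $A_i$, together with type-2 $y$-columns shared by both halves---require a careful case analysis of how the $s+1$ columns distribute across the types. A secondary technical point is the modular exponent-tracking in the Hermitian inner products, where every row-pairing must be checked so that the relevant character sums are guaranteed to vanish.
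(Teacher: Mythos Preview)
Your proposal is correct and follows essentially the same route as the paper: establish $(r,\delta)$-locality via Lemma~\ref{local}, compute the dimension and verify $d(\mathcal{C})=s+2$, check $\lceil k/r\rceil=2t$ from the hypothesis $vt\le q+v-s-3$, and confirm Hermitian dual-containment through the same character-sum computation, using $s\le\lfloor v/2\rfloor-2$ to force $2\le a+a'\le v-2<q-1$. The only differences are cosmetic: the paper packages the optimality step through Theorem~\ref{Stru-apply} (verifying the parity-check form and $d=\delta+l$) and concludes via Theorem~\ref{Stru-quan}, whereas you verify the Singleton-type bound directly and invoke Theorem~\ref{stru-corobound}; and your case analysis for the distance lower bound is more explicit than the paper's, which simply refers back to ``the same technique as in Theorem~\ref{Opt-1}''.
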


\begin{proof}
By Lemma \ref{local}, $\mathcal{C}$ is an $(q+v-s-1,s+1)$-LRC.
For $i \in [t]$, set $(B_{2i-1},B_{2i})=B_i$ in Theorem \ref{Stru-apply}, where $B_i$ refers to the matrix defined in this theorem. To apply Theorem \ref{Stru-apply}, we will check the conditions of Theorem \ref{Stru-apply}. The only non-trivial condition is to show $\lceil\frac{k}{r}\rceil=2t$ and $d(\mathcal{C})=s+2$. Since $k=t(2q+v-2)-2st-1=2t(q+v-s-1)-(1+vt)$ and $1+vt\leq q+v-s-2$ by assumption, we have $\lceil\frac{k}{r}\rceil=2t$. Clearly, the first $s+2$ columns of $H$ are linearly dependent. Thus, $d(\mathcal{C})\leq s+2$. Since $\lambda_i\in \mathbb{F}^\times\backslash \{\omega_i^j:\;1\leq j\leq q-1\}$ ensures $x_{i,j}\neq y_l$ for all $1\leq i \leq t,\; 0\leq j\leq q-1,\;0\leq l\leq v-1$, and by using the same technique as in Theorem \ref{Opt-1}, any $s+1$ columns of $H$ are linearly independent. Therefore, $d(\mathcal{C})=s+2$.

Next, we prove that the induced quantum code from $\mathcal{C}$ is an optimal quantum $(r,\delta)$-LRC. Let $1\leq i\leq 2st+1$. Denote by $\mathbf{n}_{i}$ the $i$th row vector of $H$. Since $s\leq \lfloor \frac{v}{2}\rfloor-2$, and by computations analogous to those in Theorem \ref{Opt-1}, all Hermitian inner products satisfy $\langle\mathbf{n}_{i},\mathbf{n}_{j}\rangle_{\mathrm{H}}=0$. Consequently, the codes whose generator matrices are given in Eq. \eqref{hermi} are Hermitian self-orthogonal. By Theorem \ref{Stru-quan}, $\mathcal{C}$ induces an optimal quantum $(r,\delta)$-LRC with parameters specified in Eq. \eqref{quan-2}. $\hfill\square$
\end{proof}

\vspace{6pt}

Let us give an example to illustrate Theorem \ref{Opt-2}.

\begin{example}\label{quan-example2}
Let $\mathbb{F}=\mathbb{F}_{9^2}$ with a primitive element $g$. Set $v=8,\;t=1,\;s=2$, and let $\omega, \zeta \in \mathbb{F}$ be primitive $8$th roots of unity. Choose $\lambda=g$ in Theorem \ref{Opt-2}. Define
\vspace{-4pt}
\begin{align*}
x_{1,i}=x_i=\lambda \omega^{i}\ \mathrm{and}\ y_i=\zeta^{i}
\end{align*}
for $0\leq i\leq 7$. Then, the code $\mathcal{C}$ in Theorem \ref{Opt-2} has a parity-check matrix as follows:
\begin{align}
H=
\left (
\begin{array}{cccc cccc cccc}
x_0 & x_1 & \cdots& x_{7}&y_0 &y_1&\cdots& y_{7}&0 & 0 & \cdots&0 \\
x_0^2 & x_1^2 & \cdots& x_{7}^2&y_0^2 &y_1^2&\cdots& y_{7}^2&0 & 0 &\cdots & 0\\
0& 0& \cdots&0 &y_0 &y_1&\cdots& y_{7}&x_0 & x_1 & \cdots& x_{7}\\
0&0 & \cdots& 0&y_0^2 &y_1^2&\cdots& y_{7}^2&x_0^2 & x_1^2 & \cdots& x_{7}^2\\
x_0^3 & x_1^3 & \cdots& x_{7}^3&y_0^3 &y_1^3&\cdots& y_{7}^3&x_0^3 & x_1^3 & \cdots& x_{7}^3
\end{array}
\right ).
\end{align}
By Theorem \ref{Opt-2}, $\mathcal{C}$ is an optimal $(14,3)$-LRC with parameters $[24,19,4]_{9^2}$, inducing an optimal quantum $(14,3)$-LRC with parameters $[\mspace{-2mu}[24,14,4]\mspace{-2mu}]_{9}$.

Furthermore, let us give a decomposition of Case (I) in Theorem \ref{Stru}. The generator matrix $G=(\mathbf{c}_1,\ldots,\mathbf{c}_{24})$ of $\mathcal{C}$ satisfying $GH^\top=O_{19\times 5}$ has the following block structure:
\begin{align*}
\begin{pmatrix}
 I_{14}& A_{14\times 2} & B_{14\times 8} \\
O_{5\times 14}& O_{5\times 2} & D_{5\times 8}
\end{pmatrix},
\end{align*}
where $I_{14}$ is the $14\times 14$ identity matrix, $A_{14\times 2}$ is determined by
\begin{align}
(A_{14\times 2})^{\top}=
\left (
\begin{array}{cccc cccc cccccc}
g^{43}& g^{24} & g^{59} & g^{20} &g^{77} &g^{25} &g^{28} &g^{46} &g^{30} &g^{60} &g^{30} &g^{70} &g^{60} &2\\
g^{34}& g^{69} & g^{30} & g^{7} &g^{35} &g^{38} &g^{56} &g^{53} &g^{70} &2 &1 &g^{70} &g^{50} &g^{50}
\end{array}
\right ),
\end{align}
$B_{14\times 8}=(O_{14\times 5},B'_{14\times 3})$, where
\begin{align}
(B'_{14\times 3})^{\top}=
\left (
\begin{array}{cccc cccc cccccc}
g^{21}& g^{42} & g^{31} & g^{17} &g^{77} &g^{49} &1 &g^{39} &g^{67} &g^{57} &g^{57} &g^{77} &g^{37} &g^{7}\\
g^{73}& g^{44} & g^{18} & g^{66} &g^{35} &g^{12} &0 &g^{79} &g^{27} &g^{17} &g^{17} &g^{37} &g^{77} &g^{47}\\
g^{67}& g^{53} & g^{77} & g^{18} &g^{78} &g^{22} &g^{50} &g^{12} &g^{37} &g^{27} &g^{27} &g^{47} &g^{7} &g^{57}
\end{array}
\right ),
\end{align}
and $D_{5\times 8}=(I_{5},D'_{5\times 3})$, where
\begin{align*}
(D'_{5\times 3})^{\top}=\begin{pmatrix}
g^{20}& g^{10} & g^{10} & g^{30} &g^{70}\\
g^{10}& g^{70} & g^{20} & g^{30} &g^{10}\\
g^{30}& g^{30} & g^{50} & g^{10} &g^{60}
\end{pmatrix}.
\end{align*}
Let $C=\{(i,\mathbf{c}_i):\;1\leq i\leq 24\}$, $C_1=\{(i,\mathbf{c}_i):\;1\leq i\leq 16\}$, and $U=\{(i,\mathbf{c}_{i}):\;21 \leq i \leq 24\}$. Then,
$$C=C_1\cup \{(17,\mathbf{c}_{17}),\ldots,(20,\mathbf{c}_{20})\}\cup U$$
constitutes a decomposition of Case (I) in Theorem \ref{Stru}.
\end{example}

\subsection{The third family of optimal quantum $(r,\delta)$-LRCs}

In the following theorem, we present the third family of optimal quantum $(r,\delta)$-LRCs.
\begin{theorem}\label{Opt-3}
Let $\mathbb{F}=\mathbb{F}_{q^{2}}$ with $q\geq 9$, and let $s,v,t\in \mathbb{N}^+$ satisfy
\begin{itemize}
\item  $v|(q-1),\;v\geq 8$ and $s\leq \lfloor \frac{v}{2}\rfloor-3$;

\item  $vt\leq q+v-s-4$.
\end{itemize}
Let $\mathcal{C}$ be a linear code with a parity-check matrix:
\begin{align}
H=\begin{pmatrix}
A_{1} & & &\\
& A_{2} &  &\\
&  & \ddots &\\
&  &  &A_t\\
B_{1}& B_{2} & \cdots &B_t
\end{pmatrix},  \label{qpari-2}
\end{align}
where
\begin{align}
A_i=
\left (
\begin{array}{cccc cccc cccc}
x_{i,0} & x_{i,1} & \cdots& x_{i,q-2}&y_0 &y_1&\cdots& y_{v-1}& &  & & \\
\vdots & \vdots & \cdots& \vdots&\vdots &\vdots&\cdots& \vdots& &  & & \\
x_{i,0}^s & x_{i,1}^s & \cdots& x_{i,q-2}^s&y_0^s &y_1^s&\cdots& y_{v-1}^s& &  & & \\
& & & &y_0 &y_1&\cdots& y_{v-1}&z_{i,0} & z_{i,1} & \cdots& z_{i,q-2}\\
& & & &\vdots &\vdots&\cdots& \vdots&\vdots & \vdots & \cdots& \vdots\\
& & & &y_0^s &y_1^s&\cdots& y_{v-1}^s&z_{i,0}^s & z_{i,1}^s & \cdots& z_{i,q-2}^s
\end{array}
\right )
\end{align}
and
\begin{align}
B_i=
\left (
\begin{array}{cccc cccc c}
x_{i,0}^{s+1} & \cdots& x_{i,q-2}^{s+1}&y_0^{s+1} & \cdots& y_{v-1}^{s+1}&z_{i,0}^{s+1} & \cdots& z_{i,q-2}^{s+1} \\
x_{i,0}^{s+2} & \cdots& x_{i,q-2}^{s+2}&y_0^{s+2} & \cdots& y_{v-1}^{s+2}&z_{i,0}^{s+2} & \cdots& z_{i,q-2}^{s+2}
\end{array}
\right ) \label{matrix-B}
\end{align}
for $i\in [t]$ and the blank entries in matrix $H$ are all zeros. Here,
\begin{numcases}{}
	 x_{i,j}=\lambda_i \omega_i^j,\; z_i=\mu_i \omega_i^j\;(i\in [t],\;0\leq j\leq q-2), \\
	y_l=\zeta^l\;(0\leq l\leq v-1),\label{ineq2}
\end{numcases}
where $\omega_1,\ldots, \omega_t\in \mathbb{F}$ are primitive $(q-1)$th roots of unity, $\zeta \in \mathbb{F}$ is a primitive $v$th root of unity,
and $\lambda_i,\mu_i\in \mathbb{F}$ such that $\{\lambda_i,\mu_i,\frac{\mu_i}{\lambda_i}\}\subseteq  \mathbb{F}^\times \backslash \{\omega_i^j:\;1\leq j\leq q-1\}$ for $i\in [t]$. Then, the following statements hold:
\begin{itemize}
\item [(1)] $\mathcal{C}$ is an optimal $(q+v-s-1,s+1)$-LRC with parameters $[t(2q+v-2),t(2q+v-2)-2st-2,s+3]_{q^2}$.

\item [(2)] $\mathcal{C}$ is Hermitian dual-containing, inducing an optimal quantum $(q+v-s-1,s+1)$-LRC with parameters
\begin{align}
\bigg[\mspace{-4mu}\bigg[t(2q+v-2),t(2q+v-2)-4st-4, s+3\bigg]\mspace{-4mu}\bigg]_{q}. \label{quan-3}
\end{align}
\end{itemize}
\end{theorem}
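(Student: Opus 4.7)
The overall plan is to follow the blueprint used for Theorems \ref{Opt-1} and \ref{Opt-2}, combining Theorem \ref{Stru-apply} to certify that $\mathcal{C}$ is an optimal $(r,\delta)$-LRC admitting a minimal decomposition, with Theorem \ref{Stru-quan} to promote Hermitian self-orthogonality of the indicated row groups into the optimal quantum $(r,\delta)$-LRC statement. The new feature here is that each $B_i$ carries two rows (powers $s{+}1$ and $s{+}2$) and three column-orbits $\{x_{i,j}\},\{y_l\},\{z_{i,j}\}$ per block, so an extra layer of orthogonality checks and a slightly tighter arithmetic window $s\leq\lfloor v/2\rfloor-3$, $vt\leq q+v-s-4$ are required.

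First I would apply Lemma \ref{local} to the locality window of each block $A_i$; any $s$ nonzero columns inside such a window are linearly independent by a standard Vandermonde argument on $\{x_{i,j}\}\cup\{y_l\}$ or $\{y_l\}\cup\{z_{i,j}\}$, so $\mathcal{C}$ is an $(q+v-s-1,s+1)$-LRC. To fit the format of Theorem \ref{Stru-apply}, I would split each $B_i$ as $(B_{2i-1},B_{2i})$ so that $H$ takes the interleaved shape of Eq.~\eqref{pari} with $t'=2t$ local blocks, and the codes generated by the $(A_{i,1},A_{i,2})$ are MDS local protection codes of the required parameters. Counting rows gives $k=t(2q+v-2)-(2st+2)$, and $vt\leq q+v-s-4$ is exactly what forces $\lceil k/r\rceil=2t$.

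For the minimum distance, the bound $d\leq s+3$ is immediate: inside any single block, the $x$-columns are only touched by the $s$ top rows of $A_i$ together with the two rows of $B_i$, so any $s+3$ such columns are linearly dependent. For $d\geq s+3$, I would show that any $s+2$ columns of $H$ are linearly independent. When the chosen columns lie in a single block $i$, the relevant submatrix is a generalized Vandermonde system built from the pairwise distinct elements of $\{x_{i,j}\}\cup\{y_l\}\cup\{z_{i,j}\}$; the distinctness is guaranteed precisely by $\{\lambda_i,\mu_i,\mu_i/\lambda_i\}\subseteq\mathbb{F}^{\times}\setminus\{\omega_i^j:1\leq j\leq q-1\}$. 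The cross-block case is handled by combining this with the block-diagonal structure of $A_1,\ldots,A_t$. For Hermitian orthogonality, writing $\mathbf{n}_m$ for the $m$th row of $H$, every pairwise inner product $\langle\mathbf{n}_m,\mathbf{n}_{m'}\rangle_{\mathrm{H}}$ decomposes into sums of the form
\begin{align*}
\lambda_i^{a+qb}\sum_{j=0}^{q-2}\omega_i^{j(a+qb)},\qquad \mu_i^{a+qb}\sum_{j=0}^{q-2}\omega_i^{j(a+qb)},\qquad \sum_{l=0}^{v-1}\zeta^{l(a+qb)},
\end{align*}
with exponents $a,b\in\{1,\ldots,s+2\}$. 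Since $v\mid q-1$ yields $q\equiv 1\pmod{q-1}$ and $q\equiv 1\pmod{v}$, each such sum vanishes precisely when $a+b\not\equiv 0$ modulo $q-1$ and $v$ respectively; the assumptions $s\leq\lfloor v/2\rfloor-3$ and $vt\leq q+v-s-4$ guarantee $2\leq a+b\leq 2(s+2)\leq v-2\leq q-3$, so all three sums vanish identically.

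The main technical obstacle I anticipate is the verification of $d(\mathcal{C})\geq s+3$: because there are now three orbits $\{x_{i,j}\},\{y_l\},\{z_{i,j}\}$ per block and two extra rows $s{+}1,s{+}2$ contributed by $B_i$, the bookkeeping of which rows actually touch the chosen $s+2$ columns is substantially more delicate than in Theorems \ref{Opt-1} and \ref{Opt-2}, and one must carefully combine the local generalized-Vandermonde argument with the cross-block support pattern of $H$. Once this is in place, Theorem \ref{Stru-apply} delivers statement~(1), Theorem \ref{stru-corobound} converts Hermitian dual-containment into optimal quantum $(r,\delta)$-local recoverability, and Theorem \ref{Stru-quan} pins down the resulting quantum code with the parameters in Eq.~\eqref{quan-3}.
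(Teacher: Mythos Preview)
Your proposal is correct and follows essentially the same approach as the paper's proof: apply Lemma~\ref{local} for locality, split each $B_i$ as $(B_{2i-1},B_{2i})$ to match the interleaved format of Theorem~\ref{Stru-apply} with $2t$ local blocks, use $vt\leq q+v-s-4$ to force $\lceil k/r\rceil=2t$, establish $d(\mathcal{C})=s+3$ via the distinctness condition $\{\lambda_i,\mu_i,\mu_i/\lambda_i\}\subseteq\mathbb{F}^\times\setminus\{\omega_i^j\}$ and a Vandermonde/cross-block argument, and then invoke Theorem~\ref{Stru-quan} after checking Hermitian self-orthogonality of the row groups. Your explicit reduction of each $\langle\mathbf{n}_m,\mathbf{n}_{m'}\rangle_{\mathrm{H}}$ to geometric sums and the bound $2\leq a+b\leq 2(s+2)\leq v-2\leq q-3$ is exactly the computation the paper abbreviates as ``analogous to Theorem~\ref{Opt-1}''; note only that the condition $vt\leq q+v-s-4$ is not needed for this step, $s\leq\lfloor v/2\rfloor-3$ together with $v\mid(q-1)$ suffices.
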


\begin{proof}
By Lemma \ref{local}, $\mathcal{C}$ is an $(q+v-s-1,s+1)$-LRC. For $i\in [t]$, we define
\begin{align}
A_{2i-1,1}=\begin{pmatrix}
x_0 & x_1 & \cdots& x_{q-2}\\
\vdots & \vdots & \cdots& \vdots\\
x_0^s & x_1^s & \cdots& x_{q-2}^s
\end{pmatrix},
\end{align}
\begin{align}
A_{2i-1,2}=A_{2i,1}=\begin{pmatrix}
y_0 & y_1 & \cdots& y_{q-2}\\
\vdots & \vdots & \cdots& \vdots\\
y_0^s & y_1^s & \cdots& y_{q-2}^s
\end{pmatrix},\ \
A_{2i,2}=\begin{pmatrix}
z_0 & z_1 & \cdots& z_{q-2}\\
\vdots & \vdots & \cdots& \vdots\\
z_0^s & z_1^s & \cdots& z_{q-2}^s
\end{pmatrix},
\end{align}
and set $(B_{2i-1},B_{2i})=B_i$ in Theorem \ref{Stru-apply}, where $B_i$ refers to the matrix given in Eq. \eqref{matrix-B}. To show that $\mathcal{C}$ is an optimal $(r,\delta)$-LRC, we only need to check the conditions of Theorem \ref{Stru-apply}. The only non-trivial condition is to prove $\lceil\frac{k}{r}\rceil=2t$ and $d(\mathcal{C})=s+3$. Since $k=t(2q+v-2)-2st-2=2t(q+v-s-1)-(2+vt)$ and $2+vt\leq q+v-s-2$ by assumption, we have $\lceil\frac{k}{r}\rceil=2t$. Clearly, the first $s+3$ columns of $H$ are linearly dependent. Thus, $d(\mathcal{C})\leq s+3$. Since $\{\lambda_i,\mu_i,\frac{\mu_i}{\lambda_i}\}\subseteq  \mathbb{F}^\times \backslash \{\omega_i^j:\;j\in [q-1]\}$ ensures that $x_{i,j}\neq y_l$, $x_{i,j}\neq z_{i,u}$ and $y_j\neq z_{i,u}$ for all $i\in [t],\; 0\leq j,u\leq q-2,\;0\leq l\leq v-1$, and by using the same technique as in Theorem \ref{Opt-2}, any $s+2$ columns of $H$ are linearly independent. Thus, $d(\mathcal{C})\geq s+3$. Since we have already shown $d(\mathcal{C})\leq s+3$, it follows that $d(\mathcal{C})=s+3$.

Next, we use Theorem \ref{Stru-quan} to prove that the induced quantum code from $\mathcal{C}$ is an optimal quantum $(r,\delta)$-LRC. For $1\leq i\leq 2st+2$, denote by $\mathbf{n}_{i}$ the $i$th row vector of $H$. Since $s\leq \lfloor \frac{v}{2}\rfloor-3$, and by computations analogous to those
in Theorem \ref{Opt-1}, all Hermitian inner products satisfy $\langle\mathbf{n}_{i},\mathbf{n}_{j}\rangle_{\mathrm{H}}=0$. Consequently, the codes whose generator matrices are given in Eq. \eqref{hermi} are Hermitian self-orthogonal. By Theorem \ref{Stru-quan}, $\mathcal{C}$ induces an optimal quantum $(r,\delta)$-LRC with parameters specified in Eq. \eqref{quan-3}. $\hfill\square$
\end{proof}

\begin{remark}\label{remark3}
The optimal quantum $(r,\delta)$-LRCs in Theorem \ref{Opt-3} satisfy $d(\mathcal{C})=\delta+2$, whereas those in Theorem \ref{Opt-2} attain $d(\mathcal{C})=\delta+1$.
This distinct difference in minimum distance ensures that there is no overlap between the two families of quantum $(r,\delta)$-LRCs.
\end{remark}

Let us give an example to illustrate Theorem \ref{Opt-3}.

\begin{example}\label{quan-example3}
Let $\mathbb{F}=\mathbb{F}_{11^2}$ with a primitive element $g$. Set $v=10,\;t=1,\;s=2$, and let $\omega,\zeta\in \mathbb{F}$ be primitive $10$th root of unity.
Choose $\lambda=g$ and $\mu=g^2$ in Theorem \ref{Opt-3}. Define
$$x_{1,i}=x_i=\lambda \omega^i,\;z_i=\mu \omega^i \ \mathrm{and} \ y_i=\zeta^i$$
for $0\leq i\leq 9$. Then, the code $\mathcal{C}$ in Theorem \ref{Opt-2} has a parity-check matrix:
\begin{align}
H=
\left (
\begin{array}{cccc cccc cccc}
x_0 & x_1 & \cdots& x_{9}&y_0 &y_1&\cdots& y_{9}&0 &0  & \cdots & 0\\
x_0^2 & x_1^2 & \cdots& x_{9}^2&y_0^2 &y_1^2&\cdots& y_{9}^2&0 & 0 &\cdots &0 \\
0&0 &\cdots &0 &y_0 &y_1&\cdots& y_{9}&z_0 & z_1 & \cdots& z_{9}\\
0&0 & \cdots &0 &y_0^2 &y_1^2&\cdots& y_{9}^2&z_0^2 & z_1^2 & \cdots& z_{9}^2\\
x_0^3 & x_1^3 & \cdots& x_{9}^3&y_0^3 &y_1^3&\cdots& y_{9}^3&z_0^3 & z_1^3 & \cdots& z_{9}^3\\
x_0^4 & x_1^4 & \cdots& x_{9}^4&y_0^4 &y_1^4&\cdots& y_{9}^4&z_0^4 & z_1^4 & \cdots& z_{9}^4
\end{array}
\right ).
\end{align}

By Theorem \ref{Opt-3}, $\mathcal{C}$ is an optimal $(18,3)$-LRC with parameters $[30,24,5]_{11^2}$, inducing an optimal quantum $(18,3)$-LRC with parameters $[\mspace{-2mu}[30,18,5]\mspace{-2mu}]_{11}$.

Furthermore, let us give a decomposition of Case (I) in Theorem \ref{Stru}. Let $G=(\mathbf{c}_1,\ldots,\mathbf{c}_{30})$ be a generator matrix of $\mathcal{C}$ satisfying $GH^\top=O_{24\times 6}$. Then, $G$ can be given by
\begin{align*}
\begin{pmatrix}
 I_{18}& A_{18\times 2} & B_{18\times 10} \\
O_{6\times 18}& O_{6\times 2} & D_{6\times 10}
\end{pmatrix},
\end{align*}
where $I_{18}$ is the $18\times 18$ identity matrix, $A_{18\times 2}$ is determined by
\begin{align*}
(A_{18\times 2})^{\top}=
\left (
\begin{array}{cccc cccc cccc cccc cc}
g^{27}& g^{18} & g^{77} & g^{59} &g^{52} &g^{46} &g^{33} &5 &g^{67} &g^{80} &8 &4 &4 &3& 8 &2 &3 &2\\
g^{42}& g^{101} & g^{83} & g^{76} &g^{70} &g^{57} &9 &g^{91} &g^{104} &g^{51} &5 &5 &1 &10& 8 & 1 & 8 &7
\end{array}
\right ),
\end{align*}
$B_{18\times 10}=(O_{18\times 6},B'_{18\times 4})$, where $(B'_{18\times 4})^{\top}$ is
\begin{align*}
\left (
\begin{array}{cccc cccc cccc cccc cc}
g^{44}& 10 & g^{8} & 1 &g^{65} &g^{117} &g^{10} &g^{4} &4 &g^{109} &g^{11} &g^{30} &g^{109} &g^{57}& g^{43} &g^{10} &2 &g^{62}\\
2& g^{94} & g^{41} & g^{46} &g^{13} &g^{10} &2 &g^{6} &g^{43} &7 &g^{51} &g^{54} &g^{43} &g^{116}& g^{95} & g^{89} & g^{28} &g^{25}\\
5& g^{88} & g^{101} & g^{3} &g^{39} &g^{33} &0 &g^{35} &g^{117} &g^{23} &g^{28} &0 &g^{16} &g^{112}& g^{4} & g^{112} & g^{28} &g^{64}\\
g^{45}& g^{89} & g^{42} & 4 &g^{93} &g^{30} &g^{26} &g^{76} &g^{54} &g^{16} &g^{32} &g^{6} &g^{39} &g^{89}& g^{57} & g^{98} & g^{55} &g^{35}
\end{array}
\right ),
\end{align*}
and $D_{6\times 10}=(I_{6},D'_{6\times 4})$, where
\begin{align*}
(D'_{6\times 4})^{\top}=\begin{pmatrix}
1& 8 & 4 & 6 &7&8\\
8& 10 & 7 & 8 &7&5\\
5& 4 & 8 & 4 &10&3\\
3& 7 & 5 & 4 &3&1
\end{pmatrix}.
\end{align*}
Let $C=\{(i,\mathbf{c}_i):\;1\leq i\leq 30\}$, $C_1=\{(i,\mathbf{c}_i):\;1\leq i\leq 20\}$, and $U=\{(i,\mathbf{c}_{i}):\;26 \leq i \leq 30\}$. Then,
$$C=C_1\cup \{(21,\mathbf{c}_{21}),\ldots,(25,\mathbf{c}_{25})\}\cup U$$
constitutes a decomposition of Case (I) in Theorem \ref{Stru}.
\end{example}

\section*{\large Declaration of competing interest}

The authors declare that they have no known competing financial interests or personal relationships that could have appeared to influence the work reported in this paper.

\section*{\large Acknowledgment}

K. Zhou is supported by the National Natural Science Foundation of China under Grant No. 12401040.
M. Cao is supported by the National Natural Science Foundation of China under Grant No. 12401684.

\section*{\large Data availability}

No data was used for the research described in this paper.

\end{document}